\title{Linear Layouts of Graphs with Priority Queues}
\author{Emilio {Di Giacomo}}{University of Perugia, Italy \and \url{https://mozart.diei.unipg.it/digiacomo/}}{emilio.digiacomo@unipg.it}{https://orcid.org/0000-0002-9794-1928}{}
\author{Walter {Didimo}}{University of Perugia, Italy \and \url{https://mozart.diei.unipg.it/didimo/}}{walter.didimo@unipg.it}{https://orcid.org/0000-0002-4379-6059}{}
\author{Henry {Förster}}{Technical University of Munich, Heilbronn, Germany \and \url{https://www.cs.cit.tum.de/algo/staff/henry-foerster/}}{henry.foerster@tum.de}{https://orcid.org/0000-0002-1441-4189}{}
\author{Torsten Ueckerdt}{Karlsruhe Institute of Technology, Germany \and \url{https://i11www.iti.kit.edu/members/torsten_ueckerdt/index}}{torsten.ueckerdt@kit.edu}{https://orcid.org/0000-0002-0645-9715}{}
\author{Johannes {Zink}}{Technical University of Munich, Heilbronn, Germany \and \url{https://www.cs.cit.tum.de/algo/staff/johannes-zink/}}{johannes.zink@tum.de}{https://orcid.org/0000-0002-7398-718X}{}
\authorrunning{E. Di Giacomo, W. Didimo, H. Förster, T. Ueckerdt, J. Zink} 
\keywords{linear layouts, recognition and characterization, priority queue layouts} 
\newcommand{\restateref}[1]{\IfAppendix{\hyperref[#1]{$\star$}}{\hyperref[#1*]{$\star$}}}
\newcommand{\invref}[2]{\textcolor{lipicsGray}{\sffamily\bfseries\upshape\mathversion{bold}{#1.\ref{#2}}}}
\DeclareMathOperator{\pqn}{pqn}
\DeclareMathOperator{\bpqn}{spqn}
\begin{document}

\maketitle

\begin{abstract}
	A \emph{linear layout} of a graph consists of a linear ordering of its vertices and a partition of its edges into \emph{pages} such that the edges assigned to the same page obey some constraint.
	The two most prominent and widely studied types of linear layouts are \emph{stack} and \emph{queue layouts}, in which any two edges assigned to the same page are forbidden to \emph{cross} and \emph{nest}, respectively.
	The names of these two layouts derive from the fact that, when parsing the graph according to the linear vertex ordering, the edges in a single page can be stored using a single stack or queue, respectively.
	Recently, the concepts of stack and queue layouts have been extended by using a double-ended queue or a restricted-input queue for storing the edges of a page.
	We extend this line of study to \emph{edge-weighted} graphs by introducing \emph{priority queue layouts}, that is, the edges on each page are stored in a priority queue whose keys are the edge weights.
	First, we show that there are edge-weighted graphs that require a linear number of priority queues.
    Second, we characterize the graphs that admit a priority queue layout with a single queue, regardless of the edge-weight function, and we provide an efficient recognition algorithm.
    Third, we show that the number of priority queues required independently of the edge-weight function is bounded by the pathwidth of the graph,
    but can be arbitrarily large already for graphs of treewidth two.
	Finally, we prove that determining the minimum number of priority queues is \NP-complete if the linear ordering of the vertices is fixed.
\end{abstract}

\section{Introduction}\label{sec:intro}


\emph{Stack layouts} were introduced by Bernhart and Kainen in 1979~\cite{DBLP:journals/jct/BernhartK79} with the motivation to study a ``quite natural'' edge decomposition technique.
A stack layout is a \emph{linear layout}, that is, the vertices are positioned on a line $\ell$ with the edges being embedded on several half-planes, called \emph{pages}, delimited by $\ell$.
In a stack layout, edges embedded on the same page are forbidden to cross,
a rule that contributed to their original designation as \emph{book embeddings}.
An example of stack layout on one page is depicted in \cref{fi:stack-queue-b}. 
The minimum number of pages required to obtain a stack layout of a graph is called the \emph{stack number}.
After their introduction, stack layouts attracted notable interest
in the field of theoretical computer science due to their capacity to model certain aspects in several domains,
including VLSI design~\cite{diogenes,10.1137/0608002}, fault-tolerant multiprocessing \cite{10.1137/0608002,DBLP:journals/tc/Rosenberg83}, RNA folding~\cite{hs-99}, and traffic-light control~\cite{kainen-90}. This led, in particular, to a series of papers investigating upper bounds on the stack number of planar graphs~\cite{DBLP:conf/stoc/BussS84,DBLP:conf/focs/Heath84,Istrail1988a} and culminating at an upper bound of four~\cite{DBLP:journals/jcss/Yannakakis89}. Bekos et al.~\cite{DBLP:journals/jocg/KaufmannBKPRU20} and Yannakakis~\cite{DBLP:journals/jctb/Yannakakis20} independently proved that this upper bound is tight.

The name stack layout derives from the fact that the edges assigned to the same page can be stored using a stack as follows~\cite{diogenes}.
Assume that $\ell$ is a horizontal line and consider the left-to-right ordering $v_1,\ldots,v_n$ of the vertices.
Now, consider a left-to-right sweep of $\ell$ and the set $E_p$ of edges assigned to a page~$p$.
Initially, the stack storing $E_p$ is empty.
At vertex $v_i$, all edges that have $v_i$ as their right endpoint are popped from the stack.
Since all vertices are at the same side of the half-plane delimited by~$\ell$, the edges ending at $v_i$ are necessarily the last ones that have started prior to encountering~$v_i$,
so they are precisely the edges on top of the stack.
Afterwards, edges starting at $v_i$ are pushed onto the stack.
(Note that the order of the pushes is determined by the right endpoints of the corresponding edges.)
\begin{figure}
	\centering
    \begin{subfigure}[t]{0.28\textwidth}
        \centering
        \includegraphics[page=4]{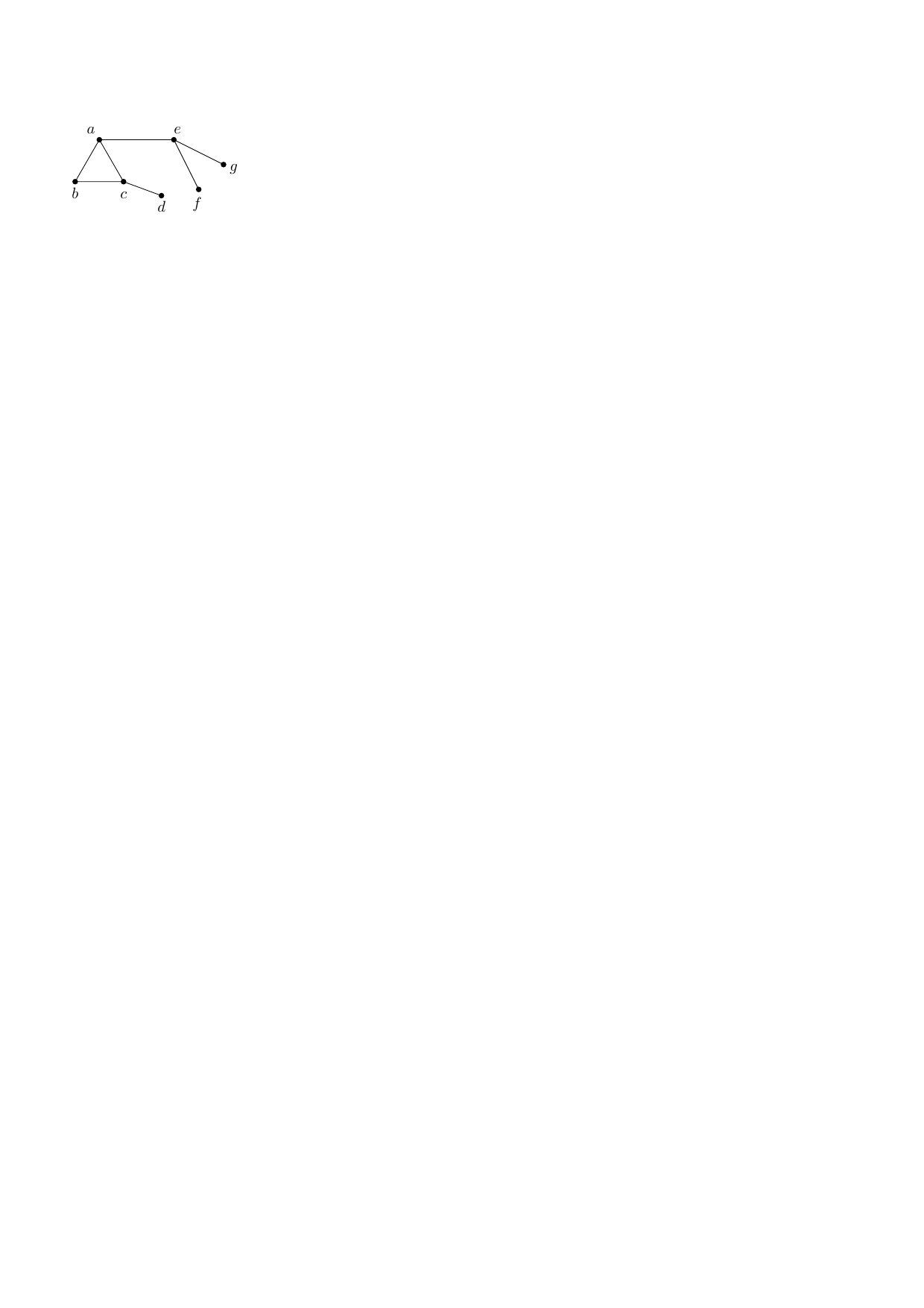}
        \subcaption{Edge-weighted graph $G$.}
        \label{fi:stack-queue-a}
    \end{subfigure}
	\hfill
	\begin{subfigure}[t]{0.23\textwidth}
		\centering
		\includegraphics[page=2]{stack-queue}
		\subcaption{Stack layout of $G$.}
		\label{fi:stack-queue-b}
	\end{subfigure}
	\hfill
	\begin{subfigure}[t]{0.23\textwidth}
		\centering
		\includegraphics[page=3]{stack-queue}
		\subcaption{Queue layout of $G$.}
		\label{fi:stack-queue-c}
	\end{subfigure}
	\hfill
	\begin{subfigure}[t]{0.23\textwidth}
		\centering
		\includegraphics[page=5]{stack-queue}
		\subcaption{PQ-layout of $G$.}
		\label{fi:stack-queue-d}
	\end{subfigure}
    
	\caption{(a)--(c) Examples of stack and queue layouts on one page for the same graph $G$ ignoring edge weights. (d) A PQ-layout on one page of $G$ (edge weights are in blue).}
	\label{fi:stack-queue}
\end{figure}

Based on this latter interpretation of stack layouts, Heath, Leighton and Rosenberg~\cite{DBLP:journals/siamdm/HeathLR92,DBLP:journals/siamcomp/HeathR92} proposed to study \emph{queue layouts} of graphs where the edges occurring on the same page of the linear layout can be represented by a queue.
More precisely, edges ending at $v_i$ are first dequeued from a queue, after which edges beginning at $v_i$ are enqueued during the left-to-right sweep of $\ell$.
Thus, edges on the same page are allowed to cross but forbidden to properly nest. An example of a queue layout on one page is shown in \cref{fi:stack-queue-c}.
Surprisingly, these quite non-planar representations still found many practical applications, including scheduling~\cite{DBLP:journals/jpdc/BhattCLR96}, VLSI~\cite{DBLP:journals/siamcomp/LeightonR86}, and, a decade after their inception, 3D crossing-free graph drawing~\cite{DBLP:journals/comgeo/GiacomoLM05,DBLP:journals/siamcomp/DujmovicMW05}.
Similarly to stack layouts, one is interested in minimizing the number of pages required to obtain a queue layout, which is known as the \emph{queue number}.
%
%
%
However, in contrast to the stack number, the queue number of planar graphs remained elusive for several decades during which only sublinear (but non-constant) upper bounds have been found~\cite{DBLP:journals/algorithmica/AlamBGKP20,DBLP:journals/algorithmica/BannisterDDEW19,DBLP:journals/siamcomp/BattistaFP13,DBLP:journals/jct/Dujmovic15}.
The first significant difference between planar and non-planar graphs was only demonstrated in 2019 when it was shown that bounded degree planar graphs have bounded queue number~\cite{DBLP:journals/siamcomp/BekosFGMMRU19}, whereas non-planar bounded degree graphs were already known to not have this property~\cite{DBLP:journals/dmtcs/Wood08}. Recently, Dujmovi\'{c} et al.~\cite{DBLP:journals/jacm/DujmovicJMMUW20} proved that the queue number of planar graphs is bounded by a constant.
Minor improvements have since been made~\cite{DBLP:journals/algorithmica/BekosGR23,DBLP:journals/corr/abs-2305-16087}.

Another interesting recent direction in investigating linear layouts is to expand beyond stack and queue layouts by using other data structures for partitioning edges.
For instance, \emph{mixed linear layouts} allow the usage of both stack and queue pages~\cite{DBLP:journals/tcs/AngeliniBKM22,DBLP:conf/gd/ColKN19,DBLP:conf/gd/Pupyrev17}, whereas \emph{deque layouts} use \emph{double-ended queues} which allow to insert and pull edges from the head and the tail~\cite{DBLP:conf/gd/AuerBBBG10,DBLP:conf/wg/AuerG11,deque},
and \emph{rique layouts} use \emph{restricted-input queues}
which are double-ended queues where insertions are allowed to occur only at the head~\cite{DBLP:conf/gd/BekosFKKKR22}.
%


\subparagraph*{Our contribution.} 

The wide range of applications of linear layouts and the fact that they have been restricted so far to unweighted graphs, naturally motivate extensions of the notion of linear layout to edge-weighted graphs. For example, in scheduling applications, the edges of a graph can represent processes, each with an associated priority; each vertex 
$v$ of the graph enforces that all processes corresponding to edges incident to 
$v$ begin simultaneously; among the running processes, those with higher priority must be completed first.
Edge weights can also model constraints in other classical applications of linear layouts, such as logistics networks modelled as switchyard networks~\cite{DBLP:books/aw/Knuth68,DBLP:conf/stoc/Pratt73}, where adjacent edges represent containers that must be pushed or popped simultaneously across different storage locations.
The weight of an edge corresponds to the weight of its container, and it may be necessary to ensure that containers stacked on top of others are lighter.       

We introduce a model called \emph{priority queue layout}, or simply \emph{PQ-layout}. This model utilizes  \emph{priority queues} for the storage of edges assigned to the same page, where the edge weights correspond to the priorities (i.e., keys) for the data structure.  
As for stack and queue layouts, in a PQ-layout all vertices of the graph lie on a horizontal line $\ell$, and the edges are assigned to different priority queues according to a left-to-right sweep of $\ell$. When an edge $e$ is dequeued, it must have the minimum priority among the edges stored in the same priority queue as $e$. See \cref{fi:stack-queue-d} for an example, and refer to \cref{sec:basic} for a formal definition of our model. Note that, unlike stack and queue layouts, our model allows edges in the same page to cross or properly nest, provided that they have suitable weights.

 
%

Following a current focus of research on linear layouts, we mainly study the \emph{priority queue number} of edge-weighted graphs, that is, the number of pages required by their PQ-layouts:



	(i)~We provide non-trivial upper and lower bounds for the priority queue number of complete graphs (\cref{sec:complete}).
	(ii)~We characterize the graphs that have priority queue number~1, regardless of the edge-weight function (\cref{sec:pqn1}).
	(iii)~We investigate the priority queue number of graphs with bounded pathwidth and bounded treewidth (\cref{sec:results-kpq}).
	(iv)~We prove that deciding whether an edge-weighted graph has priority queue number~$k$ (for a non-fixed integer $k$) is \NP-complete if the linear ordering of the vertices is fixed (\cref{sec:complexity}).

We remark that, as far as we know, there is only one previous study in the literature about using linear layouts of edge-weighted graphs~\cite{DBLP:journals/jgaa/BattistaFPT21}; however, this study uses edge weights to further restrict stack layouts.
For space restrictions, proofs of statements marked with a (clickable) ($\star$) are omitted or sketched.
See the appendix for a full version of these proofs.



\section{PQ-Layouts}\label{sec:basic}

Let $\langle G,w \rangle$ be an \emph{edge-weighted} graph, that is, $G=(V,E)$ is an undirected graph and $w \colon E \to \mathbb{R}$ is an \emph{edge-weight function}.
A \emph{priority queue layout}, or simply a \emph{PQ-layout}, $\Gamma$ of $\langle G,w \rangle$ consists of a linear (left-to-right) ordering $v_1 \prec \cdots \prec v_n$ of the vertices of $G$ and a partitioning of the edges of $G$ into $k$ sets $\mathcal{P}_1,\ldots,\mathcal{P}_k$,
where each set is called a \emph{page} and is associated with a priority queue.
For each page $\mathcal{P}_i$ with $i \in \{1, \dots, k\}$, the following must hold:
when traversing the linear ordering~$\prec$ (from left to right) and performing, at each encountered vertex $v \in V$, the operations O.\ref{item:o1} and~O.\ref{item:o2} (in this order),
condition C.\ref{item:c1} must always hold.
\begin{enumerate}[{O.}1]
\item \label{item:o1}
Each edge $e \in \mathcal{P}_i$ having $v$ as its right endpoint is pulled from the priority queue.
We do not count $e$ as active any more.
\item \label{item:o2}
Each edge $e \in \mathcal{P}_i$ having $v$ as its left endpoint is inserted into the priority queue of~$\mathcal{P}_i$ with key $w(e)$.
We say that~$e$ becomes \emph{active}.
Two edges being active at the same time (even though they may have become active in different steps) are called \emph{co-active}.
\addtocounter{enumi}{-2}
\renewcommand{\labelenumi}{C.\arabic{enumi}}
\item \label{item:c1}
There is no pair of co-active edges~$e$ and~$e'$
such that $e$ has $v$ as its right endpoint, $e'$ has a distinct right endpoint, and $w(e) > w(e')$.
In other words, among the active edges 
of~$\mathcal{P}_i$,
the edges having $v$ as their right endpoint have the smallest edge weights.
\end{enumerate}
%
%

PQ-layouts can be equivalently defined by the absence of the forbidden configuration F.\ref{item:f1}.
\begin{enumerate}[{F.}1]
	\item \label{item:f1}
	For some $i \in \{1, \dots, k\}$, there are two edges $e = uv$ and $e' = u'v'$
	such that $e, e' \in \mathcal{P}_i$, $w(e) > w(e')$, and $u, u' \prec v \prec v'$.
\end{enumerate}
The order of $u$ and $u'$ is irrelevant for F.\ref{item:f1}.
For completeness, the resulting three arrangements of the involved vertices
are illustrated in \cref{fi:forbidden-a,fi:forbidden-b,fi:forbidden-c}.
Note that the forbidden pseudo-nesting in \cref{fi:forbidden-b} requires $u = u'$.
In contrast, the symmetric case where $v = v'$ is never forbidden.
This is a direct consequence of the asymmetry between the insert and pull operations of priority queues.
Thus, unlike for stack and queue layouts, the reverse vertex ordering does not necessarily provide a PQ-layout.


\begin{figure}
	\centering
	\begin{subfigure}[t]{0.225\textwidth}
		\centering
		\includegraphics[page=1]{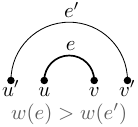}
		\subcaption{Forbidden\\\phantom{\textbf{(a)} }Nesting}
		\label{fi:forbidden-a}
	\end{subfigure}
	\hfil
	\begin{subfigure}[t]{0.225\textwidth}
		\centering
		\includegraphics[page=2]{forbidden}
		\subcaption{Forbidden\\\phantom{\textbf{(b)} }Pseudo-Nesting}
		\label{fi:forbidden-b}
	\end{subfigure}
	\hfil
	\begin{subfigure}[t]{0.225\textwidth}
		\centering
		\includegraphics[page=3]{forbidden}
		\subcaption{Forbidden\\\phantom{\textbf{(c)} }Crossing}
		\label{fi:forbidden-c}
	\end{subfigure}
	\hfil
	\begin{subfigure}[t]{0.225\textwidth}
		\centering
		\includegraphics[page=4]{forbidden}
		\subcaption{$k$-Inversion}
		\label{fi:forbidden-d}
	\end{subfigure}
	\caption{(a)--(c)~Arrangements in a PQ-layout
		that correspond to the Forbidden Configuration~F.\ref{item:f1}. (d)~A $k$-inversion, a forbidden arrangement in linear layouts with priority queue number $k-1$.}
	\label{fi:forbidden-configurations}
\end{figure}

We refer to the linear ordering $\prec$ of vertices also by the ordering along the \emph{spine}.
For a PQ-layout $\Gamma$ of any graph, let $\pqn(\Gamma)$ denote the number of priority queues in $\Gamma$.
Now, the \emph{priority queue number} $\pqn(G,w)$ of $\langle G,w \rangle$ is the minimum integer $k \geq 0$ for which there exists a PQ-layout $\Gamma$ of $\langle G,w \rangle$ with $\pqn(\Gamma) = k$.  

For any given graph $G$, there always exists an edge-weight function $w \colon E \to \mathbb{R}$ such that $\pqn(G,w) = 1$.
It is enough to assign the same weights to all edges but it is also possible to use distinct edge weights.
Namely, define any linear ordering of the vertices $v_1 \prec \cdots \prec v_n$ and visit the vertices in this ordering.
When a vertex $v$ with $v \neq v_1$ is visited, assign consecutive (non-negative) integer weights to all edges for which $v$ is the right endpoint such that the weights monotonically increase. 
Hence, any graph with any given linear ordering of its vertices can have priority queue number~1 if we can choose the edge-weight function.
However, the edge-weight function may be provided as part of the input.
Thus, it is interesting to study bounds on the priority queue number of families of graphs that hold independently of the edge-weight function.
To this end, for a graph $G=(V,E)$ without any prescribed edge-weight function, the \emph{priority queue number} $\pqn(G)$ is the minimum integer $k \geq 0$ such that for every possible edge-weight function $w\colon E \to \mathbb{R}$ we have $\pqn(G,w) \leq k$.



\smallskip

Some of our proofs utilize 
configurations where every pair of edges must be assigned to distinct priority queues.
Let $\langle G,w \rangle$ be an edge-weighted graph, let $\prec$ be a vertex ordering of $G$, and let $k$ be an integer such that $k \geq 1$.
We define a \emph{$k$-inversion} to be a sequence $(e_1,\ldots,e_k)$ of $k$ edges in $G$ with pairwise distinct right endpoints $v_1,\ldots,v_k$, respectively, such that (i)~$v_k \prec \cdots \prec v_1$, where $v_k$ and $v_1$ are called the \emph{left} and \emph{right end}, respectively, (ii)~the left endpoint of each of $e_1,\ldots,e_k$ appears to the left of $v_k$ in $\prec$, and (iii)~$w(e_1) < \cdots < w(e_k)$.
See \cref{fi:forbidden-d} for an illustration.
Any two edges in a $k$-inversion form a forbidden configuration and thus must be assigned to distinct priority queues.
In other words, if, for edge weights $w$ of~$G$, every vertex ordering of $\langle G,w \rangle$ contains a $k$-inversion, then $\pqn(G,w) \geq k$ and hence $\pqn(G) \geq k$.


\section{Priority Queue Number of Complete Graphs}
\label{sec:complete}

In this section, we concentrate on complete graphs and complete bipartite graphs.
We start with a simple upper bound for the complete graph~$K_n$ on $n$ vertices.

\begin{observation}
$\pqn(K_n) \leq n-1$.
\end{observation}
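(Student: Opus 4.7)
The plan is to exhibit, for any edge-weight function $w\colon E(K_n) \to \mathbb{R}$, a concrete PQ-layout of $\langle K_n, w \rangle$ that uses only $n-1$ priority queues. Since the layout will not depend on~$w$, this also yields $\pqn(K_n) \le n-1$.

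First, I would fix an arbitrary linear ordering $v_1 \prec v_2 \prec \cdots \prec v_n$ of the vertices of $K_n$. Next, I would partition the edges into $n-1$ pages based on their right endpoints: for each $i \in \{2,\ldots,n\}$, define page $\mathcal{P}_{i-1}$ to consist of all edges having $v_i$ as their right endpoint (so $|\mathcal{P}_{i-1}| = i-1$). This uses exactly $n-1$ pages and partitions the $\binom{n}{2}$ edges of $K_n$.

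It then remains to check that no forbidden configuration F.\ref{item:f1} arises on any page, and this is the whole content of the argument. Inspecting F.\ref{item:f1}, two edges $e = uv$ and $e' = u'v'$ on the same page can violate the condition only when $u,u' \prec v \prec v'$, which in particular requires $v \ne v'$, i.e., the two edges must have \emph{distinct} right endpoints. By construction, however, every pair of edges on the same page $\mathcal{P}_{i-1}$ shares the right endpoint $v_i$, so no such pair exists, regardless of the values of $w(e)$ and $w(e')$. Consequently, the layout is a valid PQ-layout for every edge-weight function~$w$.

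There is essentially no obstacle here: the argument is a direct application of the asymmetry between the insert and pull operations (reflected in F.\ref{item:f1} requiring distinct right endpoints), which makes pages whose edges share a common right endpoint automatically feasible. The only mild subtlety worth mentioning is that a symmetric attempt — grouping edges by common \emph{left} endpoint — would fail, because pseudo-nestings as in \cref{fi:forbidden-b} can then occur depending on~$w$; this contrast is what motivates using right endpoints for the grouping.
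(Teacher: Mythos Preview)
Your proof is correct and follows exactly the same approach as the paper: fix an arbitrary vertex ordering and assign each edge to a page determined by its right endpoint, then observe that F.\ref{item:f1} cannot occur because it requires distinct right endpoints. Your write-up is slightly more detailed (explicitly invoking the asymmetry and noting that grouping by left endpoints would fail), but the underlying argument is identical.
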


\begin{proof}
Choose an arbitrary ordering of the vertices $v_1 \prec \cdots \prec v_n$ and define the sets (priority queues) $\mathcal{P}_{1}, \dots, \mathcal{P}_{n-1}$.
For each $j\in \{2,\ldots,n\}$, assign all edges whose right endpoint is~$v_j$ to $\mathcal{P}_{j-1}$.
Clearly, no forbidden configuration occurs, independent of the edge-weight function.
\end{proof}

Next, we consider bipartite graphs, that is, the graphs whose vertex sets are the union of two disjoint independent sets $A$ and $B$.
In the literature on linear layouts,  \emph{separated} linear layouts, where vertices of $A$ precede the vertices of $B$ or vice versa, have been considered for bipartite graphs~\cite{DBLP:journals/dmtcs/DujmovicPW04}.
The \emph{separated priority queue number} $\bpqn(G)$ of a bipartite graph~$G$ is the minimum
~$k$ such that, for any edge-weight function, there is a PQ-layout~$\Gamma$ that is a separated linear layout
and $\pqn(\Gamma)=k$.
We can easily determine the separated priority queue number
of the complete bipartite graph~$K_{n,n}$.
At the end of this section, we will use the separated priority queue number
to give a lower bound on the priority queue number of complete and complete bipartite graphs.

\begin{theorem}
\label{thm:bipartitePriorityQueueLayout}
$\bpqn(K_{m,n})=\min\{m, n\}$.
\end{theorem}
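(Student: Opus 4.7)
The plan is to prove the upper and lower bounds separately; without loss of generality assume $m \le n$, so $\min\{m,n\} = m$.

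For the upper bound $\bpqn(K_{m,n}) \le m$, I would place $B$ (the larger side) on the left of the spine and $A$ on the right, with any fixed internal ordering. Then all edges become active while traversing $B$ and are pulled during the traversal of $A$, so F.\ref{item:f1} specializes to the following condition: on each page, the edges listed by their right endpoint in the $A$-ordering carry non-decreasing weights (ties at the same $A$-endpoint are unconstrained, as those edges are pulled together). Associating to every edge $e$ the pair $(r(e), w(e))$, where $r(e)$ is the $A$-position of the right endpoint of $e$, a page must then be a chain under the partial order $(r_1, w_1) \preceq (r_2, w_2) \iff r_1 \le r_2 \text{ and } w_1 \le w_2$. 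A maximum antichain consists of edges with strictly distinct right endpoints whose weights strictly decrease along the $A$-ordering, so it has size at most $|A| = m$. By Dilworth's theorem, a partition into at most $m$ chains, hence into $m$ pages, is possible.

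For the lower bound $\bpqn(K_{m,n}) \ge m$, I would use the edge-weight function $w(a_i b_j) := ((i-1) + (j-1)) \bmod m + 1$, whose $m \times n$ weight matrix contains every value in $\{1, \dots, m\}$ in every row and in every column (the row property uses $n \ge m$). The claim is that every separated layout of $\langle K_{m,n}, w \rangle$ admits an $m$-inversion, which by the observation on $k$-inversions in \cref{sec:basic} forces $\pqn \ge m$. If the layout places $A$ before $B$, for any $B$-ordering I would pick any $m$ distinct $B$-positions $p_1 < \dots < p_m$ and, at each corresponding column, choose an $a$-index realising the weights $m, m-1, \dots, 1$ in this left-to-right order; such a selection exists because every column contains every value, and it produces an $m$-inversion whose right endpoints lie in $B$. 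If the layout places $B$ before $A$, the symmetric argument uses the entire $A$-ordering and, at each row, selects a $b$-index producing the same strictly decreasing pattern $m, m-1, \dots, 1$, relying this time on every row containing every value.

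The main obstacle is justifying the Dilworth application for the upper bound cleanly, since the underlying order is technically a preorder (edges sharing a right endpoint are equivalent); once one quotients by this equivalence, or alternatively invokes perfection of comparability graphs, the chain decomposition translates directly into a valid page partition. The lower bound is then mechanical: the real design step is choosing a Latin-square-like weight function that is simultaneously ``bad'' for every internal ordering, regardless of which side appears first on the spine.
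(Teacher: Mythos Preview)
Your argument is correct on both halves, but it is worth contrasting with the paper's approach.

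For the upper bound, you invoke Dilworth's theorem to decompose the edge set into at most $m$ chains. This works, but the paper does something strictly simpler: it puts the larger side on the left, and then assigns to page~$j$ \emph{all} edges whose right endpoint is the $j$-th vertex of the smaller side. Since every edge on a given page has the same right endpoint, no pair can ever form the forbidden configuration~F.\ref{item:f1}, regardless of the weights. No Dilworth, no preorder/poset technicality---just $\min\{m,n\}$ pages by construction. Your chain decomposition of course contains this (each such page is a chain in your preorder), but the detour through antichains is unnecessary here.

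For the lower bound, both proofs find an $m$-inversion from a weight function in which every vertex on the right-hand side sees all weights $1,\dots,m$. The paper simply asserts that such a function exists and handles the case ``$A$ before $B$'' with a ``without loss of generality''. Your explicit choice $w(a_ib_j)=((i-1)+(j-1)) \bmod m + 1$ is a concrete Latin rectangle that guarantees all $m$ weights appear at every vertex on \emph{both} sides, so the inversion argument goes through for either orientation of the separated layout without appealing to symmetry. That is a cleaner justification of the ``w.l.o.g.''\ step, at the cost of writing down the function.

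In short: your lower bound is essentially the paper's with the symmetry made explicit; your upper bound is correct but uses heavier machinery than the one-line argument the paper gives.
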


\begin{proof}
First observe that $\bpqn(K_{m,n}) \leq \min\{m, n\}$:
place the vertices of the larger set, say $A$, before
the vertices of the smaller set, say $B$.
Then, independent of the edge weight function, for each vertex $b \in B$,
all edges ending at $b$ can be assigned to a separate page.

To show $\bpqn(K_{m,n})\geq \min\{m, n\}$,
assume w.l.o.g.\ that the vertices in $A$ precede the vertices in~$B$.
Consider an edge weight function
where, at every vertex $b \in B$, each edge weight in $\{1, 2, \dots, \min\{m, n\}\}$ occurs at least once among $b$'s incident edges.
In any bipartite PQ-layout~$\Gamma$ of~$K_{m,n}$,
denote the vertices of $B$ by $b_1, b_2, \dots$ in the order they appear along the spine.
In~$\Gamma$, $b_1$ is incident to an edge with weight~$\min\{m, n\}$,
$b_2$ is incident to an edge with weight~$\min\{m, n\}-1$, etc.,
$b_{\min\{m, n\}}$ is incident to an edge of weight~1.
This is a $\min\{m, n\}$-inversion requiring at least~$\min\{m, n\}$ pages.
\end{proof}

\cref{thm:bipartitePriorityQueueLayout} also provides an upper bound for the (non-separated) priority queue number
of~$K_{n,n}$.
Next, we give a corresponding linear lower bound showing that $\pqn(K_{n,n}) \in \Theta(n)$.

\begin{restatable}[\restateref{thm:k_nn}]{theorem}{Knn}
\label{thm:k_nn}
$\pqn(K_{n,n}) \geq \left\lceil \frac{3-\sqrt{5}}{4}n \right\rceil \approx 0.191 n$.
\end{restatable}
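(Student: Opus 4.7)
The plan is to exhibit a concrete edge-weight function $w$ on $K_{n,n}$ and show that, under $w$, every linear vertex ordering is forced to contain a $k$-inversion of size at least $\lceil (3-\sqrt{5})\,n/4 \rceil$. The bound on $\pqn(K_{n,n})$ then follows from the $k$-inversion principle recalled in \cref{sec:basic}.

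I would label the two sides as $A=\{a_1,\dots,a_n\}$ and $B=\{b_1,\dots,b_n\}$ and define $w$ by a Latin-square-type pattern, e.g.\ $w(a_i,b_j)=((i+j-2)\bmod n)+1$. Under this choice, at every vertex of $K_{n,n}$ the $n$ incident edges carry each weight in $\{1,\dots,n\}$ exactly once, which is precisely the hypothesis driving the separated-case argument of \cref{thm:bipartitePriorityQueueLayout}. The symmetry between $A$ and $B$ in this definition will be essential because, unlike for separated layouts, we cannot assume that all of $A$ precedes all of $B$.

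Given any vertex ordering $\prec$, I would split the spine at some position $p$ into a prefix of length $p$ and a suffix of length $2n-p$, and denote by $A_L, B_L$ the A- and B-vertices of the prefix and by $A_R, B_R$ those of the suffix. The edges between $A_L$ and $B_R$ form a separated bipartite sub-layout along the spine (all of $A_L$ precedes all of $B_R$), and analogously for $B_L$ and $A_R$. On each such separated subgraph I would mimic the inversion construction from \cref{thm:bipartitePriorityQueueLayout}: walking through the right-hand vertices from left to right, greedily pick at each one an edge whose weight is strictly smaller than the previously chosen one. The Latin-square property supplies $|A_L|$ (resp.\ $|B_L|$) pairwise distinct weights at every right-hand vertex on its edges to the left side, and a combinatorial decreasing-subsequence argument on the shifted weight sets extracts a $k$-inversion whose length is a definite fraction of $\min(|A_L|,|B_R|)$ (resp.\ $\min(|B_L|,|A_R|)$).

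The final constant emerges from a min-max optimization over the split position. Writing $|A_L|=xn$ and $|B_L|=yn$, the two inversion-source sizes scale linearly with $(x,y)$. We are free to pick $p$ (equivalently, a point of the lattice path traced by $(x(p),y(p))$ as $p$ runs over the spine) to maximize our side of the trade-off, whereas the adversary chooses the ordering to minimize our best response. Balancing the two inversion sources at the saddle and solving the resulting one-variable equation, which simplifies to $x=(1-x)^2$ with its relevant root $x=(3-\sqrt{5})/2$, yields the claimed bound $\lceil(3-\sqrt{5})n/4\rceil$.

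The main obstacle is the third step: even though each right-hand vertex carries the full palette of weights $\{1,\dots,n\}$, the adversary controls which A-indices appear in $A_L$ and can spin the cyclic shifts $W_j=\{((i+j-2)\bmod n)+1 : a_i \in A_L\}$ across the $B_R$-vertices in an unfavorable way. Certifying that a decreasing subsequence of the required length nevertheless always exists, and translating the Latin-square's local symmetry into the global lower bound, is the delicate combinatorial heart of the argument; it is precisely this step that forces the constant $(3-\sqrt{5})/4$ rather than the naive $1/2$ obtained if one could always achieve the full $\min(|A_L|,|B_R|)$ and $\min(|B_L|,|A_R|)$ inversions.
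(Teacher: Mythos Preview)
Your setup is right and matches the paper: the Latin-square weight function (equivalently, a partition of the edges into $n$ perfect matchings with weight $i$ on the $i$-th matching) is exactly what the paper uses, and passing to a separated sublayout is the correct first move.

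However, your plan misidentifies where the constant $(3-\sqrt{5})/4$ comes from. The paper does \emph{not} optimize over a split position $p$ or balance two inversion sources. It fixes the split once and for all: among the first $t$ spine vertices there are exactly $n/2$ from one side, say $A$, and then among the remaining vertices there are at least $n/2$ from $B$ (or the roles swap); either way one extracts a single separated $K_{n/2,n/2}$ sublayout $\Gamma'$. The constant then comes from a combinatorial count inside $\Gamma'$ alone. Encode the $n/2$ right-side vertices against the $n$ possible weights as an $n\times(n/2)$ binary matrix with exactly $n/2$ ones per column, hence $n^2/4$ ones total; a $k$-inversion is a strictly decreasing staircase of ones. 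Iteratively peeling off the Pareto-maximal ones (those with empty upper-left region) removes at most $\tfrac{3n}{2}-2i+1$ cells at step $i$, since at most one survives per anti-diagonal of the remaining $(n-i+1)\times(n/2-i+1)$ submatrix. Thus $k$ must satisfy $\sum_{i=1}^{k}\bigl(\tfrac{3n}{2}-2i+1\bigr)\geq n^2/4$, i.e.\ $k^2-\tfrac{3n}{2}k+\tfrac{n^2}{4}\leq 0$, which yields $k\geq\lceil(3-\sqrt{5})\,n/4\rceil$.

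Your equation $x=(1-x)^2$ is unsupported: you never state what inversion length you can actually guarantee in a separated $|A_L|\times|B_R|$ layout when only $|A_L|$ of the $n$ weights are present at each right-hand vertex, so there is nothing concrete to feed into a min-max. You correctly flag this as ``the delicate combinatorial heart,'' but that heart \emph{is} the proof; without it the min-max framing is scaffolding with no load on it. In particular, the constant does not arise from the choice of split at all---it comes entirely from the anti-diagonal count at the fixed $n/2$ split.
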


\begin{proof}[Proof Sketch]
For $K_{n,n}$, we define an edge-weight function~$w$ that has,
at each vertex, for every $i \in \{1,\ldots,n\}$, exactly one incident edge of weight~$i$.\footnote{%
In contrast to the proof of \cref{thm:bipartitePriorityQueueLayout},
we cannot arbitrarily assign weights to edges incident to each vertex $b \in B$
because this may give twice the same edge weight
at a vertex of partition~$A$.}
To this end, we partition the edge set into $n$ perfect matchings $M_1,\ldots,M_n$, and,
for each edge $e \in M_i$, we set $w(e)=i$.

For any PQ-layout~$\Gamma$ of $\langle K_{n,n}, w \rangle$, we can find a sublayout~$\Gamma'$
that is a \textit{bipartite} PQ-layout of~$\langle K_{\frac n2, \frac n2}, w \rangle$~--
either take the first $\frac n2$ vertices of $A$ and the last
$\frac n2$ vertices of $B$ or vice versa.\footnote{%
    For simplicity, we assume that $n$ is even.}

Note that we cannot directly apply \cref{thm:bipartitePriorityQueueLayout}
to $\Gamma'$ because we have already fixed an edge-weight function.
Instead, we analyze the possible distribution of the edge weights
incident to vertices in~$\Gamma'$ by a representation as a black-and-white grid;\footnote{%
    A similar, although not identical, representation is used by Alam et al.~\cite{MDALAM2022131} to prove the mixed page number of $K_{n,n}$.}
see the appendix for an illustration and a more extensive description.
Our grid has $\frac n2$ columns that represent the vertices
of the latter partition (in order from left to right as they appear in~$\Gamma'$)
and it has $n$ rows that represent the edge weights $\{1, \dots, n\}$ (in order from bottom to top).
A grid cell in column~$i$ and row~$j$ is colored black if
the $i$-th vertex is incident to an edge of weight~$j$,
and it is colored white otherwise.
Observe that, in this grid, a \textit{strictly monotonically decreasing path} of black cells
having length~$k$ is equivalent to a $k$-inversion in~$\Gamma'$.
We can show that there always exists such a path of length
$\left\lceil \frac{3 - \sqrt{5}}{4}n \right\rceil$:
the candidates for being the first black cell of the path are those who
do not have another black cell in their top left region of the grid.
We find these cells, color them white and iteratively repeat this process to find
the candidates for being the second, third, etc.\ black cell of the path.
We can prove that we need at least
$\left\lceil \frac{3 - \sqrt{5}}{4}n \right\rceil$ iterations until the grid is white.
\end{proof}

Since $K_n$ contains $K_{\lfloor\frac{n}{2}\rfloor,\lfloor\frac{n}{2}\rfloor}$ as a subgraph, \cref{thm:k_nn} immediately implies a linear lower bound on~$\pqn(K_{n})$:

\begin{corollary}
	\label{cor:k_n}
	$\pqn(K_{n}) \geq \left\lfloor \frac{3-\sqrt{5}}{8}n \right\rfloor \approx 0.0955 n$.
\end{corollary}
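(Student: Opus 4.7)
The plan is to derive the corollary directly from \cref{thm:k_nn} via the fact that $\pqn$ is monotone under taking subgraphs. First I would establish the elementary monotonicity principle: if $H$ is a subgraph of $G$, then $\pqn(H) \leq \pqn(G)$. To see this, take any edge-weight function $w_H$ of $H$ realizing $\pqn(H,w_H)$; extend $w_H$ to an arbitrary edge-weight function $w$ of $G$ (the weights on $E(G)\setminus E(H)$ may be chosen freely, e.g.\ all equal). Any PQ-layout $\Gamma$ of $\langle G, w\rangle$ induces, by restricting the vertex ordering to $V(H)$ and keeping only pages intersected by $E(H)$, a valid PQ-layout of $\langle H, w_H\rangle$: the absence of Forbidden Configuration~F.\ref{item:f1} in $\Gamma$ is inherited because the forbidden pattern is defined entirely in terms of a pair of edges and the linear ordering of their endpoints. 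Taking the maximum over $w_H$ yields $\pqn(H)\leq \pqn(G)$.

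Next, I would apply this with $H = K_{\lfloor n/2\rfloor,\lfloor n/2\rfloor}$ and $G = K_n$, which gives
\[
\pqn(K_n) \;\geq\; \pqn\bigl(K_{\lfloor n/2\rfloor,\lfloor n/2\rfloor}\bigr) \;\geq\; \left\lceil \tfrac{3-\sqrt{5}}{4}\,\lfloor n/2\rfloor \right\rceil
\]
by invoking \cref{thm:k_nn} on the complete bipartite subgraph.

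It then remains to check the purely arithmetic inequality $\lceil \frac{3-\sqrt{5}}{4}\lfloor n/2\rfloor \rceil \geq \lfloor \frac{3-\sqrt{5}}{8} n \rfloor$. For even $n=2m$ both sides reduce to the ceiling and the floor of $\frac{3-\sqrt 5}{4}m$, so the inequality is immediate. For odd $n = 2m+1$, write $y = \frac{3-\sqrt{5}}{4}m$ and $c = \frac{3-\sqrt{5}}{8}\approx 0.0955 < 1$; the right-hand side equals $\lfloor y + c\rfloor$. A short case distinction on the fractional part of $y$ shows $\lceil y\rceil \geq \lfloor y + c\rfloor$ in all cases (equality can occur, but the inequality never fails since $c<1$).

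I do not foresee any real obstacle: the monotonicity lemma is the only conceptual point, and it is a routine consequence of the local, edge-pair formulation of F.\ref{item:f1}. The only mild care needed is to phrase the monotonicity correctly with respect to the universal quantifier over edge-weight functions in the definition of $\pqn(\cdot)$; the trailing floor/ceiling bookkeeping is entirely mechanical.
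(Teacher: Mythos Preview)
Your proposal is correct and follows exactly the paper's approach: the paper simply states that $K_n$ contains $K_{\lfloor n/2\rfloor,\lfloor n/2\rfloor}$ as a subgraph and invokes \cref{thm:k_nn}. You supply more detail than the paper does (the explicit monotonicity argument and the floor/ceiling check), but the underlying idea is identical.
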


\section{Characterizing Graphs with Priority Queue Number~1}\label{sec:pqn1}

We study which graphs have priority queue number~1.
We will provide a characterization of the graphs with priority queue number~1 in terms of forbidden minors (\cref{sec:pqn1:summary}).
We first describe families of graphs with priority queue number~1 (\cref{sec:pqn1:positive}).
Then, we give a set of eight forbidden minors (\cref{sec:pqn1:negative}).
The characterization of \cref{sec:pqn1:summary} follows by proving that every graph that does not contain any of the forbidden minors of \cref{sec:pqn1:negative} falls in one of the families of \cref{sec:pqn1:positive}.
This characterization leads to a linear-time recognition algorithm.

\subsection{Graphs with Priority Queue Number~1}\label{sec:pqn1:positive}

We begin by showing how to construct PQ-layouts of trees:

\begin{lemma}\label{lem:tree}
Let $\langle T=(V,E), w \rangle$ be an $n$-vertex edge-weighted tree with root $r \in V$. Then, a PQ-layout $\Gamma$ of $\langle T, w \rangle$ with $\pqn(\Gamma)=1$, where for each vertex $v \neq r$ its parent $p(v)$ precedes~$v$ in the linear ordering $\prec$ along the spine, can be constructed in $\mathcal{O}(n \log n)$ time.
\end{lemma}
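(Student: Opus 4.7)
The plan is to construct $\Gamma$ by a Dijkstra-style greedy procedure on the tree. I would maintain a set $S$ of \emph{available} vertices, namely the vertices whose parent in $T$ has already been placed on the spine but which themselves have not been placed yet. Initially, I append $r$ to the linear order $\prec$ and let $S$ consist of the children of $r$, each keyed by its parent-edge weight. Then, while $S$ is non-empty, I extract the vertex $v$ with minimum key from $S$, append $v$ to $\prec$, and insert every child of $v$ in $T$ into $S$ keyed by its own parent-edge weight. All edges are assigned to a single priority-queue page.

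Correctness reduces to two observations. First, the parent-before-child property holds by construction, since a vertex enters $S$ only after its parent has already been appended to $\prec$. Second, F.\ref{item:f1} cannot arise: suppose towards a contradiction that edges $e_1 = p(v_1)v_1$ and $e_2 = p(v_2)v_2$ satisfy $p(v_1), p(v_2) \prec v_1 \prec v_2$ together with $w(e_1) > w(e_2)$. At the iteration in which the algorithm extracted $v_1$, the parent $p(v_2)$ had already been placed while $v_2$ had not, so $v_2 \in S$ at that moment. The greedy rule would therefore have extracted $v_2$ rather than $v_1$, contradicting $w(e_1) > w(e_2)$.

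For the running time, every vertex of $T$ other than $r$ is inserted into $S$ exactly once (when its parent is placed) and extracted exactly once, so $S$ undergoes $\Theta(n)$ operations. Implementing $S$ as a binary min-heap, each operation costs $\mathcal{O}(\log n)$, yielding the claimed $\mathcal{O}(n \log n)$ bound.

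The step I expect to need the most care is the translation between the algorithmic state ``$v_2 \in S$ when $v_1$ is extracted'' and the layout state ``edge $p(v_2)v_2$ is live in the priority queue when the spine sweep reaches $v_1$.'' Once this correspondence is made precise, the greedy minimum-key choice coincides exactly with the condition forbidden by F.\ref{item:f1}, and the rest is routine invariant maintenance; there is no genuine structural obstacle, since the argument is essentially a weighted variant of BFS on $T$.
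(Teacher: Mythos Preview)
Your proposal is correct and is essentially the paper's approach: both greedily append the available vertex with minimum parent-edge weight (the paper keeps the pending vertices as a sorted suffix of the current layout and advances a pointer, you use a min-heap and extract-min) and argue that a forbidden configuration~F.\ref{item:f1} would contradict this greedy choice. Your Dijkstra-style framing and direct contradiction are a clean reformulation of the paper's invariant-based presentation, yielding the same ordering and the same $\mathcal{O}(n\log n)$ bound.
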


\begin{proof}
We describe an algorithm that constructs a PQ-layout $\Gamma$ of $T$; see \cref{fi:pq-layout-tree} for an example.
For each vertex $v \in V \setminus \{r\}$, we define its weight $w(v)$ as $w(v)=w(vp(v))$.
As an initial layout, we first place $r$ as the leftmost vertex and the children of $r$ in increasing order of their weights to the right of $r$.
During our algorithm, we keep track of the \emph{last expanded vertex} $v^\star$, which is the rightmost vertex whose children (if any) have already been placed.
Note that initially $v^\star=r$.
We now maintain the following invariants:
\begin{enumerate}[{T.}1]
\item\label{inv:tree:1} For each vertex $v$ with $v \preceq v^\star$, all children have already been placed. 
\item\label{inv:tree:2} For each vertex $v$ with $v^\star \prec v$, no child has been placed yet.
In addition, $p(v) \preceq v^\star$.
\item\label{inv:tree:3} Let $S$ denote the set of already placed vertices succeeding $v^\star$ in $\prec$.
Then, the vertices in~$S$ occur in $\prec$ in increasing order of their weights.
\item\label{inv:tree:4} The already constructed PQ-layout has priority queue number~1.
\end{enumerate}
It is easy to see that the invariants hold for the initial layout.
We now iteratively consider the vertex $v'$ immediately succeeding $v^\star$ in $\prec$ and perform the following operations.
For each child $c$ of $v'$, we place $c$ to the right of $v'$ such that $S \cup \{c\} \setminus \{v'\}$ is ordered by weight.
This can be easily done since by \invref{T}{inv:tree:3}, $S$ (and thus also $S \setminus \{v'\}$) is already ordered by weight, i.e., there is a unique position for each $c$ and \invref{T}{inv:tree:3} holds again after $v'$ has become the new~$v^\star$.
Next, consider edge $v'c$.
By \invref{T}{inv:tree:4}, the only way that the resulting linear layout has not priority queue number~1 is if $v'c$ is included in a forbidden configuration.
Consider an edge $uv$ with $u \prec v$ that is co-active with $v'c$.
Clearly, $v \in S$ and by \invref{T}{inv:tree:2}, we have that $u \preceq v'$ and $u=p(v)$.
By \invref{T}{inv:tree:3}, it follows that $c$ and $v$ are sorted by their weights, which are equal to $w(v'c)$ and $w(uv)$, thus, $uv$ and $v'c$ form no forbidden configuration, i.e., \invref{T}{inv:tree:4} is maintained.

\begin{figure}
	\centering
	\begin{subfigure}[t]{0.45\textwidth}
		\centering
		\includegraphics[page=1]{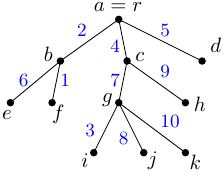}
		\subcaption{A weighted rooted tree $T$}
		\label{fi:pq-layout-tree-a}
	\end{subfigure}
	\hfil
	\begin{subfigure}[t]{0.45\textwidth}
		\centering
		\includegraphics[page=2]{pq-layout-tree}
		\subcaption{A PQ-layout $\Gamma$ of $T$.}
		\label{fi:pq-layout-tree-b}
	\end{subfigure}
	\caption{A PQ-layout of a weighted rooted tree computed with the algorithm in \cref{lem:tree}.%
        }
	\label{fi:pq-layout-tree}
\end{figure}

After placing all children of $v'$, we set $v^\star = v'$.
Clearly, \invref{T}{inv:tree:1} is maintained as we placed all children of $v'$ whereas by induction the children of the vertices preceding $v'$ have already been placed.
Moreover, \invref{T}{inv:tree:2} is maintained as we only placed all children of $v'$ to the right of $v'$ whereas by induction the children of other vertices of $S$ are not placed yet.

For every vertex, we need to find its position depending on the weight.
This can be done in $\mathcal{O}(\log n)$ time if
we maintain a suitable search data structure.
The rest of the insertion can be done in constant time.
Overall, this results in a running time in $\mathcal{O}(n \log n)$.
\end{proof}

Next, we construct a specific layout of caterpillars, which are a special subclass of trees.
Namely, a \emph{caterpillar} $C$ consists of an \emph{underlying path} $P(C)$ and additional leaves, each having exactly one neighbor on $P(C)$.
Be aware that, for the same caterpillar $C$, there are several choices for $P(C)$, and in particular $P(C)$ might end in a degree-$1$ vertex of $C$.

\begin{lemma}\label{lem:caterpillar}
Let $\langle C, w \rangle$ be an $n$-vertex edge-weighted caterpillar with underlying path $P(C)$ and let $r$ be one of the two degree-$1$ vertices in $P(C)$.
Then, a PQ-layout $\Gamma$ of $\langle C, w \rangle$ with $\pqn(\Gamma)=1$ where $r$ is the rightmost vertex can be constructed in $\mathcal{O}(n)$ time.
\end{lemma}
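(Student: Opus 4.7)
The plan is to give a single explicit one-page layout that works regardless of the edge weights, so no weight-based sorting is required. Writing $P(C) = p_1 p_2 \cdots p_m = r$ and letting $L_i$ denote the (possibly empty) set of leaves of $C$ attached to $p_i$, I would order the vertices along the spine as
\[
L_1,\ p_1,\ L_2,\ p_2,\ \ldots,\ L_m,\ p_m = r,
\]
where the elements of each $L_i$ appear in arbitrary order. This lists the path vertices left to right in path order, so $r = p_m$ is rightmost, and places every leaf immediately to the left of its unique path-neighbour. All edges are assigned to the single available page. The sequence is produced in $O(n)$ time by one traversal of $P(C)$ combined with the adjacency list of each $p_i$.

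To verify the absence of the forbidden configuration F.\ref{item:f1}, I would argue by contradiction: suppose two edges $e = uv$, $e' = u'v'$ satisfy $u, u' \prec v \prec v'$ and $w(e) > w(e')$. Every edge of $C$ is either a path edge $p_{j-1} p_j$ or a leaf edge joining an element of $L_j$ to $p_j$, so its right endpoint is a path vertex $p_j$; moreover, when $j \geq 2$, its left endpoint is either $p_{j-1}$ itself or a leaf of $L_j$ placed strictly between $p_{j-1}$ and $p_j$, whence the left endpoint sits at a position $\succeq p_{j-1}$ along the spine. Writing $v = p_j$ and $v' = p_k$, the hypothesis $v \prec v'$ gives $j < k$, and in particular $k \geq 2$. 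Then $u' \succeq p_{k-1}$ combined with $u' \prec v = p_j$ yields $p_{k-1} \prec p_j$, i.e.\ $k - 1 < j$, which contradicts $j < k$. Hence no such pair of edges exists, and the layout realises $\pqn(\Gamma) = 1$.

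The main conceptual step is to recognise that, in contrast to the construction for trees in \cref{lem:tree} (which reads parent-before-child and consequently forces the root to the leftmost position), the caterpillar layout must reverse this convention: leaves precede their parents on the spine. Once this choice is made, the argument becomes purely positional~--~every edge is confined to the block preceding a single path vertex, so two edges with distinct right endpoints $p_j \prec p_k$ cannot simultaneously satisfy $u' \prec v$. No reference to the edge weights is ever needed, and this is precisely what yields the improvement from $O(n \log n)$ (required in \cref{lem:tree} for sorting children by weight) down to $O(n)$.
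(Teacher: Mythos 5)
Your layout is exactly the one in the paper's proof (path vertices in path order ending at $r$, each leaf block placed immediately before its path-neighbour), and your positional argument correctly fills in the verification that the paper leaves as "easy to see." Correct, same approach.
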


\begin{proof}
We label $P(C)$ as $p_1p_2{\ldots}p_k=r$ and arrange the vertices of $P(C)$ in the order $p_1 \prec p_2 \prec \ldots \prec r$.
Then for $i \in \{1,\ldots,k\}$, we place all leaves at $p_i$ between $p_{i-1}$ and $p_i$ in an arbitrary order (note that for $p_1$, we place its leaves before $p_1$).
It is easy to see that the resulting layout contains 
no forbidden configuration and the statement follows.
\end{proof}

We now shift our attention to graphs containing cycles.
First, we consider cycles alone.

\begin{lemma}\label{lem:1pq-cycles}
Let $\langle C = (V,E), w \rangle$ be an $n$-vertex edge-weighted cycle, and let $v$ be any given vertex of~$V$.
Then, a PQ-layout $\Gamma$ of $\langle C, w \rangle$ with $\pqn(\Gamma)=1$ where $v$ is the leftmost vertex can be constructed in $\mathcal{O}(n)$ time.
\end{lemma}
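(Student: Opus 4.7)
The plan is to reduce the cycle case to the tree case handled by Lemma~\ref{lem:tree}. Let $e^* = xy$ be a maximum-weight edge of~$C$ (which can be identified in $\mathcal{O}(n)$ time). The subgraph $P := C - e^*$ is a Hamiltonian path of~$C$ from~$x$ to~$y$ passing through~$v$. Viewing~$P$ as a rooted tree~$T_v$ with root~$v$, the tree consists of (at most) two pendant paths attached to~$v$---one going from~$v$ to~$x$ and one from~$v$ to~$y$---and its only leaves are~$x$ and~$y$.

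Next I would apply Lemma~\ref{lem:tree} to~$T_v$ to obtain a 1-page PQ-layout~$\Gamma'$ of~$P$ with~$v$ leftmost and each parent preceding its children, and then extend~$\Gamma'$ to a layout~$\Gamma$ of~$C$ by assigning~$e^*$ to the same (single) page. The key observation is that the rightmost vertex of~$\Gamma'$ must be a leaf of~$T_v$, and hence equals~$x$ or~$y$: since Lemma~\ref{lem:tree} places every child strictly to the right of its parent, an internal vertex always has a descendant to its right and cannot be the rightmost vertex of the layout. Without loss of generality, assume that~$y$ is the rightmost vertex of~$\Gamma'$. To verify that~$\Gamma$ contains no forbidden configuration~F.\ref{item:f1}, observe that~$\Gamma'$ already avoids F.\ref{item:f1} on the edges of~$P$, so only pairs involving~$e^*$ need to be inspected. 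If~$e^*$ is the earlier-ending edge in a violating pair, then the other edge would have to end strictly to the right of~$y$, contradicting the rightmost-ness of~$y$. If~$e^*$ is the later-ending edge, then the other edge~$e$ would satisfy $w(e) > w(e^*)$, contradicting the maximality of~$e^*$. Thus~$\Gamma$ is a valid 1-page PQ-layout of~$\langle C, w\rangle$ with~$v$ as the leftmost vertex.

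For the $\mathcal{O}(n)$ runtime, identifying~$e^*$ takes $\mathcal{O}(n)$ time. Although Lemma~\ref{lem:tree} is stated with an $\mathcal{O}(n\log n)$ bound because of searches in the sorted frontier of placed-but-unexpanded vertices, in our specific tree~$T_v$ this frontier always contains at most two vertices---one from each pendant path from~$v$---so every insertion can be performed in constant time, yielding an $\mathcal{O}(n)$ overall construction. The main obstacle I anticipate is rigorously establishing the rightmost-is-leaf property, but this falls out directly from the placement rule ``we place~$c$ to the right of~$v'$'' used throughout Lemma~\ref{lem:tree}'s algorithm.
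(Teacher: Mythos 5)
Your proof is correct, but it takes a different route from the paper's own argument for this lemma. The paper proves \cref{lem:1pq-cycles} directly by a two-candidate greedy merge: it walks along both arcs of the cycle simultaneously, maintaining one candidate vertex per arc, and always appends the candidate whose edge to its anchor is lighter; a short contradiction argument then rules out forbidden configurations, and the runtime is immediately linear. You instead delete a maximum-weight edge $e^\ast = xy$, observe that the remaining Hamiltonian path rooted at $v$ is a tree in which every parent precedes its child under \cref{lem:tree} (so the rightmost vertex must be a leaf, i.e., $x$ or $y$), and reinsert $e^\ast$, checking that it can be neither the heavier-and-earlier-ending edge (by maximality of $w(e^\ast)$) nor the earlier-ending edge of a violating pair (since its right endpoint is rightmost). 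Both the correctness argument and your refinement of the runtime — the frontier $S$ in \cref{lem:tree}'s algorithm never exceeds two vertices for this double-path tree, so insertions are constant-time — are sound. It is worth noting that your technique (remove the heaviest cycle edge, lay out the residual tree, reinsert) is exactly the strategy the paper itself uses one lemma later for legged cycles (\cref{lem:cycle}), so your proof arguably unifies the two; the paper's greedy, on the other hand, is self-contained and does not depend on the internals or even the statement of \cref{lem:tree} to achieve linear time.
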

\begin{proof}
We maintain two \emph{candidate} vertices~$a$ and~$b$ to be placed next.
We call the neighbor of~$a$ ($b$, resp.) that has already been placed \emph{anchor} vertex~$a'$ ($b'$, resp.).
Initially, $a$ and $b$ are the two neighbors of the leftmost vertex~$v$ and $v = a' = b'$.
We iteratively append the candidate vertex whose edge to its anchor vertex has the smallest weight to the right of the linear order
(if both weights are equal, we take any of both candidates).
If $w(a'a) \le w(b'b)$, we append $a$ while we do not yet place~$b$.
After appending~$a$, we set $a' = a$ and the other neighbor of $a$ becomes the new candidate vertex~$a$.
We proceed symmetrically with~$b$ if $w(a'a) > w(b'b)$.
When $a = b$, we place the last remaining vertex $a$ as the rightmost vertex.

Clearly, every vertex is placed.
We show that no forbidden configuration occurs.
Suppose 
that there are two co-active edges $uv$ (where $u \prec v$) and $xy$ (where $x \prec y$) with $w(uv) > w(xy)$ and $v \prec y$.
In the course of our algorithm, after $x$ and $u$ have been placed and before $v$ and $y$ have been placed, the candidate vertices were $v$ and~$y$ whose anchor vertices were~$u$ and~$x$.
There could not have been a different anchor or candidate vertex because $C$ is a cycle.
Then, however, we would not have placed $v$ first because $w(uv) > w(xy)$~-- a contradiction.

Note that the running time of the initialization, termination and each iterative step is constant.
Therefore, the overall running time is in~$\mathcal{O}(n)$.
\end{proof}

We then consider the family of legged cycles.
Namely, a \emph{legged cycle} $L$ consists of a single \emph{underlying cycle} $C(L)$ and additional leaves, each having exactly one neighbor on $C(L)$.

\begin{restatable}[\restateref{lem:cycle}]{lemma}{Cycle}
\label{lem:cycle}
Let $\langle L, w \rangle$ be an $n$-vertex edge-weighted legged cycle with underlying cycle $C(L)$.
Then, a PQ-layout $\Gamma$ of $\langle L, w \rangle$ with $\pqn(\Gamma)=1$ can be constructed in $\mathcal{O}(n \log n)$~time.
\end{restatable}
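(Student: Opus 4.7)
The plan is to adapt the greedy cycle algorithm from Lemma~\ref{lem:1pq-cycles} so that leaves are interleaved into the two-sided sweep. The key insight is that both that algorithm and the tree algorithm of Lemma~\ref{lem:tree} essentially place next the vertex whose ``anchor edge'' (its unique edge to an already-placed neighbor) has smallest weight among all currently active edges. Legged cycles fit the same template if we treat pending leaf edges as additional anchor edges competing with the two cycle-side anchors.

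First, I would place an arbitrary cycle vertex $v_1$ as the leftmost vertex and initialize a candidate set containing both cycle neighbors of $v_1$ (with anchor weights equal to their cycle-edge weights) and every leaf of $v_1$ (with anchor weight equal to its leaf-edge weight). Then I would repeat the following until all $n$ vertices are placed: extract a candidate $v$ with minimum anchor weight, append it as the new rightmost vertex, pull from the PQ-layout page the edge(s) ending at $v$, and update the candidate set. If $v$ is a leaf, no further update is needed. If $v$ is a cycle vertex whose as-yet-unplaced cycle neighbor $v'$ is not a candidate on the opposite side, then $v'$ becomes a new candidate with anchor weight $w(vv')$. Otherwise (only one cycle vertex $v'$ remains), the two cycle candidates merge into a single candidate for $v'$ whose anchor weight is the \emph{maximum} of the two cycle-edge weights incident to $v'$. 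In either case I additionally insert each leaf of $v$ as a new candidate with its leaf-edge weight.

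For correctness, I would prove by induction that the constructed prefix is always a valid 1-page PQ-layout. At any moment the candidate set is in bijection with the set of active edges in the page, with each candidate's anchor weight being the (maximum) weight of the edges ending at that candidate. When I extract and place $v$, pulling the edges $P$ ending at $v$, the anchor weight $\max_{e \in P} w(e)$ was minimum over all candidates, so every edge $e'$ that remains active after the pull satisfies $w(e') \geq w(e)$ for every $e \in P$. This is precisely condition~C.\ref{item:c1}, so the forbidden configuration~F.\ref{item:f1} never occurs. Termination and completeness are immediate: each step places one new vertex, and every cycle vertex (respectively leaf) eventually becomes a candidate once its predecessor on its cycle side (respectively its cycle-vertex parent) is placed.

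The main delicate point is the merge step on the last unplaced cycle vertex $v^{*}$: placing $v^{*}$ pulls both cycle edges incident to it, so the constraint is dictated by the \emph{larger} of the two weights. Setting the anchor weight of $v^{*}$ to their maximum correctly defers $v^{*}$ until every pending leaf lighter than this maximum has been placed; otherwise such a leaf would trigger F.\ref{item:f1} against the heavier cycle edge. Maintaining the candidate set as a binary heap, the algorithm performs $O(n)$ insertions, extract-mins, and at most one decrease-key (at the merge), each in $O(\log n)$ time, giving the claimed $O(n \log n)$ running time.
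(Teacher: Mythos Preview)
Your greedy extension of the cycle algorithm does not work; there is a genuine gap in the correctness argument, and a small counterexample breaks the algorithm regardless of the starting vertex.

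The flaw is in the sentence ``the anchor weight $\max_{e\in P} w(e)$ was minimum over all candidates, so every edge $e'$ that remains active after the pull satisfies $w(e')\ge w(e)$ for every $e\in P$.'' After the merge step the last cycle vertex corresponds to \emph{two} active edges, and its anchor weight is their maximum; hence the minimum over candidate anchor weights is a minimum of \emph{maxima}, and it need not lower-bound the \emph{smaller} of the two merged cycle edges. Concretely, take the triangle $abc$ with $w(ab)=1$, $w(bc)=2$, $w(ca)=10$, and a single leaf $\ell$ at $b$ with $w(b\ell)=5$. Starting at $a$, you place $a$ then $b$ (anchor~$1$), after which $c$ merges with anchor $\max(2,10)=10$ and $\ell$ enters with anchor~$5$. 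You next place $\ell$, but now $b\ell$ (weight~$5$) ends at $\ell$ while the still-active edge $bc$ (weight~$2$) ends later at $c$, which is exactly the forbidden configuration~F.\ref{item:f1}. The same obstruction arises from the other two starting vertices (the offending light cycle edge is $bc$, respectively $ab$), so the choice of $v_1$ cannot rescue the argument. In fact, once the prefix $a,b$ (or $b,a$ or $c,b$) is fixed, \emph{neither} relative order of $\ell$ and the last cycle vertex is valid, so no tie-breaking or alternative merge rule (e.g., using the minimum) repairs the algorithm.

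The underlying reason is that your scheme always places a leaf to the \emph{right} of its cycle parent, so the leaf edge must be compared against all remaining cycle edges, including the light one that closes the cycle. The paper avoids this by deleting the heaviest cycle edge $e^\star$ first, laying out the resulting caterpillar via \cref{lem:caterpillar} so that every leaf with weight at most $w(e^\star)$ sits to the \emph{left} of its parent (hence its edge ends at a cycle vertex and never competes with cycle edges), and only leaves heavier than $w(e^\star)$ are appended afterwards, sorted by weight. That asymmetry between light and heavy leaves is precisely what your uniform greedy rule is missing.
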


\begin{proof}[Proof Sketch]
	We remove the heaviest edge~$e^\star$ of $C(L)$
	and all leaves where the weight of the incident edge is greater than $w(e^\star)$.
	We lay out the remaining graph using \cref{lem:caterpillar}
	such that the endpoints of $e^\star$ are the first and the last vertex.
	We reinsert~$e^\star$ and, in increasing order of the weights of their incident edges, we add the initially removed leaves to the right.
\end{proof}

Next, we show that we may attach one caterpillar to any cycle (\cref{lem:one-caterpillar}) or two caterpillars to a single triangle or quadrangle (\cref{lem:two-caterpillars}).

\begin{lemma}\label{lem:one-caterpillar}
Let $\langle G,w \rangle$ be an $n$-vertex edge-weighted graph consisting of a cycle $O$ and a caterpillar~$C$ with underlying path $P(C)$ such that $V(O) \cap V(C) = \{r\}$ is a degree-$1$ vertex of $P(C)$.
Then, a PQ-layout $\Gamma$ of $\langle G, w \rangle$ with $\pqn(\Gamma)=1$ can be constructed in $\mathcal{O}(n)$ time.
\end{lemma}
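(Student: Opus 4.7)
The plan is to combine the PQ-layout from \cref{lem:caterpillar} for~$C$ with the PQ-layout from \cref{lem:1pq-cycles} for~$O$, glued together at the shared vertex~$r$, and then verify that no forbidden configuration arises across the two parts. Since $r$ is a degree-$1$ vertex of $P(C)$, \cref{lem:caterpillar} produces a one-page PQ-layout $\Gamma_C$ of $\langle C, w|_C \rangle$ in which $r$ is the rightmost vertex. Similarly, \cref{lem:1pq-cycles} produces a one-page PQ-layout $\Gamma_O$ of $\langle O, w|_O \rangle$ in which $r$ is the leftmost vertex. I would now define $\Gamma$ by concatenating the two vertex orderings, identifying the two copies of~$r$: all vertices of $C \setminus \{r\}$ appear in the order given by $\Gamma_C$, followed by $r$, followed by all vertices of $O \setminus \{r\}$ in the order given by $\Gamma_O$, and all edges are assigned to the single available page.

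Next I would argue correctness. Any edge of $C$ has both endpoints at positions $\preceq r$ in $\Gamma$, whereas any edge of $O$ has both endpoints at positions $\succeq r$. No forbidden configuration is introduced within $C$ or within $O$ because $\Gamma_C$ and $\Gamma_O$ already avoid them. So it suffices to rule out a forbidden configuration F.\ref{item:f1} formed by an edge $e = uv$ of one part together with an edge $e' = u'v'$ of the other part, with $u, u' \prec v \prec v'$ and $w(e) > w(e')$. If $e \in E(C)$, then $v \preceq r$; if additionally $e' \in E(O)$, then $u' \succeq r$, so the requirement $u' \prec v \preceq r$ would force $u' = r = v$, contradicting $u' \prec v$. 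The symmetric case $e \in E(O)$, $e' \in E(C)$ forces $v' \succeq r$ and $v \preceq r$; but $v \prec v'$ together with $v \in V(O)$ (hence $v \succeq r$) and $v' \in V(C)$ (hence $v' \preceq r$) yields $v = v' = r$, again a contradiction. Hence $\Gamma$ is a PQ-layout of $\langle G, w \rangle$ with $\pqn(\Gamma)=1$.

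The main obstacle I anticipate is simply making this separation argument airtight at the ``seam'' vertex $r$ itself, where pulls at $r$ happen before inserts at $r$ according to operations O.\ref{item:o1} and~O.\ref{item:o2}. In particular, edges of $C$ ending at $r$ are deactivated before any edge of $O$ starting at $r$ becomes active, so they are never co-active, which is exactly what the case analysis above captures via the strict inequality $u' \prec v$.

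For the running time, the calls to \cref{lem:caterpillar} and \cref{lem:1pq-cycles} each take $\mathcal{O}(n)$ time, and the concatenation is performed in $\mathcal{O}(n)$ additional time, giving an overall running time in $\mathcal{O}(n)$ as required.
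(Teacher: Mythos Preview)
Your proposal is correct and follows exactly the same approach as the paper: lay out $C$ with $r$ rightmost via \cref{lem:caterpillar}, lay out $O$ with $r$ leftmost via \cref{lem:1pq-cycles}, and concatenate at $r$. The paper compresses your case analysis into the single observation that no edge spans over $r$, but your more explicit verification (modulo a harmless typo in the second case where you initially swapped the inequalities for $v$ and $v'$) is equivalent.
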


\begin{proof}
We first draw $C$ and $O$ separately using \cref{lem:caterpillar,lem:1pq-cycles}, respectively.
Then, we identify the two occurrences of $r$ in $C$ and $O$ with each other, which yields our PQ-layout~$\Gamma$ of~$G$.
This does not result in a forbidden configuration since $r$ is the rightmost vertex of~$C$ and the leftmost vertex of~$O$ and, hence, there is no edge spanning over $r$.
\end{proof}

\begin{restatable}[\restateref{lem:two-caterpillars}]{lemma}{TwoCaterpillars}
\label{lem:two-caterpillars}
Let $\langle G, w \rangle$ be an $n$-vertex edge-weighted graph consisting of:
\begin{itemize}
\item A 3-cycle $\triangle = abc$ or a 4-cycle $\Box=abcd$.
\item A caterpillar $C_a$ with underyling path $P(C_a)$ starting at vertex $a$.
\item A caterpillar $C_c$ with underyling path $P(C_c)$ starting at vertex $c$.
\end{itemize}
Then, a PQ-layout $\Gamma$ of $\langle G, w \rangle$ with $\pqn(\Gamma)=1$ can be constructed in $\mathcal{O}(n \log n)$.
\end{restatable}

\begin{proof}[Proof Sketch]
	We arrange the cycle such that only the heaviest edge is co-active with the other edges of the cycle.
	We use \cref{lem:tree} for laying out the one caterpillar
	with designated leftmost vertex and we use \cref{lem:caterpillar}
	for laying out the other caterpillar with designated rightmost vertex.
	In the appendix, we argue that we can combine these three subgraphs,
	potentially adding some leaves later on, while still using only one priority queue.
\end{proof}

Finally, we consider graphs that contain more than one cycle.
If we exclude disconnected graphs, there exist exactly two such graphs with priority queue number~1, the complete bipartite graph $K_{2,3}$ and the graph obtained from  $K_4$ by removing an edge.
We will later see that these are the only ones.
The proofs that one priority queue suffices can be found in the appendix.

%

\begin{restatable}[\restateref{lem:k23}]{lemma}{KTwoThree}
\label{lem:k23}
$\pqn(K_{2,3})=1$.
\end{restatable}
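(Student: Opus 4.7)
The plan is to exhibit, for any edge-weight function $w$ on $K_{2,3}$ with bipartition $\{a,b\} \cup \{x,y,z\}$, a vertex ordering that yields a valid single-page PQ-layout. Using the automorphism group of $K_{2,3}$ (swap of $\{a,b\}$ and permutations of $\{x,y,z\}$), I would first relabel so that the edge of maximum weight is $bz$, i.e., $w(bz) = \max_e w(e)$.

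Next, I would distinguish a principal case from the rest. In the principal case, when $w(az) \geq \max\{w(ax), w(ay), w(bx), w(by)\}$, I would place $z$ as the rightmost vertex and lay out the 4-cycle $axby$ on the first four positions using the algorithm of \cref{lem:1pq-cycles}. The edges $az$ and $bz$ are active throughout this prefix, but their weights dominate every 4-cycle edge weight, so they never trigger any forbidden configuration F.\ref{item:f1}; at the final vertex $z$, both are pulled simultaneously and no other edge remains active.

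In the remaining (non-principal) case, some 4-cycle edge exceeds $w(az)$, and I would instead resort to a small collection of template orderings whose spine patterns have the form $v_1, v_2, A, v_3, B$ or $v_1, A, v_2, B, v_3$, where $\{A,B\} = \{a,b\}$ and $v_1 v_2 v_3$ is a permutation of $\{x,y,z\}$. Representative templates include $v_1, v_2, a, v_3, b$; $v_1, v_2, b, a, v_3$; $a, v_1, b, v_2, v_3$; and $v_1, b, v_2, a, v_3$. For each template, the forbidden configurations F.\ref{item:f1} translate into a specific chain of inequalities on the edge weights (for instance, the template $v_1, v_2, a, v_3, b$ requires $\max_v w(av) \le \min(w(bv_1), w(bv_2))$), and I would show that under the non-principal case assumptions at least one such template admits a permutation of $\{x,y,z\}$ satisfying its inequality chain.

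The hard part will be the case analysis in the non-principal case: ensuring the finite collection of templates is rich enough to cover every remaining weight function, and carefully verifying the derived inequality chains in each sub-case. This verification is mechanical but intricate, which is why the full proof is deferred to the appendix.
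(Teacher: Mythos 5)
Your principal case is correct, and is in fact a little cleaner than the paper's treatment of the same situation: when the two heaviest edges share the degree-2 vertex $z$, placing $z$ last and laying out the 4-cycle $axby$ on the first four positions via \cref{lem:1pq-cycles} works exactly as you say, because any instance of F.\ref{item:f1} involving $az$ or $bz$ would require a strictly heavier edge ending strictly earlier, and no such edge exists. (The paper instead splits this case into two subcases according to the third-heaviest edge and writes down explicit orderings.)

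The gap is the non-principal case, which is where essentially all of the content of the lemma lives. After your normalization it splits into two genuinely different situations~-- the second-heaviest edge is incident to $b$ (the two heaviest edges share a degree-3 vertex) or to $a$ (they are disjoint)~-- and for these you only assert that ``at least one template admits a permutation satisfying its inequality chain,'' without exhibiting which template, which permutation of $\{x,y,z\}$, or why the inequalities hold under the case hypotheses. There is no appendix to defer to; that verification \emph{is} the proof obligation, and $K_{2,3}$ has only six edges, so it must be carried out. For the record, the paper confirms that your two template shapes do suffice: for disjoint heaviest edges $u_1v_1$ and $u_2v_3$ it uses $v_3 \prec u_1 \prec v_2 \prec u_2 \prec v_1$ (your shape $v_1, A, v_2, B, v_3$), and for two heaviest edges meeting at a degree-3 vertex $u_1$ it uses $v_3 \prec v_1 \prec u_2 \prec v_2 \prec u_1$ (your shape $v_1, v_2, A, v_3, B$), in each case checking that the only crossing or pseudo-nesting pairs the heavier edge with the later right endpoint. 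Until you supply that finite check, the proof is incomplete.
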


\begin{restatable}[\restateref{lem:k4minusE}]{lemma}{KFourMinusE}
\label{lem:k4minusE}    
$\pqn(K_4-e) = 1$ where $K_4-e$ is obtained from $K_4$ by removing one edge.
\end{restatable}

\subsection{Forbidden Minors for Graphs with Priority Queue Number~1}\label{sec:pqn1:negative}

A \emph{minor}~$G'$ of a graph $G = (V,E)$ is obtained from $G$ by a series of edge contractions and by removing vertices and edges.
Formally, for an edge $e = uv$ in $G$, the \emph{contraction of edge $e$} yields the graph $G/e$, obtained by replacing vertices $u$ and $v$ by a single vertex $x_{uv}$ with incident edges $\{x_{uv}y \mid uy \in E \text{ or } vy \in E\}$.
Many important graph classes (e.g., planar graphs) are \emph{minor-closed}, that is,
if we take a minor of a graph lying in such a class, the resulting graph belongs to this class as well.
Every minor-closed graph class is defined by a finite set of forbidden minors and can be recognized efficiently~\cite{DBLP:journals/jct/RobertsonS95b,DBLP:journals/jct/RobertsonS04}.

Next, we prove that each of the graphs $\mathcal{F}_1,\ldots,\mathcal{F}_8$ shown in \cref{fig:minors} has priority queue number at least two.
The numbers at the edges specify edge-weight functions
for which one priority queue does not suffice.
Moreover, we prove that every graph that has one of $\mathcal{F}_1,\ldots,\mathcal{F}_8$ as a minor,
has priority queue number at least two, too.
In the appendix, we prove the following helpful observation:

\begin{restatable}[\restateref{le:1pq-cycle-last-vertex}]{lemma}{LastEdgeHeavy}
    \label{le:1pq-cycle-last-vertex}
	In any PQ-layout~$\Gamma$ of an edge-weighted cycle $\langle C, w \rangle$ with \mbox{$\pqn(\Gamma) = 1$}, the last vertex on the spine is incident to an edge of maximum weight in $\langle C, w \rangle$.
\end{restatable}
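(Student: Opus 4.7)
The plan is to argue by contradiction: suppose the rightmost vertex $v$ on the spine is not incident to any edge of maximum weight~$M$, and let $u_1 \prec u_2 \prec v$ be the two neighbors of $v$ in $C$. The first key observation I would establish is that every maximum-weight edge $e^* = xy$ (with $x \prec y$) must satisfy $y \preceq u_1$. Indeed, at the moment $e^*$ is pulled at vertex~$y$, condition~C.\ref{item:c1} forbids any co-active edge ending elsewhere from having strictly smaller weight; if $u_j v$ were active at~$y$ for some $j \in \{1,2\}$, this would force $w(u_j v) \geq M$, contradicting our assumption. Hence neither $u_1 v$ nor $u_2 v$ is active at~$y$, which means $u_1, u_2 \succeq y$, i.e., $y \preceq u_1 \prec u_2$.

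Next, I would pick $y^*$ to be the rightmost vertex that is a right endpoint of some maximum-weight edge, and fix such an edge $e^* = x^* y^*$. I would then consider the partition of $V(C)$ into $V^- = \{w : w \prec y^*\}$ and $V^+ = V(C) \setminus V^-$, and count the edges of $C$ that cross this partition. Because $C$ is a cycle, this count is even, and it is at least~$2$ since $e^*$ itself crosses. For any crossing edge $f = pq$ with $p \prec q$ and $q \neq y^*$, the edge $f$ is co-active with $e^*$ at~$y^*$, so C.\ref{item:c1} forces $w(f) \geq M$, making~$f$ a maximum-weight edge; but then $q \succ y^*$ would contradict the maximality of~$y^*$. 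Therefore every crossing edge must be incident to~$y^*$.

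Since $y^*$ has degree~$2$ in~$C$, the cut consists of exactly two edges: the edge $e^* = x^* y^*$ and an edge $y^* z$, where $z$ is $y^*$'s other neighbor in~$C$ with $z \prec y^*$. Removing these two edges leaves $y^*$ isolated and a Hamiltonian path $\pi$ on $V(C) \setminus \{y^*\}$ from $x^*$ to~$z$; since~$\pi$ contains no further crossing edges, it lies entirely in $V^-$ or entirely in $V^+$. As $x^*, z \in V^-$, the path $\pi$ would have to stay within $V^-$; however, $u_2 \in V^+$ (because $u_2 \succ u_1 \succeq y^*$) is a vertex of~$\pi$, which yields the desired contradiction. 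The main subtlety I expect is the cut-parity bookkeeping combined with carefully applying C.\ref{item:c1} at the ``extreme'' endpoint~$y^*$; once this is in place, the remaining analysis reduces to directly tracking which edges are co-active there.
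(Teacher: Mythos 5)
Your proof is correct and follows essentially the same strategy as the paper's: both select a maximum-weight edge whose right endpoint $y^*$ is rightmost among all such endpoints and then use the cycle's connectivity to force an edge spanning over $y^*$, which violates condition~C.1. The paper exhibits this spanning edge directly via the two vertex-disjoint paths from the leftmost to the rightmost vertex, whereas you rule out spanning edges via cut-parity and derive the contradiction from the remaining Hamiltonian path being confined to $V^-$; this is the same idea in contrapositive packaging (and your preliminary step establishing $y^* \preceq u_1$ is not even needed, since the rightmost vertex $v$ itself already lies in $V^+ \setminus \{y^*\}$ and on that path).
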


We now use \cref{le:1pq-cycle-last-vertex} to prove the main result of this subsection:

\begin{restatable}[\restateref{le:forbidden-minors-are-not-1pq}]{lemma}{ForbiddenMinors}
	\label{le:forbidden-minors-are-not-1pq}
    For every graph $G$ that has some of $\mathcal{F}_1,\ldots,\mathcal{F}_8$ (see \cref{fig:minors}) as a minor,
	$\pqn(G) > 1$ holds.
\end{restatable}
\begin{proof}[Proof of Cases $\mathcal{F}_1$--$\mathcal{F}_3$]
	To show $\pqn(G) > 1$,
	it suffices to show $\pqn(G, w) > 1$ for some edge weighting~$w$.
    For each $\mathcal{F} \in \{\mathcal{F}_1,\ldots,\mathcal{F}_8\}$,
    we prove that $\mathcal{F}$ itself has $\pqn(\mathcal{F}) > 1$
    and we prove that for any reversal of an edge contractions
    the resulting graph~$G$ has $\pqn(G) > 1$.
    In a reversal of an edge contraction,
    we split a vertex~$v$ into two vertices $v_1$ and~$v_2$, add the edge $v_1v_2$,
    and assign the edges incident to~$v$ arbitrarily to $v_1$ or $v_2$ or both.
    We call the reversal of an edge contraction a \emph{vertex split}.
    Note that it suffices to consider vertex splits
    since taking a supergraph of $\mathcal{F}$ can never reduce the
    number of priority queues needed for the edges of $\mathcal{F}$ alone.
    Further note that is suffices to consider vertex splits
    where each edge incident to $v$ is assigned to exactly one of~$v_1$ and~$v_2$
    and each of $v_1$ and~$v_2$ gets at least one incident edge of~$v$.
    If in $G$ this was not the case,
    we could remove duplicates of incident edges of~$v$
    and we could remove the one of~$v_1$ and~$v_2$ that has degree~1;
    the resulting subgraph $G'$ of~$G$ would have $\mathcal{F}$ as a minor.
    
	We claim that the edge weightings in \cref{fig:minors}
	require more than one priority queue.
	We consider each of the eight graphs and the graphs for which they are minors individually
	and suppose for a contradiction that there is a PQ-layout~$\Gamma$ of~$G$ with $\pqn(\Gamma) = 1$.
	We consider $\mathcal{F}_1$--$\mathcal{F}_3$ here
	and $\mathcal{F}_4$--$\mathcal{F}_8$ in the appendix.
	
	\begin{figure}
		\centering
		\begin{subfigure}[t]{0.1125\textwidth}
			\centering 
			\includegraphics[page=10]{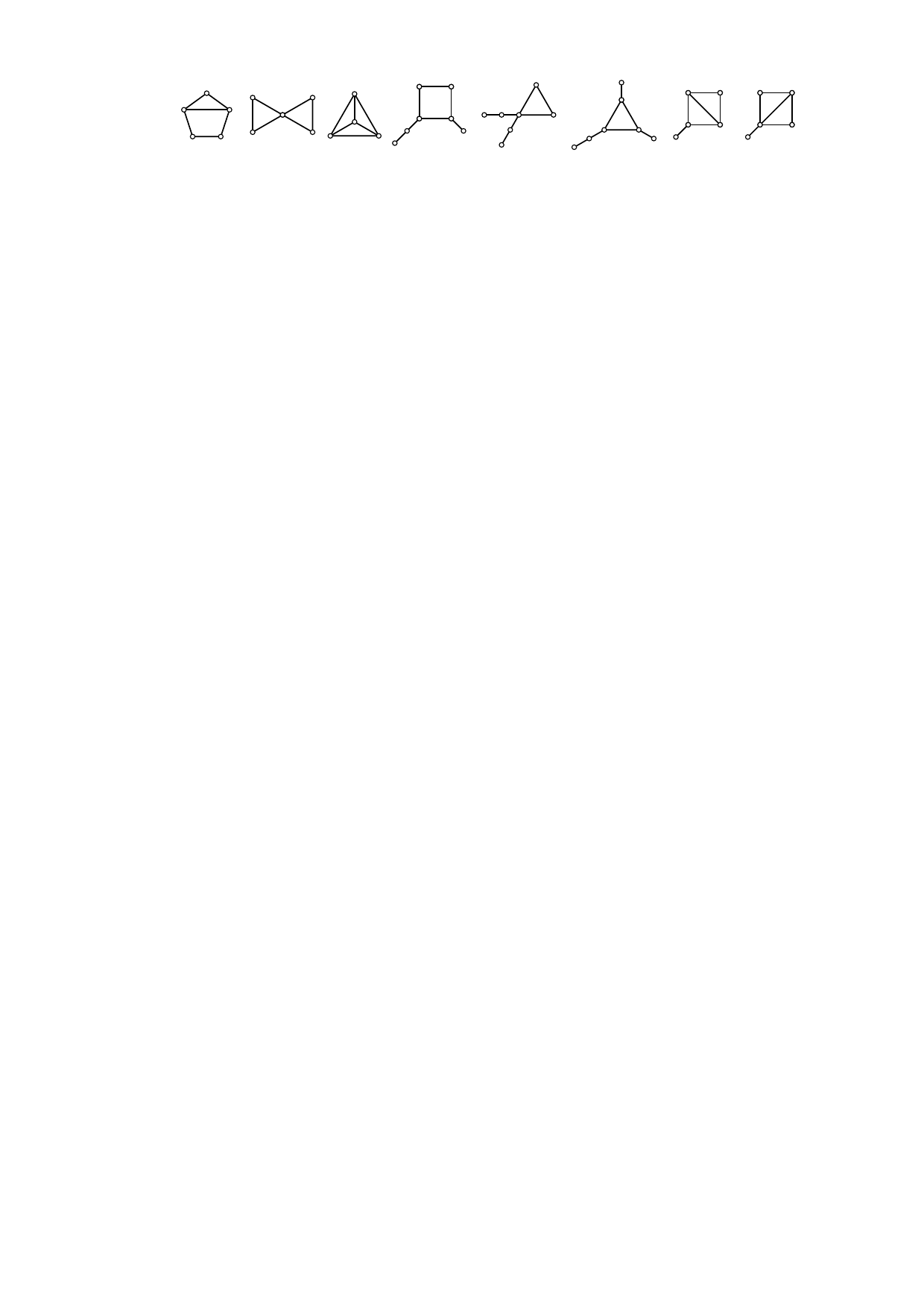}
			\subcaption{$\mathcal{F}_1$}
			\label{fig:minors:1}
		\end{subfigure}
		\hfill
		\begin{subfigure}[t]{0.1225\textwidth}
			\centering 
			\includegraphics[page=11]{forbidden-minors-optimized-order}
			\subcaption{$\mathcal{F}_2$}
			\label{fig:minors:2}
		\end{subfigure}
		\hfill
		\begin{subfigure}[t]{0.125\textwidth}
			\centering 
			\includegraphics[page=12]{forbidden-minors-optimized-order}
			\subcaption{$\mathcal{F}_3$}
			\label{fig:minors:3}
		\end{subfigure}
		\hfill
		\begin{subfigure}[t]{0.1125\textwidth}
			\centering 
			\includegraphics[page=13]{forbidden-minors-optimized-order}
			\subcaption{$\mathcal{F}_4$}
			\label{fig:minors:4}
		\end{subfigure}
		\hfill
		\begin{subfigure}[t]{0.1225\textwidth}
			\centering 
			\includegraphics[page=14]{forbidden-minors-optimized-order}
			\subcaption{$\mathcal{F}_5$}
			\label{fig:minors:5}
		\end{subfigure}
		\hfill
		\begin{subfigure}[t]{0.145\textwidth}
			\centering 
			\includegraphics[page=15]{forbidden-minors-optimized-order}
			\subcaption{$\mathcal{F}_6$}
			\label{fig:minors:6}
		\end{subfigure}
		\hfill
		\begin{subfigure}[t]{0.1025\textwidth}
			\centering 
			\includegraphics[page=16]{forbidden-minors-optimized-order}
			\subcaption{$\mathcal{F}_7$}
			\label{fig:minors:7}
		\end{subfigure}
		\hfill
		\begin{subfigure}[t]{0.1025\textwidth}
			\centering 
			\includegraphics[page=17]{forbidden-minors-optimized-order}
			\subcaption{$\mathcal{F}_8$}
			\label{fig:minors:8}
		\end{subfigure}
		\caption{Forbidden minors for graphs with  priority queue number~1.
        The graphs with the  provided edge weights do not admit a linear layout with priority queue number~1.
        The thickness of an edge indicates the weight of that edge.}
		\label{fig:minors}
	\end{figure}
	
	\begin{description}
    \item[$\mathcal{F}_1$]:
        First consider $G = \mathcal{F}_1$.
        By \cref{le:1pq-cycle-last-vertex},
        $c$ or $d$ is the last vertex of the 5-cycle $abcde$.
        In the 4-cycle $bcde$, however,
        $b$ or $e$ is the last vertex; a contradiction.
        
        Now let $G$ be any graph that has $\mathcal{F}_1$ as a minor.
        As explained above, we do not need to consider vertex splits where new leaves or cycles arise.
        So, with a sequence of vertex splits, we can increase
        (i)~the length of the path~$bcde$,
        (ii)~the length of the path~$bae$, or
        (iii)~the length of the path~$be$.
        In any case, we assign any new edge a weight of~1.
        Then, our arguments about the last vertices also apply for larger cycles.
        
    \item[$\mathcal{F}_2$]:
        First consider $G = \mathcal{F}_2$.
        By \cref{le:1pq-cycle-last-vertex} and symmetry, we may assume that $a$ and $d$ are the last vertices of $\triangle abc$ and $\triangle cde$, respectively.
        Moreover, assume by symmetry that $a \prec d$, i.e., $c \prec a \prec d$.
        However, then edge $ab$ ends below the lighter edge $cd$; a contradiction.
        
        Now let $G$ be any graph that has $\mathcal{F}_2$ as a minor.
        With a sequence of vertex splits, we can
        (i)~increase the length of the cycles~$abc$ or~$cde$,
        (ii)~create a path (replacing~$c$) between the two cycles;
        we denote the two endpoints of that path as~$c_1$ and~$c_2$
        where $c_1$ is part of the one cycle and~$c_2$ is part of the other cycle.
        In any case, we assign any new edge a weight of~1.
        In~(i), we can still assume that $c \prec a \prec d$.
        In contrast to before $c$ might not be a neighbor of~$a$ or~$d$ any more.
        However, there is a path of edges with weight~1 between $c$ and~$a$
        and between~$c$ and~$d$.
        Some edge~$e$ on the path between $c$ and~$d$ spans over~$a$,
        which is the endpoint of the heavier edge~$ab$; a contradiction.
        In (ii), we establish again
        (by \cref{le:1pq-cycle-last-vertex} and symmetry up to the length of the cycles, which is not relevant here)
        that $c_1 \prec a$, $c_2 \prec d$, $a \prec d$.
        To obtain the contradiction that an edge of weight~1 on the path between~$c_2$ and~$d$
        spans over the endpoint of the edge~$ab$ of weight~2, we show that $c_2 \prec a$.
        Suppose for a contradiction that $c_1 \prec a \prec c_2$.
        Then, there is some edge of weight~1 on the path between~$c_1$ and~$c_2$
        that spans over the heavier edge $ab$; a contradiction.
        
	\item[$\mathcal{F}_3$]:
        First consider $G = \mathcal{F}_3$.
		Consider the triangle $\triangle abd$.
		By \cref{le:1pq-cycle-last-vertex},
		$a$ or $b$ is the last vertex of $\triangle abd$ (on the spine)
		since $ab$ is its heaviest edge.
		In the triangle $\triangle bcd$, $cd$ is the heaviest edge,
		which implies that $c$ or $d$ is the last vertex of $\triangle bcd$.
		Hence, neither $b$ nor $d$ can be the (overall) rightmost vertex.
		Since the heaviest edge of the 4-cycle $abcd$
		is again $ab$, $a$ is the rightmost vertex.
		In the triangle $\triangle acd$, $cd$ is the heaviest edge,
		which implies that $a$ cannot be the rightmost vertex;
		a contradiction.
        
        Now let $G$ be any graph that has $\mathcal{F}_3$ as a minor.
        Since $\mathcal{F}_3$ is the complete graph on four vertices (i.e., $K_4$),
        it is highly symmetric and the first vertex split makes one of the
        four 3-cycles a 4-cycle.
        This new graph is precisely $\mathcal{F}_1$ with one additional edge.
		\qedhere
	\end{description}
\end{proof}

\subsection{Characterization and Recognition}\label{sec:pqn1:summary}

We will now combine the results shown in the previous subsections to show the following:

\begin{restatable}[\restateref{thm:pq1}]{theorem}{pqOne}
    \label{thm:pq1}
    Let $G$ be a graph. Then, $\pqn(G)=1$ if and only if $G$ does not contain any~$\mathcal{F}_i$ for $i \in \{1,\ldots,8\}$ as a minor.
\end{restatable}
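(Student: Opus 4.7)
The forward direction is immediate from the previously established machinery: if $G$ contains some $\mathcal{F}_i$ (for $i \in \{1,\ldots,8\}$) as a minor, then by \cref{cor:minorClosed} the property of having priority queue number~$1$ is inherited under minors, and by \cref{le:forbidden-minors-are-not-1pq} we have $\pqn(\mathcal{F}_i) > 1$; hence $\pqn(G) > 1$. All the substance lies in the converse implication.

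For the converse, I would argue by structural case analysis. First I reduce to connected graphs: the disjoint union of two graphs with priority queue number~$1$ still has priority queue number~$1$, since one can lay out the two pieces consecutively along the spine and reuse the single priority queue without creating any forbidden configuration (no edge crosses between the two blocks). So assume $G$ is connected and $\mathcal{F}_i$-minor-free for every~$i$. The plan is then to split on the cyclomatic number $\mu(G) = |E(G)| - |V(G)| + 1$. If $\mu(G) = 0$, then $G$ is a tree and \cref{lem:tree} applies. If $\mu(G) = 1$, then $G$ consists of a unique cycle~$C$ together with pendant trees; I will match each forbidden pendant pattern (a non-caterpillar pendant subtree, a caterpillar attached at an interior vertex of its underlying path, two non-leaf pendants on a cycle of length~$\geq 5$, or two caterpillars on a $4$-cycle with non-opposite attachments) to one of the minors $\mathcal{F}_i$. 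The surviving configurations will then be exactly those handled by \cref{lem:1pq-cycles}, \cref{lem:cycle}, \cref{lem:one-caterpillar}, \cref{lem:quadrangle}, and \cref{cor:triangle}. Finally, if $\mu(G) \geq 2$, a similar matching argument should show that $G$ is a subgraph of either $K_{2,3}$ or $K_4-e$, after which \cref{lem:k23} and \cref{cor:k4minusE} conclude.

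The hard part is the case $\mu(G) \in \{1,2\}$: one has to check that the eight explicit obstructions depicted in \cref{fig:minors} collectively forbid every possible structural defect and that no further obstruction is needed. This is essentially a finite but meticulous enumeration, tracking cycle length, number and placement of non-leaf pendants, shape of those pendants (caterpillar vs.\ non-caterpillar, attached at an endpoint vs.\ an interior vertex of its underlying path), and how two cycles can share vertices or edges. Once this structural dichotomy is established, the recognition claim follows immediately: since each $\mathcal{F}_i$ has bounded size, the presence of any of them as a minor can be tested in polynomial (in fact linear) time, or, equivalently, one may directly verify that $G$ fits one of the families in \cref{sec:pqn1:positive}.
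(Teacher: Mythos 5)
Your overall strategy coincides with the paper's: the forward direction follows from \cref{cor:minorClosed} together with \cref{le:forbidden-minors-are-not-1pq}, and the converse is attacked by reducing to connected graphs and splitting on the number of independent cycles, matching each structural defect to a forbidden minor and each surviving structure to one of the constructive lemmas. The forward direction and the reduction to connected $G$ are fine. The difficulty is that the converse direction \emph{is} the theorem, and you explicitly defer its substance ("one has to check that the eight explicit obstructions collectively forbid every possible structural defect"); moreover, the fragments of the enumeration you do commit to are not quite correct, which shows the deferral is not harmless.

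Concretely, in the unicyclic case your list of forbidden pendant patterns (i) omits the configuration in which the cycle carries a proper pendant caterpillar together with \emph{two} further non-trivial pendant components~--- this is exactly what $\mathcal{F}_5$ excludes~--- and (ii) misstates the two-pendant condition: "two non-leaf pendants on a cycle of length $\geq 5$" would, under the natural reading, wrongly exclude legged cycles with stars attached at several vertices, which have priority queue number~$1$ by \cref{lem:cycle}. The actual obstruction ($\mathcal{F}_8$) requires one pendant to contain a path of length two leaving the cycle and the two attachment vertices to be at distance at least~$3$ in one direction along the cycle (which for cycles of length at least~$5$ holds for any two distinct attachment points, but must be checked for quadrangles and is vacuous for triangles). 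In the case of at least two cycles, the paper's argument is organized by whether $G$ contains two edge-disjoint cycles (sharing at most one vertex yields $\mathcal{F}_2$; sharing at least two vertices yields $\mathcal{F}_7$) versus every two cycles sharing an edge, where a theta-subgraph analysis uses $\mathcal{F}_1$, $\mathcal{F}_3$, $\mathcal{F}_6$, and $\mathcal{F}_7$ to exclude every degeneration except $K_{2,3}$ and $K_4-e$; none of this case structure appears in your sketch. Until this enumeration is actually carried out~--- and the slips above indicate it cannot be read off mechanically~--- the converse implication is not proved.
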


\begin{proof}[Proof Sketch]
    We analyze the structure of~$G$ and we see that either
    $G$ is a graph that has $\pqn(G) = 1$ due to a result from \cref{sec:pqn1:positive},
    or $G$ has some $\mathcal{F}_i$ for $i \in \{1,\ldots,8\}$ as a minor.
    
    If $G$ is a tree, then $\pqn(G) = 1$ by \cref{lem:tree}.
    If $G$ contains exactly one cycle~$C$,
    then $G - E(C)$ is a forest $F$, and we consider each component of $F$ as rooted at~$C$.
    If every component is an isolated vertex or a rooted star, then $\pqn(G) = 1$ by \cref{lem:cycle}.
    If some component is more complex than a rooted star or a rooted caterpillar, then $G$ contains $\mathcal{F}_5$ as a minor.
    Now suppose some component~$K$ is a rooted caterpillar.
    If all other components are isolated vertices, then $\pqn(G)=1$ by \cref{lem:one-caterpillar}.
    If there are two further non-trivial components, then $G$ contains $\mathcal{F}_6$ as a minor.
    So assume that there is exactly one further non-trivial component~$K' \neq K$.
    If the roots of $K$ and $K'$ have distance at least $3$ in one direction along $C$, then $G$ contains $\mathcal{F}_4$ as a minor.
    If none of the above applies, either $C$ is a triangle and $\pqn(G)=1$ by \cref{lem:two-caterpillars},
    or $C$ is a quadrangle and the roots of $K$ and~$K'$ are opposite on $C$, and $\pqn(G) = 1$ by \cref{lem:two-caterpillars}.
    
    The case where~$G$ contains at least two cycles remains.
    If $G$ has two edge-disjoint cycles, then $G$ contains $\mathcal{F}_2$ as a minor.
    If two cycles have at least two vertices but no edge in common, then $G$ contains $\mathcal{F}_8$ as a minor.
    If any two cycles share at least one edge,
    then either $G = K_{2,3}$ or $G = K_4-e$ and we have $\pqn(G) = 1$ by \cref{lem:k23,lem:k4minusE},
    or $G$ contains $\mathcal{F}_1$, $\mathcal{F}_3$, $\mathcal{F}_7$, or $\mathcal{F}_8$ as a minor.
\end{proof}

\begin{corollary}\label{cor:minorClosed}
    The family of graphs with priority queue number 1 is minor-closed.
\end{corollary}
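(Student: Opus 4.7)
The plan is to verify closure of the class $\{G : \pqn(G) = 1\}$ under all three minor operations: vertex deletion, edge deletion, and edge contraction. Since any minor $G'$ of $G$ is obtained by a finite sequence of such operations, closure under each individual operation immediately yields the corollary.

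First I would handle the two deletion operations, which are straightforward. Let $G$ be a graph with $\pqn(G) = 1$ and let $G'$ be obtained from $G$ by deleting an edge $e_0$ or a vertex $v_0$ (together with its incident edges). Given an arbitrary edge-weight function $w'$ on the edges of $G'$, I extend $w'$ to an edge-weight function $w$ on $G$ by assigning arbitrary weights to the deleted edges. Since $\pqn(G) = 1$, there is a PQ-layout $\Gamma$ of $\langle G, w\rangle$ using a single priority queue. Removing the deleted vertex or edge from $\Gamma$ (and keeping the induced linear order on the remaining vertices) yields a linear layout $\Gamma'$ of $\langle G', w'\rangle$ on a single page. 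Since F.\ref{item:f1} only forbids certain pairs of edges, and the set of edges of $\Gamma'$ is a subset of that of $\Gamma$, no forbidden configuration can appear in $\Gamma'$. Hence $\pqn(G', w') = 1$, and since $w'$ was arbitrary, $\pqn(G') = 1$.

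Second, for edge contraction, the required statement is exactly \cref{lem:contraction}: contracting a single edge of a graph with priority queue number $1$ yields a graph with priority queue number~$1$. Combining the three cases, any single minor operation preserves the property $\pqn(\cdot) = 1$, so iterating finitely many operations does too, proving that the family is minor-closed.

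There is no real obstacle here; the genuine difficulty was packaged into \cref{lem:contraction}, and the corollary merely glues the deletion arguments (which are essentially free, because the forbidden configuration~F.\ref{item:f1} is monotone under edge removal) to that lemma.
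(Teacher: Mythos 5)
Your proposal is correct and matches the paper's (implicit) argument: the paper derives the corollary directly from \cref{lem:contraction}, leaving the vertex- and edge-deletion cases unstated precisely because, as you observe, the forbidden configuration F.\ref{item:f1} is monotone under edge removal, so any sublayout of a one-queue PQ-layout remains a one-queue PQ-layout. Your explicit treatment of the deletion cases is a sound filling-in of that routine step.
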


\cref{thm:pq1} implies that the structure of a graph $G$ with $\pqn(G)=1$ is quite limited. 
In fact, forbidden minors $\mathcal{F}_1, \mathcal{F}_2$ and $\mathcal{F}_3$ imply that $G$ can contain at most three cycles. 
More precisely, if $G$ contains two cycles, it is necessarily a $K_{2,3}$ or a $K_4-e$ since forbidden minors $\mathcal{F}_7$ and $\mathcal{F}_8$ forbid any other edge. 
Moreover, if $G$ contains a single cycle, forbidden minor $\mathcal{F}_4$ implies that no non-trivial tree (i.e., a tree that is more than a caterpillar) can be attached to the cycle. Finally, forbidden minors $\mathcal{F}_4$ and $\mathcal{F}_6$ dictate how legs can be combined with an attached caterpillar.
$\mathcal{F}_6$ implies that at most one vertex of the cycle can have legs whereas $\mathcal{F}_4$ implies that, for a cycle of length $4$, legs can only be attached to the vertex opposite of the vertex attached to the caterpillar whereas for cycles of lengths larger than $4$, no legs are allowed.
These observations in conjunction with \cref{lem:tree,lem:caterpillar,lem:cycle,lem:two-caterpillars,lem:k23,lem:k4minusE} imply the following:

\begin{corollary}\label{cor:recognition}
Given a graph $G$, it can be decided in $\mathcal{O}(n)$ time if $\pqn(G)=1$. Moreover, if $\pqn(G)=1$, a PQ-layout $\Gamma$ of $\langle G, w \rangle$ with $\pqn(\Gamma)=1$ can be computed in $\mathcal{O}(n \log n)$ time for a given edge-weight function $w$.
\end{corollary}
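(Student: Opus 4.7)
The plan is to turn \cref{thm:pq1} and the structural observations immediately following it into an explicit linear-time decision procedure, and then invoke the constructive lemmas of \cref{sec:pqn1:positive} to build a PQ-layout whenever $\pqn(G)=1$.

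First, I would compute the connected components of $G$ and their cyclomatic numbers $\mu(H) = |E(H)| - |V(H)| + 1$ via a single DFS. Two or more components with $\mu(H) \geq 1$ together already yield $\mathcal{F}_2$ as a minor; so does any single~$H$ with $\mu(H) \geq 3$. In both cases the algorithm rejects immediately. In particular, accepted instances satisfy $|E(G)| \leq |V(G)|+1$, so every remaining step is linear. If no component has a cycle, $G$ is a forest and we accept. If the unique non-tree component $H$ has $\mu(H)=2$, then \cref{thm:pq1} forces $H\in\{K_{2,3}, K_4-e\}$, which is checked by comparing a bounded number of vertices and edges.

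The principal subcase is $\mu(H)=1$: after locating the unique cycle $C$ of $H$ in linear time by iterated leaf pruning, I would examine the maximal subtree $T_v$ attached to each cycle vertex~$v$. Each $T_v$ must be a caterpillar rooted at $v$, which is tested by stripping all leaves once and checking that the residue is a (possibly empty) path; otherwise $\mathcal{F}_4$ occurs as a minor. Let $k$ denote the number of \emph{nontrivial} caterpillars among the $T_v$, i.e., those whose underlying path has length at least one. If $k=0$, $H$ is a legged cycle and is accepted by \cref{lem:cycle}. If $k=1$, I would additionally check that no other $T_v$ carries legs at a cycle-distance that would create $\mathcal{F}_8$; otherwise accept. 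If $k=2$, accept iff $|C|\in\{3,4\}$ and the two nontrivial caterpillars are attached at two vertices of the triangle or at two opposite vertices of the quadrangle (else $\mathcal{F}_5$ or $\mathcal{F}_8$ appears). If $k\geq 3$, reject by $\mathcal{F}_5$. All tests are local and together run in $O(n)$.

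For the construction, I would treat each connected component separately and place the resulting sublayouts consecutively along the spine; no edge spans two components, so concatenation introduces no forbidden configuration. Within each component, the matching constructive lemma yields a PQ-layout with a single priority queue: \cref{lem:tree} for trees, \cref{lem:cycle} for legged cycles, \cref{lem:one-caterpillar} for a unicyclic component with one attached caterpillar, \cref{lem:quadrangle} or \cref{cor:triangle} for the two-caterpillar cases, and \cref{lem:k23} together with \cref{cor:k4minusE} for the exceptional graphs~$K_{2,3}$ and $K_4-e$. The dominating factor is $O(n\log n)$, contributed by \cref{lem:tree}, \cref{lem:cycle}, and \cref{lem:quadrangle}. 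The main obstacle is the unicyclic case of the recognition step, where the correct forbidden-minor dichotomy (caterpillar vs.\ arbitrary tree, and the permitted placements of legs along $C$) has to be turned into finite local checks executable in linear time; no genuinely new combinatorial idea beyond those already used in the proof of \cref{thm:pq1} is required.
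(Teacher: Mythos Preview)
Your overall plan matches the paper's: the corollary is stated there without a standalone proof, simply as a consequence of the structural description following \cref{thm:pq1} together with the constructive lemmas, and your write-up fleshes out exactly that. However, there is one genuine error and one smaller slip.

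\textbf{The disconnected case is wrong.} You claim that two or more components with $\mu(H)\geq 1$ yield $\mathcal{F}_2$ as a minor and hence must be rejected. This is false: $\mathcal{F}_2$ is connected, so it can only appear as a minor of a single component; and two disjoint cycles do have priority queue number~$1$ (lay each out by \cref{lem:1pq-cycles} and concatenate). Your own construction step already handles components independently; the recognition step must do the same. Simply run the unicyclic/$\mu=2$ checks on every component and accept iff all components pass. The edge bound then becomes $|E(G)|\leq |V(G)|+c$ (with $c$ the number of components), still linear. Relatedly, the claim that $\mu(H)\geq 3$ forces $\mathcal{F}_2$ specifically is not justified (consider $K_{2,4}$); what is true, and sufficient for your purposes, is that $\mu(H)\geq 3$ forces \emph{some} $\mathcal{F}_i$ by \cref{thm:pq1}.

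\textbf{The $k=1$ case is under-specified.} With one ``nontrivial'' caterpillar $K$ and a rooted star at some other cycle vertex, \cref{lem:one-caterpillar} does \emph{not} apply (it requires all other $T_v$ trivial). In the paper's dichotomy this is already the two-non-trivial-components case: you must reject if two further vertices carry legs ($\mathcal{F}_5$), reject if the one star is at cycle-distance $\geq 3$ from the root of $K$ ($\mathcal{F}_8$), and otherwise construct via \cref{cor:triangle} or \cref{lem:quadrangle}. Your text only mentions the $\mathcal{F}_8$ check and cites the wrong construction lemma for this subcase.
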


\section{PQ-Layouts of Graphs with Bounded Pathwidth and Treewidth}\label{sec:results-kpq}

We present results for graphs of bounded pathwidth and bounded treewidth.
We assume familiarity with these concepts;
otherwise see \cref{app:pathwidth-treewidth} for formal definitions.

\begin{theorem}
    Let $G$ be a graph with pathwidth at most $p$. Then, $\pqn(G) \leq p+1$.
\end{theorem}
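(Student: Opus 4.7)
The plan is as follows. Fix a nice path decomposition $(X_1, \dots, X_r)$ of $G$ of width $p$, so that each bag has size at most $p+1$ and consecutive bags differ by a single vertex (an introduction or a forget). Place the vertices along the spine in the order in which they are introduced, giving $v_1 \prec v_2 \prec \cdots \prec v_n$. Construct a proper $(p+1)$-coloring $c \colon V \to \{1, \dots, p+1\}$ assigning distinct colors to any two vertices that co-occur in a bag; such a coloring exists because the bag co-occurrence graph is an interval graph of clique number at most $p+1$, and interval graphs are perfect. Then assign each edge $uv$ with $u \prec v$ to the priority queue indexed by $c(u)$, using $p+1$ priority queues in total.

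The main verification is to rule out all F.\ref{item:f1} configurations. Consider two edges $e_1 = u_1 v_1$ and $e_2 = u_2 v_2$ on the same queue, so $c(u_1) = c(u_2)$, with $u_1, u_2 \prec v_1 \prec v_2$ and $w(e_1) > w(e_2)$. If $u_1 \neq u_2$, then since $u_2 \prec v_1$ in the introduction order and $u_2 \in X_{f(v_2)}$ (as $u_2 v_2$ is an edge), the vertex $u_2$ must be present in every bag from $X_{f(u_2)}$ through $X_{f(v_2)}$, and in particular $u_2 \in X_{f(v_1)}$; analogously $u_1 \in X_{f(v_1)}$, contradicting the proper coloring. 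The case $u_1 = u_2$ corresponds to a pseudo-nesting and is the main obstacle: two edges from the same vertex $u$ to distinct later vertices $v_1 \prec v_2$ with $w(uv_1) > w(uv_2)$ cannot coexist on the same queue.

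To circumvent the pseudo-nesting obstacle, the plan is to refine the spine ordering adaptively using the edge weights, so that for each vertex~$u$ its right neighbors appear in non-decreasing order of the weights of their edges to~$u$; the two-endpoint argument above is robust under such a reordering because it only uses bag-membership, not the precise spine position within a bag. The key claim to establish is that a single weight-respecting reordering works simultaneously for every vertex; the plan is to exploit the coloring via the observation that at any moment at most one vertex of each color is active (i.e., lies in the current bag), so the weight-ordering constraints imposed by differently colored active vertices act on disjoint slot-lifetimes and can be satisfied by a careful left-to-right insertion procedure along the introduction order. The main technical challenge is verifying that when several active vertices share a future right neighbor, the resulting constraints on that neighbor's spine position remain jointly satisfiable; this ultimately follows from the interval-graph structure of the path decomposition together with the local nature of pseudo-nestings.
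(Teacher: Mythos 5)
Your two-endpoint case is sound and is essentially the paper's argument, but the crux of the theorem is precisely the case you defer, and your plan for it does not close the gap. By assigning every edge to the queue of its \emph{left} endpoint's color, you guarantee that all edges leaving a vertex $u$ to the right land on the same page, so any vertex $u$ with two right-neighbors $v_1 \prec v_2$ and $w(uv_1) > w(uv_2)$ produces a forbidden pseudo-nesting that no coloring can prevent. Your proposed repair --- reorder the spine so that every vertex's right-neighborhood appears in non-decreasing weight order --- rests on a claim that is false in general. Take $G = K_4 - vv'$ on vertices $u,u',v,v'$ with bags $\{u,u',v\},\{u,u',v'\}$ (pathwidth $2$) and weights $w(uv)=3$, $w(uv')=4$, $w(u'v)=4$, $w(u'v')=3$: with $u,u' \prec v,v'$, vertex $u$ forces $v \prec v'$ while $u'$ forces $v' \prec v$, so one of the two pseudo-nestings survives on its page no matter how you order $v$ and $v'$. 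Your heuristic that ``differently colored active vertices act on disjoint slot-lifetimes'' is exactly backwards: two vertices receive different colors precisely because they share a bag, i.e., are simultaneously active, and simultaneously active vertices can share future right-neighbors and impose contradictory ordering constraints. So the final paragraph is not a proof, and the statement it hopes to prove is not true.

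The fix is a one-line change that makes the whole construction weight-independent: assign each edge to the queue indexed by the color of its \emph{right} endpoint. In the forbidden configuration F.\ref{item:f1} the two right endpoints $v \prec v'$ are always distinct (the pseudo-nesting case $u = u'$ still has $v \neq v'$), so it suffices to show they are adjacent, hence differently colored. For this the paper first passes to an edge-maximal graph of pathwidth $p$, i.e., an interval graph with clique number $p+1$, and orders vertices by increasing right interval endpoints; then $u' \prec v \prec v'$ together with $u'v' \in E$ forces $vv' \in E$. (In your setting, without taking the interval completion, $v$ and $v'$ need not be adjacent even though $u'$ spans both, which is why the completion step matters.) With the right-endpoint rule, every instance of F.\ref{item:f1} is excluded outright, no weight-dependent reordering is needed, and the pseudo-nesting case disappears along with the crossing and nesting cases.
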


\begin{proof}
    We may assume w.l.o.g.\ 
    that $G = (V,E)$ is edge-maximal of pathwidth $p$, i.e., $G$ is an interval graph with clique number $\omega(G) = p+1$~\cite{DBLP:journals/tcs/Bodlaender98}.
    Let $\{I_v = [a_v,b_v]\}_{v \in V}$ be an interval representation of $G$ with distinct interval endpoints.
    In particular $b_u \neq b_v$ for any $u \neq v \in V$.
    
    We take the ordering $v_1,\ldots,v_n$ of vertices given by their increasing right interval endpoints.
    That is, $b_{v_1} < \cdots < b_{v_n}$.
    Crucially, for any $i < j < k$ with $v_iv_k \in E$, also $v_jv_k \in E$.
    
    To define the partition of $E$ into $p+1$ priority queues $\mathcal{P}_1,\ldots,\mathcal{P}_{p+1}$, we consider a proper vertex coloring of $G$ with $\chi(G) = \omega(G) = p+1$ colors.
    This exists, as $G$ is an interval graph.
    Now for $c = 1,\ldots,p+1$ let $\mathcal{P}_c$ be the set of all edges in $G$ whose right endpoint in the vertex ordering has color $c$.
    In fact, each $\mathcal{P}_c$ contains none of the forbidden configurations in \cref{fi:forbidden-configurations}, and hence is a priority queue independent of the edge-weights, since in each forbidden configuration in \cref{fi:forbidden-configurations} there would be an edge in $G$ between the right endpoints $v_r$ and $v_{r'}$ of the two edges $e$ and $e'$ forming the forbidden configuration.
	But this would imply different colors for $v_r$ and $v_{r'}$ and therefore different priority queues for $e$ and $e'$.
\end{proof}

We remark that \cref{cor:k_n} implies that there are graphs with pathwidth $p$ and priority queue number at least $\left\lfloor \frac{3-\sqrt{5}}{8}(p+1)\right\rfloor$ as $K_{p+1}$ has pathwidth $p$.

Next, we shall construct for every integer $p \geq 1$ a graph $G = G_p$ of treewidth~$2$ together with edge weights $w$ so that every vertex ordering $\sigma$ of $G$ contains a $p$-inversion.
If
every vertex ordering of $\langle G,w \rangle$ contains a $p$-inversion, then $\pqn(G,w) \geq p$ and hence $\pqn(G) \geq p$.

\begin{restatable}[\restateref{thm:treewidth-2}]{theorem}{TreewidthTwo}
	\label{thm:treewidth-2}
    For every $p$ there is a graph $G_p$ of treewidth~$2$ and $\pqn(G_p) \geq p$.
\end{restatable}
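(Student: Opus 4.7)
The plan is to exhibit, for each $p \geq 1$, a series-parallel (and hence treewidth-$2$) graph $G_p$ together with an edge-weighting $w$ such that \emph{every} linear vertex ordering $\sigma$ of $V(G_p)$ contains a $p$-inversion in the sense of \cref{sec:basic}. By the observation at the end of that section this immediately gives $\pqn(G_p)\geq\pqn(G_p,w)\geq p$.

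I would define $G_p$ recursively, starting with $G_1$ being a single edge carrying some base weight. To build $G_p$ from $G_{p-1}$, pick a sufficiently large integer $N=N(p)$ and let $G_p$ be obtained by taking $N$ internally vertex-disjoint copies of a small series-parallel gadget that contains an embedded copy of $G_{p-1}$, and gluing all of these copies in parallel at two fresh terminal vertices $s_p$ and $t_p$. Since series and parallel composition preserve treewidth at most $2$, $G_p$ is again series-parallel. The weight function is then assigned layer by layer: the edges introduced at stage $p$ receive weights strictly larger than every weight used at stages $1,\dots,p-1$, and across the $N$ copies of stage $p$ the new weights form a strictly monotone sequence, leaving enough room for an Erd\H{o}s--Szekeres type argument later on.

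The lower bound is proved by induction on $p$. The base case $p=1$ is immediate. For the inductive step, fix an arbitrary ordering $\sigma$ of $V(G_p)$. Each of the $N$ parallel copies places its vertex set into $\sigma$ in one of only \emph{constantly many} combinatorial patterns relative to the positions of $s_p$ and $t_p$, so by pigeonhole a large sub-family $\mathcal{I}$ of copies share the same pattern. By the inductive hypothesis, each copy in $\mathcal{I}$ harbours a $(p-1)$-inversion in the restriction of $\sigma$ to its vertices. Using the monotone stage-$p$ weights across $\mathcal{I}$ together with an Erd\H{o}s--Szekeres argument, one then extracts a single copy whose $(p-1)$-inversion lines up -- in right-endpoint order, in the position of the common left endpoint, and in weight -- with one of the outer stage-$p$ edges incident to that copy, which extends the $(p-1)$-inversion to a $p$-inversion as required.

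The main obstacle will be the extension step: one has to check that for \emph{some} pattern the outer stage-$p$ edge indeed has its right endpoint strictly to the right of every right endpoint of the inherited $(p-1)$-inversion, its left endpoint at most as far right as the leftmost endpoint of that inversion, and strictly larger weight than all weights appearing in it. This forces a careful case analysis together with a gadget and weighting scheme robust enough to accommodate every possible pattern at once. The treewidth-$2$ constraint is precisely what makes this delicate: more direct constructions (such as $K_{p,p}$ or $K_{3,n}$) would immediately yield the $p$-inversion but already exceed treewidth $2$, so the entire argument has to be realised inside a series-parallel skeleton.
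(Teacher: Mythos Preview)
Your high-level strategy---recursive series-parallel construction plus pigeonhole and Erd\H{o}s--Szekeres---matches the paper's, but the extension step as you state it cannot work. Recall that in a $p$-inversion $(e_1,\ldots,e_p)$ the weights satisfy $w(e_1)<\cdots<w(e_p)$ while the right endpoints satisfy $v_p\prec\cdots\prec v_1$. Hence an edge that is \emph{heavier} than everything in the inherited $(p-1)$-inversion must play the role of $e_p$ and therefore have the \emph{leftmost} right endpoint, not the rightmost. The configuration you describe (largest weight, rightmost right endpoint, left endpoint to the left of everything) is precisely a heavy edge nesting over lighter ones, which is \emph{allowed} in a single priority queue and does not extend the inversion. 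In the paper the outer stage-$k$ edges are given the \emph{smallest} weight (namely $0$), so that the outer edge becomes $e_1$, with right endpoint the terminal~$v'$; the recursive $H_{k-1}$ copies carry strictly positive weights in pairwise disjoint intervals $[i,i+1)$.

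Two further ingredients are missing. First, your claim that each copy realises one of only ``constantly many combinatorial patterns relative to $s_p$ and $t_p$'' is not justified: a copy has arbitrarily many vertices, and their interleaving with the terminals is not bounded. The paper sidesteps this via a separate reduction (their \cref{claim:wlog-left-growing}): by stacking $p^2$ clones of every vertex with carefully perturbed weights, they ensure that either some batch of clones all lie right of both parents (and Erd\H{o}s--Szekeres already yields a $p$-inversion there), or one may assume the ordering is \emph{left-growing}, i.e., no vertex lies right of both its parents. Only under this assumption does the pigeonhole/monotonicity argument on the anchor vertices $a_i$ go through. Second, the extension itself is a genuine dichotomy you do not mention: either some copy's $(k-1)$-inversion lies entirely to the right of the chosen anchor $a_{\hat{\imath}}$ (and then the outer edge $a_{\hat{\imath}}v'$ prepends to it), \emph{or} every copy's first edge has its left endpoint left of $a_{\hat{\imath}}$, in which case the $k$-inversion is assembled from $k-1$ first edges taken from \emph{different} copies (whose right endpoints $a_i$ are monotone by Erd\H{o}s--Szekeres and whose weights lie in disjoint intervals), together with $a_{\hat{\imath}}v'$. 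Your sketch only covers the first branch.
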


\begin{proof}[Proof Sketch]
    Let $p \geq 1$ be a fixed integer.
    Our desired graph $G_p$ will be a \emph{rooted $2$-tree}, i.e., an edge-maximal graph of treewidth~$2$ with designated root edge, which are defined inductively through the following construction sequence:
    \begin{itemize}
        \item A single edge $e^* = v'v''$ is a $2$-tree rooted at $e^*$.
        \item If $G$ is a $2$-tree rooted at $e^*$ and $e = uv$ is any edge of $G$, then adding a new vertex $t$ with neighbors $u$ and $v$ is again a $2$-tree rooted at $e^*$.
        We say that $t$ is \emph{stacked onto~$uv$}.
    \end{itemize} 
    
    We define a vertex ordering $\sigma$ of a $2$-tree rooted at $e^* = v'v''$ to be \emph{left-growing} if no vertex (different from $v'$, $v''$) appears to the right of both its parents.
    First we show that it is enough to force a $p$-inversion (and hence $\pqn(G) \geq p$) in every left-growing vertex ordering.
    
    \begin{restatable}[\restateref{claim:wlog-left-growing}]{claim}{LeftGrowing}
    	\label{claim:wlog-left-growing}
        If $\langle G,w \rangle$ is an edge-weighted rooted $2$-tree so that every left-growing vertex ordering contains a $p$-inversion, then there exists an edge-weighted rooted $2$-tree $\langle \tilde{G},\tilde{w} \rangle$ so that every vertex ordering contains a $p$-inversion.
    \end{restatable}
    
    \begin{claimproof}[Proof Sketch]
        Intuitively, whenever in the construction sequence of $G$ a vertex $t$ is stacked onto an edge $uv$, in $\tilde{G}$ we instead stack $p^2$ ``copies'' of $t$ called $t_1,\ldots,t_{p^2}$ onto $uv$ (treating each $t_i$ like $t$ henceforth).
        The weights $\tilde{w}(ut_i)$ and $\tilde{w}(vt_i)$, $i=1,\ldots,p^2$, are chosen very close to $w(ut)$ and $w(vt)$ but increasing with $i$ for $ut_i$ and decreasing with $i$ for $vt_i$.
        This way, if all $p^2$ copies of $t$ lie right of $u$ and $v$, we get a $p$-inversion among the $ut_i$'s or $vt_i$'s.
    \end{claimproof}

	\begin{figure}[t]
		\centering
		\includegraphics{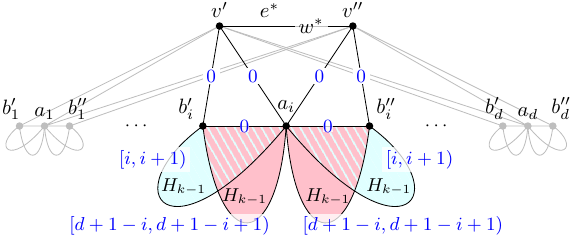}
		\caption{Illustration of the $2$-tree $H_k$ with edge-weighting $w_k$, for $k\geq2$.}
		\label{fig:2-trees}
	\end{figure}

    We then construct a sequence $\langle H_1,w_1 \rangle, \langle H_2,w_2 \rangle, \ldots$ of edge-weighted rooted $2$-trees, so that for every $k$, every left-growing vertex ordering of $\langle H_k,w_k \rangle$ contains a $k$-inversion.
    The construction of $\langle H_k,w_k \rangle$ is recursive, starting with $\langle H_1,w_1 \rangle$ being just a single edge of any weight.
    For $k \geq 2$, we start with $d=k^2$ vertices $a_1,\ldots,a_d$ stacked onto the root edge $e^* = v'v''$, and stacking a vertex $b'_i$ onto each $v'a_i$, as well as a vertex $b''_i$ onto $v''a_i$ for $i=1,\ldots,d$.
    All these edges have weight~$0$.
    Then, each $b'_ia_i$ and $b''_ia_i$ is used as the base edge of two separate copies of $\langle H_{k-1},w_{k-1} \rangle$, where however the edge-weights $w_{k-1}$ of each copy are scaled and shifted to be in a specific half-open interval $[x,y) \subset \mathbb{R}$ depending on the current base edge ($b'_ia_i$ or $b''_ia_i$).
    See \cref{fig:2-trees} for an illustration.

    By induction, there is a $(k-1)$-inversion in each such copy of $H_{k-1}$.
    Using the pigeon-hole principle, and the Erd\H{o}s--Szekeres theorem, we then identify an edge $a_iv'$ (or $a_iv''$) which forms a $k$-inversion either together with $(k-1)$-inversion of one copy of $H_{k-1}$, or with $k-1$ edges, each from the $(k-1)$-inversion of a different copy of $H_{k-1}$.
    
    To finish the proof, it is then enough to take $\langle H_p,w_p \rangle$ as constructed above, and then apply \cref{claim:wlog-left-growing} to obtain the desired edge-weighted $2$-tree $\langle G_p,w_p \rangle$ with $\pqn(G_p,w_p) \geq p$.
\end{proof}

As graphs of treewidth $2$ are always planar graphs, we conclude with the following result that contrasts similar results for the queue and stack number~\cite{DBLP:journals/jacm/DujmovicJMMUW20,DBLP:journals/jcss/Yannakakis89}:

\begin{corollary}
The priority queue number of planar graphs is unbounded. In particular, there is a planar graph $G$ with $n$ vertices and $\pqn(G) \geq \log_4 n$.
\end{corollary}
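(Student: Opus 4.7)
The qualitative part follows directly from Theorem~\ref{thm:treewidth-2}: graphs of treewidth at most~$2$ are planar (any edge-maximal such graph is series-parallel, hence planar), so the theorem already produces, for every $p$, a planar graph $G_p$ with $\pqn(G_p) \geq p$, so the priority queue number of the class of planar graphs is unbounded.

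For the quantitative refinement, my plan is to count the number of vertices of the graph $G_p$ constructed in Theorem~\ref{thm:treewidth-2} and invert the resulting bound to express $p$ as a function of $n = |V(G_p)|$. Writing $h_k = |V(H_k)|$, the recursion underlying the theorem gives $h_1 = 2$ and, for $k\geq 2$, $h_k = 2 + 3k^2 + 2k^2(h_{k-1}-2)$, accounting for the $k^2$ stacked vertices $a_i$, the $2k^2$ vertices $b'_i,b''_i$, and the $2k^2$ copies of $H_{k-1}$ each sharing its two base vertices with the gadget. The transformation from Claim~\ref{claim:wlog-left-growing} contributes a further multiplicative factor of at most $p^2$ when passing from $H_p$ to $G_p$, so $|V(G_p)| \leq p^2 h_p$. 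Combined with $\pqn(G_p) \geq p$, inverting the resulting bound on $|V(G_p)|$ yields the desired lower bound on $\pqn$ in terms of $n$. To cover arbitrary values of $n$ I would pad $G_p$ with $n - |V(G_p)|$ isolated vertices, which never increase the priority queue number.

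The technical crux lies in extracting the clean exponent $\log_4 n$ from this counting. Concretely, one needs an upper bound of the form $|V(G_p)| \leq 4^p$ along an infinite subsequence of $p$, which calls for either a careful solution of the recurrence or a modest refinement of the construction, for instance by trimming those vertices that are not actually used by the forced $p$-inversion and then reassembling the pruned pieces. Once this counting step is dispatched, the corollary follows by direct substitution: for each admissible $p$, the graph $G_p$ (possibly padded with isolated vertices) is a planar graph on some $n \leq 4^p$ vertices with $\pqn(G_p) \geq p \geq \log_4 n$, as required.
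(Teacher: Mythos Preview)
Your qualitative argument is correct and matches the paper: $2$-trees are planar, so Theorem~\ref{thm:treewidth-2} already yields planar graphs of unbounded priority queue number.

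For the quantitative claim your plan---count the vertices of the constructed graph and invert---is exactly the paper's approach. The paper's proof is a single sentence: ``$H_k$ contains $4$ copies of $H_{k-1}$'' and $H_1$ has $2$ vertices, hence $G := H_p$ has at least $4^p$ vertices. You have in fact been more careful than the paper here, and the gap you flag is real:
\begin{itemize}
\item The construction attaches $4d$ copies of $H_{k-1}$ at level~$k$ (with $d = k^2$ in the sketch, $d = 2k^2$ in the appendix), not $4$. Your recurrence is right up to a factor of two---each of the $2d$ edges $b'_ia_i,\,b''_ia_i$ carries \emph{two} copies of $H_{k-1}$---and it solves to $h_p = p^{\Theta(p)}$, far above~$4^p$.
\item The blow-up in Claim~\ref{claim:wlog-left-growing} is not a single factor~$p^2$ as you write: a vertex at construction depth~$j$ in $H_p$ spawns $p^{2j}$ copies in $\tilde{G}$, and $H_p$ has depth $\Theta(p)$, contributing another factor $p^{\Theta(p)}$.
\item The paper sets $G := H_p$, but $H_p$ is only shown to force a $p$-inversion in \emph{left-growing} orderings; the graph with $\pqn \geq p$ unconditionally is $\tilde{G}$. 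And for $\pqn(G) \geq \log_4 n$ one needs an \emph{upper} bound $n \leq 4^p$; the paper's ``at least $4^p$'' points the wrong way.
\end{itemize}
With honest counting the construction only supports $\pqn(G) \geq c\,\log n / \log\log n$. Reaching the stated $\log_4 n$ would require the kind of refinement you allude to (constant branching while rescuing the pigeonhole and Erd\H{o}s--Szekeres steps), and neither your sketch nor the paper's proof actually carries this out.
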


\begin{proof}
We observe that $H_k$ in the proof of \cref{thm:treewidth-2} contains $4$ copies of $H_{k-1}$ while $H_1$ contains $2$ vertices. Thus, there are at least $4^p$ vertices in $G := H_p$, which has treewidth~$p$.
\end{proof}

\section{Complexity of Fixed-Ordering PQ-Layouts}
\label{sec:complexity}

In the light of \cref{cor:recognition}, one might wonder whether
deciding if a given graph~$G$ has priority queue number~$k$ is polynomial-time solvable for all values of~$k$.
However, we show that it is \NP-complete if the ordering of the vertices is already fixed.


\begin{theorem}
	\label{thm:npc-fixed-order}
	Given an edge-weighted graph $\langle G, w \rangle$ with $G=(V,E)$,
	a linear ordering $\sigma$ of $V$, 
	and a positive integer~$k$, it is \NP-complete to decide
	whether $\langle G, w \rangle$ admits a PQ-layout with linear ordering $\sigma$ and a partitioning of $E$ into $k$ priority queues.
\end{theorem}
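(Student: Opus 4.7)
The plan is to show membership in \NP{} by a standard verification argument and then to establish \NP-hardness by a polynomial-time reduction from chromatic number of circle graphs, which is \NP-hard when $k$ is part of the input.

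\textbf{Membership in NP.} Given a partition of $E$ into $k$ sets $\mathcal{P}_1,\ldots,\mathcal{P}_k$, we verify in $O(|E|^2)$ time that no page contains two edges forming the forbidden configuration F.\ref{item:f1}, by enumerating all pairs of edges within each page. Hence the partition itself is a polynomial-size certificate.

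\textbf{NP-hardness via circle graph coloring.} A circle graph can be represented as the ``overlap graph'' of a set of intervals $I_1 = [a_1,b_1],\ldots,I_n = [a_n,b_n]$ with pairwise distinct endpoints on a line, where two intervals \emph{cross} iff they properly overlap (neither contains the other, but they share interior points). Deciding whether such an instance admits a $k$-coloring in which crossing intervals receive distinct colors is known to be \NP-hard when $k$ is part of the input. I would reduce this to our problem as follows. Take as vertex set of~$G$ the $2n$ interval endpoints, arranged along the spine in the order of their positions (which defines~$\sigma$). For each interval $I_i$, add the edge $e_i$ connecting its two endpoints and set $w(e_i) = -a_i$, with arbitrarily small perturbations if needed to guarantee distinct weights. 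The correctness of the reduction rests on the claim that, under this weight assignment, two edges $e_i, e_j$ form F.\ref{item:f1} if and only if $I_i$ and $I_j$ cross. Assuming $a_i < a_j$: (i)~if $I_i \cap I_j = \emptyset$, then $e_i$ and $e_j$ are never co-active; (ii)~if $I_j \subset I_i$, then $e_j$ ends before $e_i$ but $w(e_j) = -a_j < -a_i = w(e_i)$, so the weight inequality of F.\ref{item:f1} points the wrong way; and (iii)~if $I_i$ and $I_j$ properly cross, then $e_i$ ends first and $w(e_i) = -a_i > -a_j = w(e_j)$, exactly matching F.\ref{item:f1}. Thus the conflict graph of the constructed PQ-layout instance is precisely the given circle graph, so $\langle G, w, \sigma \rangle$ admits a PQ-layout with $k$ priority queues iff the circle graph is $k$-colorable.

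The main delicacy of this plan lies in the boundary cases, namely shared endpoints and coincident weights, which interact with the definitions of ``cross'' and of F.\ref{item:f1}; I would absorb them via a general-position argument that perturbs weights and positions infinitesimally without breaking the characterization of the conflict graph as the circle graph. I expect this to be the most technical (but routine) step; once it is in place, the reduction and its correctness follow directly from the three-case weight analysis above.
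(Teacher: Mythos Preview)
Your proof is correct and takes a genuinely different route from the paper's. The paper reduces from \emph{circular-arc graph} coloring: it cuts the circle, turns each arc into one or two intervals, and assigns weights in decreasing order of right endpoints so that any two intervals sharing a point conflict. Because some arcs are split in two, the paper then needs an elaborate gadget (synchronization edges plus a triangular block of ``heavy'' edges on each side) together with a pigeonhole argument to force the two halves of a split arc into the same priority queue. Your reduction from \emph{circle graph} coloring sidesteps all of this: by choosing $w(e_i)=-a_i$ you make the forbidden configuration F.\ref{item:f1} coincide \emph{exactly} with proper overlap of intervals, so the conflict graph of your PQ-instance \emph{is} the given circle graph, with no gadgets required. (Amusingly, both reductions trace back to the same source problem: Garey, Johnson, Miller, and Papadimitriou establish \NP-hardness of coloring for circular arcs \emph{and} for chords, i.e., circle graphs.) The trade-off is that the paper's weight scheme makes every overlap a conflict, which is conceptually blunt but forces the circular-arc detour; your weight scheme is sharper and yields a cleaner, shorter argument. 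Your worry about boundary cases is overstated: since circle graphs come with interval models having pairwise distinct endpoints, all $a_i$ are already distinct, so no perturbation is needed.
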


%
%

\begin{proof}
	Containment in \NP{} is clear as we can check in polynomial time
	whether a given assignment of edges to $k$ priority queues
	corresponds to a valid PQ-layout.
	
	We show \NP-hardness by reduction from the problem
	of coloring circular-arc graphs, which is known
	to be \NP-complete if $k$ is not fixed~\cite{DBLP:journals/siammax/GareyJMP80}.
	A circular-arc graph is the intersection graph
	of arcs of a circle (see \cref{fig:nph-fixed-order} on the left),
	that is, the arcs are the vertices and two vertices share
	an edge if and only if the corresponding arcs share
	a point along the circle.
	
	For a given circular-arc graph~$H$, we next describe
	how to obtain an edge-weighted graph $\langle G, w \rangle$
	and a vertex ordering~$\sigma$
	such that a $k$-coloring of $H$ directly corresponds
	to a PQ-layout of $\langle G, w \rangle$
	under~$\sigma$ with $k$ priority queues.
	Without loss of generality, we assume
	that we have a circular-arc representation of~$H$
	where all endpoints of the arcs are distinct.
	
	We ``cut'' the circle at some point,
	which gives an interval representation
	where some vertices $S$ of~$H$ refer to two intervals;
	in \cref{fig:nph-fixed-order}, the circular arcs
	$a$, $b$, and $d$ are cut into intervals $a_1$ and $a_2$, $b_1$ and $b_2$, and $d_1$ and $d_2$, respectively.
	We let the endpoints of all intervals be the vertices of $G$,
	whose ordering $\sigma$ along the spine is taken from the
	ordering of the corresponding endpoints of the intervals
	(for the intervals belonging to $S$, the order is arbitrary).
	Further, for each interval, we have an edge in~$G$
	connecting its two endpoints.
	To obtain~$w$, we assign to the edges of~$G$ the numbers 1 to $m$,
	where $m = |V(H)| + |S|$, in decreasing order
	of the right endpoints of the intervals.
	This way, every pair of edges whose corresponding intervals share a common
	point need to be assigned to distinct priority queues.
	
	\begin{figure}[t]
		\centering
		\includegraphics[scale=.95]{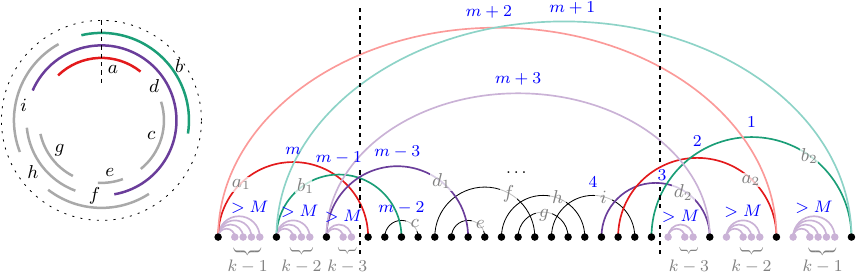}
		\caption{Illustration of our reduction.}
		\label{fig:nph-fixed-order}
	\end{figure}
	
	So far, there is no mechanism that assures that $a_1$ and~$a_2$,
	$b_1$ and $b_2$, etc., are assigned to the same priority queue.
	To this end, we add, for each such pair $\langle a_1, a_2 \rangle$,
	an edge from the left endpoint of~$a_1$ to the right endpoint of~$a_2$.
	We call these $|S|$ new edges \emph{synchronization edges},
	and we assign them the weights $m+1, \dots, m+|S|$
	in decreasing order of their right endpoints.
	Furthermore, we add $k-1, k-2, \dots$ \emph{heavy} edges
	below the leftmost, second leftmost, \dots~vertex on the spine, respectively,
	as well as, $k-1, k-2, \dots$ heavy edges
	below the rightmost, second rightmost, \dots~vertex on the spine, respectively,
	see \cref{fig:nph-fixed-order}.
	Clearly, $k \ge |S|$, otherwise we have a no-instance.
	The heavy edges have distinct right endpoints and their weights are chosen
	such that (i)~they are greater than $M$,
	where $M = m + |S|$ is the largest weight used so far, and
	(ii)~they are chosen in decreasing order
	from left to right such that each pair of heavy edges needs
	to be assigned to distinct priority queues if they overlap.
	
	\begin{restatable}[\restateref{claim:pair-same-queues}]{claim}{PairSameQueues}
		\label{claim:pair-same-queues}
		Let $\Gamma$ be a PQ-layout of $\langle G, w \rangle$
		under vertex ordering~$\sigma$ having $k$ priority queues.
		For each vertex $a \in S$ ($\subseteq V(H)$),
		the two corresponding edges $a_1$ and $a_2$ in~$G$
		are assigned to the same priority queue in~$\Gamma$.
	\end{restatable}
	\begin{claimproof}[Proof Sketch]
		Consider the vertices of $G$ from the outside to the inside.
		At the outer endpoint of $a_1$ ($a_2$, resp.), the edge $a_1$ ($a_2$, resp.)
		and the synchronization edge get the same color
		by the pigeonhole principle because
		all but one priority queue is occupied by the heavy edges
		and the previously considered edges,
		which all induce pairwise forbidden configurations.
	\end{claimproof}
	
	If we have a PQ-layout of $\langle G, w \rangle$
	under vertex ordering~$\sigma$ with $k$ priority queues,
	we obtain, due to the choice of $w$ and $\sigma$,
	a coloring of~$H$ by using the priority queues as colors.
	In particular, due to \cref{claim:pair-same-queues},
	both parts of the ``cut'' circular arcs get the same color.
	Conversely, we can easily assign the edges of $\langle G, w \rangle$
	to $k$ priority queues if we are given a $k$-coloring of~$H$.
	Clearly,
	our reduction can 
	be implemented to run in polynomial time.
\end{proof}

\section{Open Problems}\label{sec:conclusions}



We conclude by summarizing a few intriguing problems that remain open.
(i) Improve the upper and lower bounds for $\pqn(K_n)$.
(ii) Determine the 
complexity of deciding $\pqn(G) \leq k$ for $k > 1$, or deciding $\pqn(G,w) \leq k$ for a given edge weighting~$w$ when $k$ is a constant
or no vertex ordering is given.
(iii) Find graph families with bounded priority queue number; what about outerplanar graphs?
(iv) Can the edge density of a graph~$G$ be upper bounded by a term in $\pqn(G)$?
(v) One can also investigate the structure of edge-weighted graphs $\langle G,w\rangle$ with a fixed vertex-ordering $\prec$.
Edge-partitions of $G$ into $k$ priority queues are equivalent to proper $k$-colorings of an associated graph $H$, while $t$-inversions correspond to $t$-cliques in $H$.
Is $\chi(H)$ bounded by a function in terms of $\omega(H)$?
Is determining $\omega(H)$ \NP-complete?

\bibliography{literature}

\clearpage
\appendix
\section*{Appendix}

\section{Omitted Proofs of \cref{sec:complete}}

\Knn*
\label{thm:k_nn*}

\begin{proof}
	Let $A$ and $B$ denote the two independent sets of size $n$.
	We partition the edge set into $n$ perfect matchings $M_1,\ldots,M_n$.
	For each edge $e \in M_i$ we set $w(e)=i$.
	Thus, each vertex is incident to exactly one edge of weight $i$ for $i \in \{1,\ldots,n\}$.
	
	Let $\Gamma$ be any PQ-layout of $K_{n,n}$.
	We first show that there exists a subgraph $K_{\frac{n}{2},\frac{n}{2}}$ with independent sets $A' \subseteq A$ and $B' \subseteq B$ of size $\frac{n}{2}$ such that the restriction $\Gamma'$ of $\Gamma$ to the subgraph induced by $A' \cup B'$ is a \emph{separated} PQ-layout.\footnote{%
		For simplicity, we assume that $n$ is even.}
	Let $v_1,\ldots,v_t$ be the first $t$ vertices in the linear ordering of $\Gamma$, such that $|\{v_1,\ldots,v_t\} \cap A| = \frac{n}{2}$ and $t$ is minimum.
	If $|\{v_{t+1},\ldots,v_{2n}\}\cap B| \geq \frac{n}{2}$, we set $A'=\{v_1,\ldots,v_t\} \cap A$ and $B'=\{v_{q},\ldots,v_{2n}\}\cap B$ where $q\geq t+1$ is chosen so that $|B'|=\frac{n}{2}$.
	Otherwise, by the choice of $t$, we have that $|\{v_{t+1},\ldots,v_{2n}\} \cap A| = \frac{n}{2}$, and, by the pigeon-hole principle, also $|\{v_1,\ldots,v_t\} \cap B| > \frac{n}{2}$.
	In this case, we choose $B'=\{v_1,\ldots,v_p\} \cap B$ where $p<t$ is chosen so that $|B'|=\frac{n}{2}$ and $A'=\{v_{t+1},\ldots,v_{2n}\} \cap A$.
	This proves the existence of~$\Gamma'$.
	
	
	We now focus on $\Gamma'$.
	Without loss of generality, we can assume that $A'$ precedes $B'$ in the linear ordering of $\Gamma'$.
	Let $b_1,\ldots,b_{\frac{n}{2}}$ denote the linear ordering of the vertices of $B'$.
	Since $\Gamma'$ is a separated PQ-layout and since each of its edges has an endpoint in $A'$ and an endpoint in $B'$, 
	all edges of $\Gamma'$ are co-active at~$b_1$.
	As $K_{\frac{n}{2},\frac{n}{2}}$ is a subgraph of $K_{n,n}$, each vertex in $B'$ is incident to exactly $\frac{n}{2}$ different edge weights occurring in the range $[1,n]$. Moreover, an edge $e$ incident to $b_i$ and an edge $e'$ incident to $b_j$ form a forbidden configuration if $i < j$ but $w(e) > w(e')$.
	
	\begin{figure}
		\centering
		\begin{subfigure}[t]{0.26\textwidth}
			\centering
			\includegraphics[page=1]{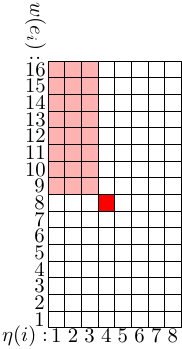}
			\subcaption{The dark red cell represents the edge of weight~8,
				having vertex $b_4$ as its right endpoint.
				The red shaded cells represent the region~$\mathcal{R}(e_i)$.}
			\label{fig:k_nn:1}
		\end{subfigure}
		\hfill
		\begin{subfigure}[t]{0.31\textwidth}
			\centering
			\includegraphics[page=2]{k_nn}
			\subcaption{Let the black and red cells be set to~1.
				The red cells constitute a strictly monotonically decreasing path.
				Together, they represent a $k$-inversion in the corresponding edge-weighted graph.}
			\label{fig:k_nn:2}
		\end{subfigure}
		\hfill
		\begin{subfigure}[t]{0.26\textwidth}
			\centering
			\includegraphics[page=3]{k_nn}
			\subcaption{Let the black, red, and blue cells be set to~1.
				The red cells represent the candidates for edge~$e_1$, while the blue edges represent the candidates for edge~$e_2$.}
			\label{fig:k_nn:3}
		\end{subfigure}
		\caption{Illustrations for the proof \cref{thm:k_nn}.
			In this matrix of size $n \times \frac n2$,
			all possible edge weights appear in the rows and the
			indices of the vertices $b_1, \dots, b_{\frac{n}{2}}$ appear in the columns.}
		\label{fig:k_nn}
	\end{figure}
	
	We now show that for some $k$ there is a $k$-inversion,
	that is, there is a sequence of edges $e_1,\ldots,e_k$ such that for $1 \leq i < j \leq k$, $w(i) > w(j)$ and $b_{\eta(i)}<b_{\eta(j)}$ where $\eta(i)$ is the index of the endpoint of edge $e_i$ in $B'$.
	Another way to see this is by visualizing the relation between $\eta(i)$ and $w(e_i)$ as a binary matrix\footnote{A similar, although not identical, representation is used by Alam et al.~\cite{MDALAM2022131} to prove the mixed page number of $K_{n,n}$.} $M$ of size $n \times \frac{n}{2}$ where $M[w,\eta]=1$ if there is an edge $e_i$ in~$\Gamma'$ with $w(e_i)=w$ and $\eta(i)=\eta$, and $M[w,\eta]=0$ otherwise; see \cref{fig:k_nn}.
	In this view of the problem, for a given edge $e_i$ (see the red cell in \cref{fig:k_nn:1}), forbidden configurations are created with \emph{any edge} whose cell (with value $1$) occurs in the region $\mathcal{R}(e_i)$ to the top-left of the cell of $e_i$ (see the red shaded cells in \cref{fig:k_nn:1}).
	Moreover, each column of the matrix has exactly $\frac{n}{2}$ cells with value $1$ (corresponding to the $\frac{n}{2}$ edges incident to each vertex in $B'$).
	Our desired sequence $e_1,\ldots,e_k$ thus is a \textit{strictly monotonically decreasing path} through the matrix along cells with value $1$; see the red path in \cref{fig:k_nn:2}.
	It is easy to see that each pair of edges along the sequence creates a forbidden configuration, i.e., $k$ priority queues are necessary.
	
	To compute a strictly monotonically decreasing path of maximum length, we proceed iteratively as follows.
	At step~1 we compute a subset of candidate edges (cells) for $e_1$, containing to every cell of an edge~$e$ for which $\mathcal{R}(e)$ is empty (i.e., it does not contain cells of value 1).
	At step~$i$, with $i \geq 2$, we compute a subset of candidate edges (cells) for $e_i$, containing every cells of an edge~$e$ for which $\mathcal{R}(e)$ is empty after all cells of the candidate edges for $e_{j}$ with $j < i$ have been removed (i.e., are set to~0 in the matrix).
	For example, in \cref{fig:k_nn:3}, the red cells correspond to the candidate edges for~$e_1$, while the blue cells correspond to the candidate edges for~$e_2$.  
	We iteratively repeat the process until all $\frac{n^2}{4}$ cells with value $1$ have been set to~0.
	The number~$k$ of steps in this process yields the length of the sequence and hence the required number of priority queues.
	
	We now compute the value of $k$.
	To this end, we first prove that, in step~$i$, the topmost $i-1$ rows and the leftmost $i-1$ columns are empty (i.e., all cells are set to~0).
	The proof is by induction on $i$.
	When $i=1$ the claim is trivially true.
	Assume that the claim is true for $i-1$, with $i \geq 2$.
	For each edge~$e$ whose cell is at column~$i$, the region $\mathcal{R}(e)$ contains only cells at columns that are to the left of column~$i$, which are empty by the inductive hypothesis.
	Hence, all cells at column~$i$ are set to~0 in step~$i$.
	An analogous argument applies for the topmost $i-1$ rows.  
	Moreover, denote by $M'$ the sub-matrix of $M$ obtained from~$m$ after removing the $i-1$ leftmost columns and the $i-1$ topmost rows.
	In each diagonal of $M'$ of slope $-1$, at most one non-empty cell has an empty region to the top-left.
	Observe that $M'$ has $n-i+1$ rows and $\frac{n}{2}-i+1$ columns; thus the number of such diagonals is $\frac{3n}{2}-2i+1$ and the number of cells set to~0 at step~$i$ is at most $\frac{3n}{2}-2i+1$. 
	The number of iterations $k$ is the minimum integer that satisfies the following inequality:
	
	\begin{equation*}
	\sum_{i=1}^k\left(\frac{3n}{2}-2i+1\right) \geq  \frac{n^2}{4}.
	\end{equation*}
	
	We have:
	
	\begin{equation*}
	\sum_{i=1}^k\left(\frac{3n}{2}-2i+1\right) = \left(\frac{3n}{2}+1\right)k - 2 \sum_{i=1}^k i = \left(\frac{3n}{2}+1\right)k - k^2 - k = \frac{3n}{2}k - k^2 
	\end{equation*}
	
	and, hence,
	
	\begin{equation*}
	k^2 - \frac{3n}{2}k + \frac{n^2}{4} \leq 0.
	\end{equation*}
	
	Equality is reached for 
	\begin{equation*}
	k_{1,2}=\frac{3n}{4} \pm \sqrt{\frac{9n^2}{16}-\frac{4n^2}{16}}=\frac{3\pm \sqrt{5}}{4}n.
	\end{equation*}
	
	This means that the inequality is satisfied for every value in the interval $I=[k_1,k_2]$.
	Since $k$ must be the minimum integer value in $I$, we have $k = \left\lceil \frac{3 - \sqrt{5}}{4}n \right\rceil$. 
	
\end{proof}

\section{Omitted Proofs of \cref{sec:pqn1}}
\label{app:pqn1}

\Cycle*
\label{lem:cycle*}

\begin{proof}
	For a leaf $\ell$, let $p(\ell)$ denote its neighbor in $C(L)$ and let $e^\star$ be an edge of $C(L)$ such that $w(e^\star)$ is maximum among all edges of $C(L)$.
	We remove $e^\star$ and every leaf~$\ell$ with $w({\ell}p(\ell))> w(e^\star)$ from $L$ obtaining a caterpillar $T$ where both degree-$1$ vertices of the underlying path $P(T)$ are endpoints of $e^\star$.
	We then find a PQ-layout of~$T$ using \cref{lem:caterpillar} chosing one of the endpoints of $e^\star$ as $r$.
	Note that for each edge $e$ in $T$ we have that $w(e^\star) \ge w(e)$.
	While~$e^\star$ may be co-active with edges in $e$, we have that its right endpoint succeeds any vertex in $T$, i.e., no forbidden configuration is created when we reinsert~$e^\star$.
	It remains to reinsert each leaf~$\ell$ with $w({\ell}p(\ell))> w(e^\star)$.
	Let $S$ denote the set of such leaves and for $\ell \in S$ let $w(\ell)=w({\ell}p(\ell))$.
	We insert all vertices of $S$ to the right of $r$ sorted in increasing order of weight.
	It remains to discuss that for $\ell \in S$, edge ${\ell}p(\ell)$ is not involved in forbidden configurations.
	Let $uv$ with $u \prec v$ be an co-active edge of ${\ell}p(\ell)$.
	If $v \not \in S$, we have that $v \prec \ell$ and $w(uv) \leq w(e^\star) <w({\ell}p(\ell))$, i.e., no forbidden configuration occurs.
	Otherwise, $v$ and $\ell$ are sorted by their weights in $\prec$ which are equal to $w(uv)$ and  $w({\ell}p(\ell))$, respectively.
	Again no forbidden configuration occurs and the statement follows.
	
	Finding and removing a maximum-weight edge of the cycle and all
	edges with greater weight can be done in~$\mathcal{O}(n)$ time.
	Arranging the resulting graph using \cref{lem:caterpillar} as well.
	For the subsequent reinsertion of edges,
	we need to sort the leaves by weight.
	This requires and can be done in $\mathcal{O}(n \log n)$ time,
	which is also the resulting running time of the algorithm.
\end{proof}

\TwoCaterpillars*
\label{lem:two-caterpillars*}

\begin{proof}
    First, consider the case that we have a 3-cycle $\triangle = abc$.
    Suppose that $w(ac) > \max\{w(ab),w(bc)\}$.
    Then, arrange the vertices of~$\triangle$ in order $a \prec b \prec c$.
    Clearly, this does not lead to any forbidden configuration of~$\triangle$.
    Lay out $C_a$ using \cref{lem:caterpillar} such that $a$ is the rightmost vertex of~$C_a$.
    Lay out $C_c$ using \cref{lem:tree} such that $c$ is the leftmost vertex of~$C_c$.
    Combine the PQ-layouts of $\triangle$, $C_a$, and $C_c$ such
    that all vertices of~$C_a$ precede all vertices of $\triangle$ (except for~$a$) and
    that all vertices of~$C_c$ succeed all vertices of $\triangle$ (except for~$c$).
    Clearly, this is a valid layout with priority queue number 1
    because the edge of $C_c$ and $\triangle$ are not co-active.
    
    Otherwise, there is an edge in~$\triangle$ that is at least as heavy as~$ac$.
    Without loss of generality, let $w(ab) \geq \max\{w(ac),w(bc)\}$.
    We handle this case as if we had a 4-cycle $\Box$ where we can safely ignore vertex~$d$
    (having the edge~$ac$ instead of a path via the edges~$ad$ and $cd$).
    It is easy to see that the rest of the proof applies to this case.
    
    Second, consider the case that we have a 4-cycle $\Box = abcd$.
	Assume, without loss of generality, that $w(ab)\geq \max\{w(ad),w(bc),w(cd)\}$.
	Consider the edge $e_c$ incident to~$c$ on $P(C_c)$.
	Assume first, that $w(e_c) \leq w(ab)$.
	For $\Box$, we use the layout $b \prec c \prec d \prec a$.
	Note that only the edge $ab$ nests and pseudo-nests edges of $\Box$, however it is the heaviest edge in $\Box$.
	Further, we use \cref{lem:tree} for laying out $C_a$ with leftmost vertex~$a$.
	The resulting PQ-layout of $G$ without $C_c$ clearly has priority queue number~1.
	Now, consider the caterpillar $C_c'$ obtained from $C_c$ by removing the leaves of~$c$.
	We lay out $C_c'$ using \cref{lem:caterpillar} and place all vertices of $C_c'$ except for $c$ before vertex $b$ in~$\prec$.
	Thus, there is one intersection between $C_c'$ and $\Box$, namely, between $e_c$ and~$ab$.
	We have that $c \prec a$ and $w(e_c) \leq w(ab)$, thus this creates no forbidden configuration. 
	
	Next, assume that $w(e_c) > w(ab)$.
	For $\Box$, we use the layout $a \prec d \prec c \prec b$.
	Again, only edge $ab$ nests and pseudo-nests edges of $\Box$, but still it is the heaviest edge in $\Box$.
	Further, we use  \cref{lem:caterpillar} for laying out $C_a$ with rightmost vertex $a$.
	Once more, the resulting PQ-layout of $G$ without $C_c$ clearly has priority queue number~1.
	Again, consider the caterpillar $C_c'$ obtained from $C_c$ by removing the leaves of $c$.
	We lay out $C_c'$ using \cref{lem:tree} and place all vertices of $C_c'$ except for $c$ after vertex $b$ in $\prec$.
	Thus, there is one intersection between $C_c'$ and~$\Box$, namely, between $e_c$ and $ab$, and one pseudo-nesting, namely between $e_c$ and $bc$.
	However, we have that $w(e_c) > w(ab) \geq w(bc)$, thus this creates no forbidden configuration. 
	
	Finally, in both cases where we consider~$\Box$, we only have to reintroduce the leaves $L_c$ of vertex $c$ in $C_c$ to complete the layout.
	For $\ell \in L_c$, let $w(\ell)=w({\ell}c)$.
	We iteratively insert the leaves of $c$ with increasing weight.
	If $w(\ell)\leq w(ab)$, we place $\ell$ directly before $c$ in $\prec$.
	Then, edge ${\ell}c$ only creates a nesting with edge $ab$, but $ab$ has heavier weight, so no forbidden configuration is introduced.
	Otherwise, $w(\ell) > w(ab)$ and we find a position for $\ell$ as follows.
	Let $C_r$ be the caterpillar whose non-root vertices are drawn to the right of the vertices of $\Box$, i.e., in the first case $C_a$ and in the second case $C_c'$.
	Also let $v^\star$ be the leftmost vertex of the caterpillar $C_r$ with parent $p(v^\star)$ in $C_r$ for which $w(v^{\star}p(v^\star))>w(\ell)$.
	Then, we place $\ell$ directly before $v^\star$ in $\prec$.
	For a contradiction, assume that this creates a forbidden configuration which clearly must involve edge ${\ell}c$ and another edge $uv$ with $u \prec v$ as our procedure ensures that the edges between $c$ and its leaves which pseudo-nest are not creating forbidden configurations.
	Observe that $uv$ is not an edge of $\Box$ as all edge weights of $\Box$ are smaller than the weight of edge ${\ell}c$ and $\ell$ is placed to the right of all vertices of~$\Box$.
	Thus $u = p(v)$ and if $v \prec \ell$ and $w({\ell}c) < w(uv)$, we have a contradiction to our choice for the position of $\ell$.
	Thus, we must have $\ell \prec v$ and $w({\ell}c) > w(uv)$.
	By our choice of the position of $\ell$, we have that $v^\star \prec v$.
	Since also $u \prec \ell$, 
	it follows that $uv$ and $p(v^\star)v^\star$ are co-active, however, this is a contradiction to the fact that $C_r$ is drawn with a priority-queue-number-one layout.
	The proof follows.
\end{proof}

\KTwoThree*
\label{lem:k23*}

\begin{figure}
	\centering
	\begin{subfigure}[t]{0.48\textwidth}
		\centering 
		\includegraphics[page=1]{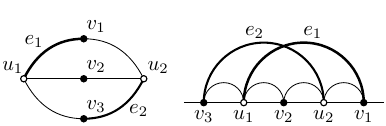}
		\subcaption{Case~1}
		\label{fi:k23-1}
	\end{subfigure}
	\hfill
	\begin{subfigure}[t]{0.48\textwidth}
		\centering 
		\includegraphics[page=2]{K23}
		\subcaption{Case~2.1}
		\label{fi:k23-2-1}
	\end{subfigure}
	\hfill
	\begin{subfigure}[t]{0.48\textwidth}
		\centering 
		\includegraphics[page=3]{K23}
		\subcaption{Case~2.2}
		\label{fi:k23-2-2}
	\end{subfigure}
	\hfill
	\begin{subfigure}[t]{0.48\textwidth}
		\centering 
		\includegraphics[page=4]{K23}
		\subcaption{Case~3}
		\label{fi:k23-3}
	\end{subfigure}
	\caption{Illustration for the proof of \cref{lem:k23}. The thickness of an edge indicates the weight of that edge.}\label{fi:k23}
\end{figure}

\begin{proof}
    Let $G=K_{2,3}$ and consult \cref{fi:k23} for an illustration.
    Let $\{u_1, u_2\}$ be the vertex partition with vertices of degree 3 and let $\{v_1, v_2, v_3\}$ be the vertex partition with vertices of degree 2.
    Let $w$ be any given edge-weight function and denote by $e_1$ the heaviest edge of $\langle G, w \rangle$ and by $e_2$ the second heaviest edge.
    We distinguish three main cases and, for each of them, we describe how to construct a PQ-layout $\Gamma$ such that $\pqn(\Gamma)=1$.
    
	\begin{itemize}
        \item {\bf Case~1: $e_1$ and $e_2$ are disjoint.} Refer to \cref{fi:k23-1}.
        We may assume that $e_1=u_1v_1$ and $e_2=u_2v_3$, and define the vertex ordering as $v_3 \prec u_1 \prec v_2 \prec u_2 \prec v_1$. Since $e_1$ and $e_2$ are the two heaviest edges, there are neither forbidden nestings nor forbidden pseudo-nestings in $\Gamma$. Also, the only two crossing edges are $e_1$ and $e_2$, and $w(e_2) \leq w(e_1)$. Since $e_2$ is pulled from the priority queue before $e_1$, there is no forbidden crossing configuration.
		\item {\bf Case~2: $e_1$ and $e_2$ share a degree-2 vertex.} Let $e_3$ be the third heaviest edge of $\langle G, w \rangle$. We consider two further subcases:
		\begin{itemize}
			\item {\bf Case~2.1: $e_3$ shares a vertex with $e_1$.} Refer to \cref{fi:k23-2-1}. Assume, w.l.o.g., that $e_1=u_1v_3$, $e_2=u_2v_3$, and $e_3=u_1v_1$. The linear ordering of $\Gamma$ is set as $v_1 \prec u_2 \prec v_2 \prec u_1 \prec v_3$. As for Case~1, there are no forbidden (pseudo-)nesting configurations. Also, the only two crossing edges are $e_2$ and $e_3$, and $w(e_3) \leq w(e_2)$. Since $e_3$ is pulled from the queue before $e_2$, there is no forbidden crossing~configuration.
			\item {\bf Case~2.2: $e_3$ shares a vertex with $e_2$.} Refer to \cref{fi:k23-2-2}.  Assume, w.l.o.g., that $e_1=u_1v_3$, $e_2=u_2v_3$, and $e_3=u_2v_1$. We define the linear ordering of $\Gamma$ as $v_1 \prec u_1 \prec v_2 \prec u_2 \prec v_3$. The argument is similar to the previous case.
		\end{itemize}
		\item {\bf Case~3: $e_1$ and $e_2$ share a degree-3 vertex.} Refer to \cref{fi:k23-3}. Assume, w.l.o.g., that $e_1=u_1v_3$ and $e_2=u_1v_1$. We define the linear ordering of $\Gamma$ as $v_3 \prec v_1 \prec u_2 \prec v_2 \prec u_1$. As in the previous cases, for the choice of $e_1$ and $e_2$, there are no forbidden nesting or pseudo-nesting configurations. Also, the only two crossing edges are $e_2$ and the edge $v_3u_2$; since $v_3u_2$ is pulled before $e_2$ from the priority queue, and since $w(v_3u_2) \le w(e_2)$, there is no forbidden crossing configuration. \qedhere
	\end{itemize}
\end{proof}

\KFourMinusE*
\label{lem:k4minusE*}

\begin{figure}
    \centering
    \begin{subfigure}[t]{0.48\textwidth}
        \centering 
        \includegraphics[page=1]{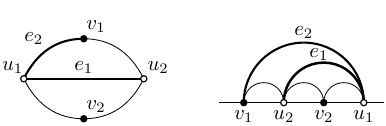}
        \subcaption{Case~1}
        \label{fi:k4minusE-1}
    \end{subfigure}
    
    \begin{subfigure}[t]{0.48\textwidth}
        \centering 
        \includegraphics[page=2]{K4minusE}
        \subcaption{Case~2.1}
        \label{fi:k4minusE-2-1}
    \end{subfigure}
    \hfill
    \begin{subfigure}[t]{0.48\textwidth}
        \centering 
        \includegraphics[page=3]{K4minusE}
        \subcaption{Case~2.2}
        \label{fi:k4minusE-2-2}
    \end{subfigure}
    
    \begin{subfigure}[t]{0.48\textwidth}
        \centering 
        \includegraphics[page=4]{K4minusE}
        \subcaption{Case~2.3}
        \label{fi:k4minusE-2-3}
    \end{subfigure}
    \hfill
    \begin{subfigure}[t]{0.48\textwidth}
        \centering 
        \includegraphics[page=5]{K4minusE}
        \subcaption{Case~2.4.1}
        \label{fi:k4minusE-2-4-1}
    \end{subfigure}
    
    \begin{subfigure}[t]{0.48\textwidth}
        \centering 
        \includegraphics[page=6]{K4minusE}
        \subcaption{Case~2.4.2}
        \label{fi:k4minusE-2-4-2}
    \end{subfigure}
    \hfill
    \begin{subfigure}[t]{0.48\textwidth}
        \centering 
        \includegraphics[page=7]{K4minusE}
        \subcaption{Case~2.4.3}
        \label{fi:k4minusE-2-4-3}
    \end{subfigure}
    \caption{Illustration for the proof of \cref{lem:k4minusE}. The thickness of an edge indicates the weight of that edge.}\label{fi:k4minusE}
\end{figure}

\begin{proof}
    Let $G$ be $K_4$ minus an edge and consult \cref{fi:k4minusE} for an illustration.
    Let $\{u_1, u_2\}$ be the vertices of degree 3 and let $\{v_1, v_2\}$ be the vertices of degree 2.
    Let $w$ be any given edge-weight function and denote by $e_1$ the heaviest edge of $\langle G, w \rangle$,
    by $e_2$ the second heaviest edge, and by $e_3$ the third heaviest edge.
    We distinguish two main cases and, for the second case, four subcases.
    We describe how to construct a PQ-layout $\Gamma$ such that $\pqn(\Gamma)=1$.
    
    \begin{itemize}
        \item {\bf Case~1: $e_1 = u_1u_2$.} Refer to \cref{fi:k4minusE-1}.
        Assume, without loss of generality (due to symmetry), that $e_2 = u_1v_1$.
        We define the vertex ordering as $v_1 \prec u_2 \prec v_2 \prec u_1$.
        The only edges spanning over other edges are~$e_1$ and~$e_2$.
        Since they are the heaviest edges and end at the rightmost vertex, there is no forbidden configuration with another edge.
        Furthermore, $e_1$ and $e_2$ together do not create a forbidden configuration since they share a common right endpoint.
        \item {\bf Case~2: $e_1 = u_1v_1$.} Note that this is the only remaining case to assign $e_1$ since
        all edges except for $u_1u_2$ are equivalent by the symmetry of~$G$.
        We consider four subcases for the four choices of the edge~$e_2$,
        and, in the last subcase, we distinguish further between the choices for edge~$e_3$.
        \begin{itemize}
            \item {\bf Case~2.1: $e_2 = u_1u_2$.} Refer to \cref{fi:k4minusE-2-1}.
            As in Case~1, set the linear ordering of the vertices to $v_1 \prec u_2 \prec v_2 \prec u_1$.
            Again, the only edges spanning over other edges are~$e_1$ and~$e_2$,
            and $e_1$ and~$e_2$ share a common right endpoint.
            \item {\bf Case~2.2: $e_2 = u_2v_2$.} Refer to \cref{fi:k4minusE-2-2}.
            Set the linear ordering of the vertices to $v_2 \prec u_1 \prec u_2 \prec v_1$.
            Again, the only edges spanning over other edges are~$e_1$ and~$e_2$.
            Now, however, $e_1$ and $e_2$ intersect, but they do not create a forbidden configuration
            because $w(e_1) > w(e_2)$ and $e_1$ ends after $e_2$.
            \item {\bf Case~2.3: $e_2 = u_1v_2$.} Refer to \cref{fi:k4minusE-2-3}.
            Set the linear ordering of the vertices to $v_1 \prec v_2 \prec u_2 \prec u_1$.
            Since the edges $e_1$ and $e_2$ are the heaviest edges,
            end at the rightmost vertex, and share a common right endpoint,
            they are not involved in any forbidden configuration.
            The only remaining pair of co-active edges is $u_2v_1$ and $u_2v_2$.
            Since they share $u_2$ as their common right endpoint,
            they do not induce a forbidden configuration.
            \item {\bf Case~2.4.1: $e_2 = u_2v_1$ and $e_3 = u_1u_2$.} Refer to \cref{fi:k4minusE-2-4-1}.
            Set the linear ordering of the vertices to $u_1 \prec v_2 \prec u_2 \prec v_1$.
            Note that $e_1$ spans over all edges but also has the largest weight,
            so $e_1$ is not involved in a forbidden configuration.
            The only remaining edge that spans over other edges is~$e_3$.
            However, note that both $u_1v_2$ and $u_2v_2$ do not have a greater weight than~$e_3$,
            so there is also no forbidden configuration.
            \item {\bf Case~2.4.2: $e_2 = u_2v_1$ and $e_3 = u_1v_2$.} Refer to \cref{fi:k4minusE-2-4-2}.
            Set the linear ordering of the vertices to $u_1 \prec u_2 \prec v_2 \prec v_1$.
            The two heaviest edges ($e_1$ and~$e_2$) have the rightmost endpoint,
            so they are not involved in a forbidden configuration.
            No edge except for $e_1$ and $e_2$ end to the right of
            the third heaviest edge $e_3$, so $e_3$ is not involved in a forbidden configuration.
            The other two remaining edges are not co-active.
            \item {\bf Case~2.4.3: $e_2 = u_2v_1$ and $e_3 = u_2v_2$.} Refer to \cref{fi:k4minusE-2-4-3}.
            Set the linear ordering of the vertices to $u_2 \prec u_1 \prec v_2 \prec v_1$.
            There are no forbidden configurations by the same argument as in Case~2.4.2;
            the only difference is that~$e_1$ and~$e_2$ are swapped, but both end at the same right endpoint. \qedhere
        \end{itemize}
    \end{itemize}
\end{proof}

\LastEdgeHeavy*
\label{le:1pq-cycle-last-vertex*}

\begin{proof}
    Suppose for a contradiction that there is a PQ-layout~$\Gamma$ of $\langle C, w \rangle$ with $\pqn(\Gamma) = 1$ where the last vertex is incident to two edges that both do not have the maximum weight.
    Let $e = uv$ be an edge of maximum weight with $u \prec v$.
    (If there are multiple edges with maximum weight,
    pick $e$ such that $v$ is rightmost among all right endpoints
    of those heaviest edges.)
    We claim that there is an edge $e' = xy$ that is co-active with $e$
    and $v \prec y$ resulting in a forbidden configuration since $w(e) > w(e')$.
    Let $a$ be the leftmost vertex and $z$ be rightmost vertex of~$\Gamma$.
    There are two vertex-disjoint paths between $a$ and $z$.
    Only one of these paths can contain $v$, while the other needs to bridge $v$~-- this implies that $e'$ exists.
\end{proof}

\ForbiddenMinors*
\label{le:forbidden-minors-are-not-1pq*}

\begin{proof}[Proof of Cases $\mathcal{F}_4$--$\mathcal{F}_8$]
It remains to analyze $\mathcal{F}_4$--$\mathcal{F}_8$.
For an illustration, refer to \cref{fig:minors-copy}.
	
\begin{description}
    \item[$\mathcal{F}_4$]:
    We consider the cases that $G = \mathcal{F}_4$
    and that $G$ is larger graph that has $\mathcal{F}_4$ as a minor both at once.
    With a sequence of vertex splits (potentially after renaming), we can
    (i)~replace the edge~$bc$ by a path (we denote a sequence of additional vertices by $x^*$), or
    (ii)~replace the edge~$ad$ by a path (we denote a sequence of additional vertices by $y^*$).
    In any case, we assign any additional edge a weight of~2.
    
    \begin{figure}
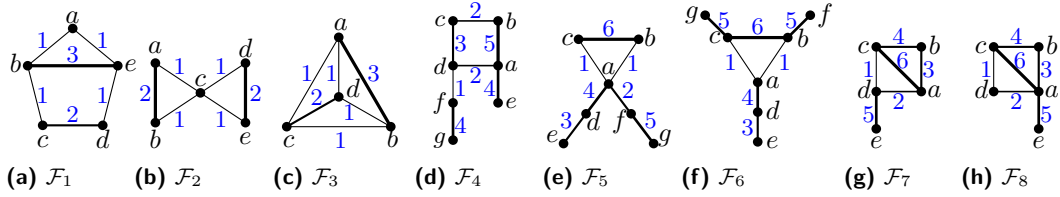

        \centering
        \begin{subfigure}[t]{0.1125\textwidth}
            \centering 
            \includegraphics[page=10]{forbidden-minors-optimized-order}
            \subcaption{$\mathcal{F}_1$}
            \label{fig:minors:1-copy}
        \end{subfigure}
        \hfill
        \begin{subfigure}[t]{0.1225\textwidth}
            \centering 
            \includegraphics[page=11]{forbidden-minors-optimized-order}
            \subcaption{$\mathcal{F}_2$}
            \label{fig:minors:2-copy}
        \end{subfigure}
        \hfill
        \begin{subfigure}[t]{0.125\textwidth}
            \centering 
            \includegraphics[page=12]{forbidden-minors-optimized-order}
            \subcaption{$\mathcal{F}_3$}
            \label{fig:minors:3-copy}
        \end{subfigure}
        \hfill
        \begin{subfigure}[t]{0.1125\textwidth}
            \centering 
            \includegraphics[page=13]{forbidden-minors-optimized-order}
            \subcaption{$\mathcal{F}_4$}
            \label{fig:minors:4-copy}
        \end{subfigure}
        \hfill
        \begin{subfigure}[t]{0.1225\textwidth}
            \centering 
            \includegraphics[page=14]{forbidden-minors-optimized-order}
            \subcaption{$\mathcal{F}_5$}
            \label{fig:minors:5-copy}
        \end{subfigure}
        \hfill
        \begin{subfigure}[t]{0.145\textwidth}
            \centering 
            \includegraphics[page=15]{forbidden-minors-optimized-order}
            \subcaption{$\mathcal{F}_6$}
            \label{fig:minors:6-copy}
        \end{subfigure}
        \hfill
        \begin{subfigure}[t]{0.1025\textwidth}
            \centering 
            \includegraphics[page=16]{forbidden-minors-optimized-order}
            \subcaption{$\mathcal{F}_7$}
            \label{fig:minors:7-copy}
        \end{subfigure}
        \hfill
        \begin{subfigure}[t]{0.1025\textwidth}
            \centering 
            \includegraphics[page=17]{forbidden-minors-optimized-order}
            \subcaption{$\mathcal{F}_8$}
            \label{fig:minors:8-copy}
        \end{subfigure}
        \caption{(Copy of \cref{fig:minors}.) Forbidden minors for graphs with priority queue number~1.
            The graphs with the provided edge weights do not admit a linear layout with priority queue number~1.
            The thickness of an edge indicates the weight of that edge.}
        \label{fig:minors-copy}
    \end{figure}
    
    By \cref{le:1pq-cycle-last-vertex},
    the last vertex of the cycle $abx^*cdy^*$ is $a$ or~$b$.
    If~$a$ is the last vertex, then also $c \prec d$,
    as otherwise $cd$ ends below a lighter edge of weight~2 on the path from~$d$ to~$a$.
    Furthermore, $b \prec d$, as otherwise $cd$ ends below
    a lighter edge of weight~2 on the path between~$b$ and~$d$.
    The symmetric arguments apply when we swap the roles of~$a$ and~$b$
    and simultaneously the roles of~$c$ and~$d$.
    
    \smallskip
    
    This leaves four possible orderings:
    If $a$ is the last vertex, we have
    $b \prec c \prec d \prec a$ or $c \prec b \prec d \prec a$.
    If $b$ is the last vertex, we have
    $a \prec d \prec c \prec b$ or $d \prec a \prec c \prec b$.
    
    \smallskip
    
    First, suppose that $b \prec c \prec d \prec a$.
    No vertex from~$x^*$ on the path between $b$ and $c$
    lies to the right of $d$ since then an edge of weight~2
    spans over the end of the heavier edge~$cd$ of weight~3.
    Observe that $c \prec f \prec a$ because $df$ has the lowest weight
    and there are heavier edges ending at~$a$ and at $c$ and~$x^*$.
    Now $f$ cannot be the right endpoint of $fg$
    because then the edge $fg$ would end below
    the lighter edge~$cd$ or below a lighter edge on the path from~$d$ to~$a$.
    However, $f$ can also not be the left endpoint of~$fg$
    because then, if $f \prec g \prec a$, the edge~$fg$ would end below the lighter edge~$cd$
    or below a lighter edge on the path from~$d$ to~$a$,
    or, if $f \prec a \prec g$, the heavier edge $ab$ would end below~$fg$.
    
    \smallskip
    
    Second, suppose that $c \prec b \prec d \prec a$.
    Then again no vertex from~$x^*$ on the path between $b$ and $c$
    lies to the right of $d$.
    By swapping the roles of $b$ and~$c$,
    we can use the same argument as before to obtain~$b \prec f \prec a$.
    Again, $f$ cannot be the right endpoint of~$fg$
    because then $fg$ would end below
    the lighter edge~$cd$ or below a lighter edge on the path from~$d$ to~$a$,
    and it cannot be the left endpoint of $fg$ by
    the same argument as before.
    
    \smallskip
    
    Third, suppose that $a \prec d \prec c \prec b$.
    We next locate~$f$.
    Since $c$ is the right endpoint of the edge $cd$ with weight~3,
    it follows that $f \prec c$ as the edge $df$ has weight~1.
    Suppose that $a \prec f \prec c$.
    As before, $f$ cannot be the right endpoint of $fg$
    because then the edge $fg$ would end below
    a lighter edge on the path from~$a$ to~$d$ or below the lighter edge~$cd$.
    Also, $f$ cannot be the left endpoint of~$fg$
    because then, if $f \prec g \prec b$, the edge~$fg$ would end below the lighter edge~$cd$
    or below a lighter edge on the path from~$c$ to~$b$,
    or, if $f \prec b \prec g$, the heavier edge $ab$ would end below~$fg$.
    So, suppose that $f \prec a$.
    It follows that $a \prec e$ and $g \prec f$.
    It remains to give~$e$ a precise position.
    If $a \prec e \prec b$, then $ae$ ends
    below one of the lighter edges on the path from~$a$ to~$d$, or below $cd$, or
    below one of the lighter edges on the path from~$c$ to~$d$.
    However, if $b \prec e$, then the heavier edge $ab$ ends below~$ae$.
    
    \smallskip
    
    Finally, suppose that $d \prec a \prec c \prec b$.
    Observe that we can use similar arguments as in the previous case
    to show that $g \prec f \prec d \prec a \prec c \prec b$.
    It remains to place~$e$.
    Clearly, $a \prec e$ as otherwise $ae$ ends below~$cd$.
    Now, however, $ae$ ends below the lighter edge~$cd$ or a lighter edge on the path from~$c$ to~$b$,
    or the heavier edge $ab$ ends below $ae$;
    a contradiction.
    
    \todo[inline]{Do missing cases and increase index of every graph by 1! (If it gets too complicated, don't handle two cases at once as it is done currently)}
    
	\item[$\mathcal{F}_5$, $\mathcal{F}_6$]:
	We first consider the case $G\in \{\mathcal{F}_5,\mathcal{F}_6\}$. By \cref{le:1pq-cycle-last-vertex} and symmetry, we may assume that $a,b \prec c$.		
	Since each edge \emph{not} in $\triangle abc$ has a weight strictly between $w(ab) = w(ac)$ and $w(bc)$, no such edge can start weakly left of $\min\{a,b\}$ and end strictly right of $\min\{a,b\}$.
	In particular, it follows that $d \prec a$ and, since also $w(ed) < w(ad)$, that $e \prec d$.
	In summary, we have $e \prec d \prec a \prec c$ and $b \prec c$.
	
	Now for $\mathcal{F}_5$ it follows that the right endpoint of edge $fg$ must be left of $e$.
	But then edge $ed$ nests below $af$; a contradiction, as $w(ed) > w(af)$.
	
	For $\mathcal{F}_6$, first note that if $a \prec b$, then either edge $bf$ ends below $ac$ (if $f \prec c$) or edge $bc$ ends below $bf$ (if $c \prec f$); either is not allowed.
	So we have $b \prec a$, and hence $f \prec a$, as argued in the beginning.
	But now either edge $bf$ ends below $ad$ (if $d \prec b$) or edge $ed$ ends below $ab$ (if $b \prec d$); either is not allowed.
	
	Consider now the case where $G$ contains an $\mathcal{F}_5$ minor, i.e., $G$ contains a subgraph isomorphic to a subdivision of $\mathcal{F}_5$.  That is, $G$ consists of a cycle $C$ that contains a vertex $a$ and a non-adjacent edge $bc$, a simple path $P_a$  from $a$ to a vertex $a'$ that contains no edge of $C$ and two paths $P_{a'e}=a'de$ and $P_{a'g}=a'fg$ distinct from $C$ and $P_a$. We construct the weight function $w$ similar as before, namely, we set $w(bc)=6$, otherwise $w(e)=0$ for each edge $e$ on $C$, $w(e_a)=1$ for each edge $e_a$ on $P_a$, and $w(a'd)=4$, $w(de)=3$, $w(a'f)=2$, $w(fg)=5$. Once more, by \cref{le:1pq-cycle-last-vertex} and symmetry, we may assume that $a,b \prec c$ as $bc$ is the heaviest edge on $C$.		
	
	We next claim that $a' \prec a$. First observe that $'a \prec c$ as otherwise edge $bc$ ends at $c$ but there is at least one edge $e_a$ of the path $P_a$ being co-active with $bc$ that ends after $c$ which is a contradiction to the fact that $w(e_a) < w(bc)$. Assume now for a contradiction that $a \prec a' \prec c$ and consider the simple path $P_{ac}$ between $a$ and $c$ along $C$ whose edges all have weight $0$. Since $P_a$ and $P_{ac}$ both start at $a$ and $a' prec c$ there must be an edge $e_a$ of $P_a$ and an edge $e_{ac}$ of $P_{ac}$ that are co-active and such that $e_a$ ends before $e_ac$. This is however a contradiction to the fact that $w(e_{ac}) < w(e_a)$.
	
	Hence, we can assume from now on that $a' \prec a$. As for the case of $\mathcal{F}_5$, we can now observe that all of edges $a'd$, $de$, $a'f$ and $fg$ have higher weights then any of the already drawn edges except for $bc$. Thus, none of these edges can start weakly left of $\min\{a',b\}$ and end strictly right of $\min\{a',b\}$. Hence, $e,d,f,g \prec a'$ and $e \prec d$. As for $\mathcal{F}_5$, the right endpoint of $fg$ must be left of $e$ which leads to $ed$ nesting $a'f$; a contradiction.

	Finally, consider the case where $G$ contains an $\mathcal{F}_6$ minor, i.e., it contains a subdivision of $\mathcal{F}_6$ as a minor. In this scenario, we observe that one of the following two scenarios happens: If we only subdivide any of the edges $ad$, $de$, $bf$ or $cg$, we obtain a graph that again contains $\mathcal{F}_6$ as a subgraph. Otherwise, we subdivide the triangle $\triangle abc$. However, subdividing the triangle $\triangle abc$ once at a single edge provides a supergraph of $\mathcal{F}_4$. Thus, any subdivision leads either to an $\mathcal{F}_6$ subgraph or to an $\mathcal{F}_4$ minor which concludes the case.
	
	\item[$\mathcal{F}_7$, $\mathcal{F}_8$]: We first consider the case where $G \in \{\mathcal{F}_7, \mathcal{F}_8\}$.
	By \cref{le:1pq-cycle-last-vertex},
	$b$ or $c$ is the last vertex of the 4-cycle $abcd$
	and $a$ or $c$ is the last vertex of $\triangle abc$.
	Hence, $a, b, d \prec c$.
	Moreover, $a \prec d$ since otherwise, $ad$ would end below the lighter edge $cd$.
	We know that $e \prec c$ because otherwise $ac$
	would end below the lighter edge $de$ ($ae$, respectively).
	
	For $\mathcal{F}_7$, further $e \prec d$ because
	if $d \prec e \prec c$, then $de$ would end below the lighter edge~$cd$.
	It remains to place $b$ (we already know $b \prec c$).
	It cannot be $b \prec d$ because then
	$de$ would end below the lighter edge $bc$.
	Hence, we have $a \prec d \prec b \prec c$.
	Now, however, $ab$ ends below the lighter edge $cd$; a contradiction.
	
	For $\mathcal{F}_8$, a similar argument applies.
	We have $e \prec a$ because
	if $a \prec e \prec c$, then $ae$ would end below
	one of the lighter edges~$ad$ or~$cd$.
	If $b \prec a \prec d \prec c$,
	then $ae$ would end below the lighter edge $bc$.
	If $a \prec b \prec d \prec c$,
	then $ab$ would end below the lighter edge $ad$.
	Hence, we have $a \prec d \prec b \prec c$.
	Now, however, $ab$ ends below the lighter edge $cd$; a contradiction.
	
	Finally, consider the case where $G$ has a proper $\mathcal{F}_7$ or $\mathcal{F}_8$ minor, i.e., it contains a subgraph isomorphic to subdivision of $\mathcal{F}_7$ or $\mathcal{F}_8$, respectively. If the subdivision is obtained only by subdividing edge $de$ or $ae$, the obtained graph contains $\mathcal{F}_7$ or $\mathcal{F}_8$, respectively. Otherwise, if any of the edges $ab$, $bc$, $cd$ or $da$ are subdivided, the graph contains a subdivision of $\mathcal{F}_1$. 
	
	Hence, we may assume that the only subdivided edge is $ac$. Moreover, we can assume that $ac$ is subdivided exactly once as otherwise we again yield a graph that contains a $\mathcal{F}_1$ minor. Let $f$ denote the subdivision vertex of $ac$ and let $w(af)$ and $w(cf)=2$.

	By \cref{le:1pq-cycle-last-vertex},
	$b$ or $c$ is the last vertex of the $4$-cycle $abcd$
	and $a$ or $f$ is the last vertex of the $4$-cycle $abcf$. It follows, that $a,b,c,d \prec f$.
	Next, we observe that $b \prec c$ as otherwise the edges $cb$ and $cf$ start at $c$, i.e., they are co-active, but $bc$ ends before $cf$ despite $w(bc) > w(cf)$. Hence, $c$ is the last vertex of the $4$-cycle $abcd$ and as previously we obtain $a, b, d \prec c$. 	Moreover, as before $a \prec d$ since otherwise, $ad$ would end below the lighter edge co-active edge $cd$.

	Up to this point, we established that $a \prec d \prec c \prec f$. We now place vertex $b$. If $a \prec b \prec c$, edge $ab$ would be co-active with and end before either $ad$ or $dc$ despite the fact that $w(ab) > w(ad) > w(dc)$. On the other hand, if $c \prec b \prec f$, edges $bc$ and $cf$ are co-active and the heavier edge $bc$ ends before the lighter edge $cf$. Finally, if $f \prec b$, the heavier edge $af$ ends before the lighter co-active edge $bc$. Hence, we conclude that $b \prec a \prec d \prec c \prec f$.
	
	Now consider the placement of $e$. If $e \prec c$, the edge $de$ or $ae$ (in the subdivision of $\mathcal{F}_7$ or $\mathcal{F}_8$, respectively) is co-active with $bc$ but ends before it even though it is heaver than $bc$. If $c \prec e \prec f$, edge $de$ or $ae$  is co-active with the lighter edge $cf$ but ends before it. Finally, if $f \prec e$, edge $de$ or $ae$ is co-active with the heavier edge $af$ but ends after it. Thus, there is no valid placement for $e$ and $pqn(G)>1$.
	\qedhere
\end{description}	
\end{proof}

\pqOne*
\label{thm:pq1*}

\begin{proof}
    Let $G = (V,E)$ be any fixed graph.
    Our task is to show that either $\pqn(G) = 1$, or $G$ contains some $\mathcal{F} \in \{\mathcal{F}_1,\ldots,\mathcal{F}_8\}$ as a minor.
    For this, we may assume that $G$ is connected.
    If $G$ is a tree, i.e., contains no cycles, then $\pqn(G) = 1$ by \cref{lem:tree} and we are done.
    
    Next assume that $G$ contains exactly one cycle $C$.
    Then $G - E(C)$ is a forest $F$, and we consider each component of $F$ as rooted at the vertex of $C$.
    For a component $K$ of $F = G - E(C)$ we say that $K$ is
    \begin{itemize}
        \item \emph{trivial} if $K$ is just an isolated vertex,
        \item a \emph{rooted star} if $K$ is a star whose central vertex lies on $C$,
        \item a \emph{rooted caterpillar} if $K$ is a caterpillar with an underlying path with one endpoint on $C$,
        \item \emph{complex} if none of the above applies for $K$.
    \end{itemize}
    First assume that some component $K$ of $F$ is complex.
    Let $w$ be the vertex in $K$ that lies on $C$.
    If $K$ is not a caterpillar, then $K$ contains a vertex~$v$ (possibly $w$) with at least three neighbors of degree at least~$2$ each.
    Contracting $C$ to a triangle and the $v$-$w$-path in $K$ to a single vertex results in a graph that contains $\mathcal{F}_5$.
    Hence, $G$ contains $\mathcal{F}_5$ as a minor.
    So assume that $K$ is a caterpillar but no underlying path of $K$ starts with $w$.
    Let $P$ be the path in $K$ on all non-leaf vertices of $K$.
    Then $w$ is neither an endpoint of $P$ nor adjacent to an endpoint of $P$.
    It follows that either $w$ is an interior vertex of $P$ or $w$ is adjacent to an interior vertex $w'$ of $P$.
    Contracting $C$ again to a triangle, and contracting the edge $ww'$ in the latter case, we again obtain a graph that contains $\mathcal{F}_5$.
    Hence again, $G$ contains $\mathcal{F}_5$ as a minor and we are left with the case that no component of $F$ is complex.

    Now if every component of $F$ is trivial or a rooted star, then $\pqn(G) = 1$ by \cref{lem:cycle}.
    So assume some component $K$ of $F$ is a rooted caterpillar.
    If all other components (different from~$K$) are trivial, then $\pqn(G)=1$ by \cref{lem:one-caterpillar}.
    If $F$ has besides $K$ two further non-trivial components $K',K'' \neq K$, then $G$ contains $\mathcal{F}_6$ as a minor.
    So assume now that there is exactly one further non-trivial component~$K' \neq K$.
    If the roots of $K$ and $K'$ have distance at least $3$ in one direction along $C$, then $G$ contains $\mathcal{F}_4$ as a minor.
    So, if none of the above is the case, then either $C$ is a triangle and $\pqn(G)=1$ by \cref{lem:two-caterpillars}, or $C$ is a quadrangle and the roots of $K$ and $K'$ are opposite on $C$, and $\pqn(G) = 1$ by \cref{lem:two-caterpillars}.
    
    The case where~$G$ contains at least two cycles remains.
    First assume that $G$ contains two edge-disjoint cycles $C_1,C_2$.
    If $C_1,C_2$ have at most one vertex in common,
    then $G$ contains $\mathcal{F}_2$ as a minor.
    If $C_1,C_2$ have at least two vertices in common, we claim that $G$ contains $\mathcal{F}_8$ as a minor.
    One of $C_1,C_2$, say $C_2$, has length at least~$4$.
    Let $a$ and $c$ be two common vertices of $C_1,C_2$ (consider \cref{fig:minors:7} for the naming scheme).
    Then $C_1$ can be contracted to the triangle $abc$,
    while $C_2$ and can be contracted to the 4-cycle $adce$.
    Omitting edge~$ce$ yields~$\mathcal{F}_8$.
    
    Finally, consider the case that any two cycles in $G$ share at least one edge.
    In fact, either $G$ contains $\mathcal{F}_3 = K_4$ as a minor (and we are done) or there are two vertices $x,y$ in $G$ that are both contained in all cycles.
    Since there are no edge-disjoint cycles, there are at most three edge-disjoint $x$-$y$-paths in $G$.
    As $G$ has two cycles, there are exactly three such $x$--$y$-paths, say $P_1,P_2,P_3$.
    If $G$ is not just $P_1\cup P_2\cup P_3$, then $G$ contains $\mathcal{F}_7$ or $\mathcal{F}_8$ as a minor.
    If at least one of $P_1,P_2,P_3$ has at least three edges, then $G$ contains $\mathcal{F}_1$ as a minor.
    So, if none of the above is the case, then either $G = K_{2,3}$ or $G = K_4-e$ and we have $\pqn(G) = 1$ by \cref{lem:k23,lem:k4minusE}.
\end{proof}

\section{Omitted Content of \cref{sec:results-kpq}}
\label{app:pathwidth-treewidth}
Here, we provide a formal definition of treewidth and pathwidth.
%
Given an undirected graph $G = (V, E)$, a \emph{tree decomposition}
is a tree $T$ with nodes $X_1, \dots, X_k$ such that:
\begin{itemize}
	\item for each $i \in \{1, \dots, k\}$, $X_i \subseteq V$,
	\item $\bigcup_{i \in \{1, \dots, k\}} X_i = V$,
	\item if $X_i$ and $X_j$ both contain a vertex $v \in V$,
	then all nodes of $T$ that lie on the (unique) path
	between $X_i$ and $X_j$ contain $v$ as well, and
	\item for every $uv \in E$, there is some $i \in \{1, \dots, k\}$ such that $\{u, v\} \subseteq X_i$.
\end{itemize}
The \emph{width} of a tree decomposition is $\max\{|X_i| - 1\ : i \in \{1, \dots, k\}\}$, i.e., it is the size of its largest set $X_i$ minus one.
The \emph{treewidth} of $G$ is the minimum width among all possible tree decompositions of~$G$.
The \emph{pathwidth} is defined similarly.
The only difference is that $T$ needs to be a path,
which is hence known as \emph{path decomposition}.

Now, we present the full proof of \cref{thm:treewidth-2}.

\TreewidthTwo*
\label{thm:treewidth-2*}
\begin{proof}
    Let $p \geq 1$ be a fixed integer.
    Our desired graph $G_p$ will be a \emph{rooted $2$-tree}, i.e., an edge-maximal graph of treewidth~$2$ with designated root edge, which are defined inductively through the following construction sequence:
    \begin{itemize}
        \item A single edge $e^* = v'v''$ is a $2$-tree rooted at $e^*$.
        \item If $G$ is a $2$-tree rooted at $e^*$ and $e = uv$ is any edge of $G$, then adding a new vertex $t$ with neighbors $u$ and $v$ is again a $2$-tree rooted at $e^*$.
            In this case we say that $t$ is \emph{stacked onto the edge $uv$}, and vertices $u$ and $v$ are called the \emph{parents} of $t$.
    \end{itemize} 
    
    We define a vertex ordering $\sigma$ of a $2$-tree rooted at $e^* = v'v''$ to be \emph{left-growing} if no vertex (different from $v'$, $v''$) appears to the right of both its parents.
    
    \LeftGrowing*
    \label{claim:wlog-left-growing*}
    
    \begin{claimproof}
        Given $\langle G,w \rangle$, let us fix some $\varepsilon > 0$ smaller than the minimum difference between any two distinct edge-weights in $\langle G,w \rangle$, i.e., $0 < \varepsilon < \min\{ |w(e_1)-w(e_2)| \colon e_1,e_2 \in E(G), w(e_1) \neq w(e_2)\}$.
        Let us call a weight function $w' \colon E(G) \to \mathbb{R}$ to be \emph{$\varepsilon$-close} to $w$ if $|w'(e) - w(e)| \leq \varepsilon/2$ for every edge $e \in E(G)$.
        Clearly, for any fixed vertex ordering, a $p$-inversion for edge weights~$w$ is also a $p$-inversion for $w'$, if $w'$ is $\varepsilon$-close to $w$.
        
        Now consider the given $\langle G,w \rangle$ and a construction sequence of $G$ starting with the root edge $e^*$.
        We shall construct $2$-tree $\tilde{G}$ such that for each edge $e$ in $G$ there is a corresponding set of edges in $\tilde{G}$, which we call \emph{copies} of $e$.
        It starts with exactly $1 = p^0 = p^{2 \cdot 0}$ copies of the root edge $e^*$.
        Then, whenever a vertex $t$ is stacked onto an edge $e = uv$ in the construction sequence of $G$, for each existing copy $\tilde{e} = \tilde{u}\tilde{v}$ of $e$ we stack $p^2$ vertices $t_1,\ldots,t_{p^2}$ on $\tilde{e}$ in $\tilde{G}$.
        As edge-weights we choose
        \[
            \tilde{w}(\tilde{u}t_i) = w(ut) + i \cdot \frac{\varepsilon}{2p^2} \qquad \text{and} \qquad \tilde{w}(\tilde{v}t_i) = w(vt) - i \cdot \frac{\varepsilon}{2p^2} \qquad \text{for $i=1,\ldots,p^2$.}
        \]
        Each such edge $\tilde{u}t_i$ is a copy of $ut$, while each such $\tilde{v}t_i$ is a copy of $vt$.
        Thus, if there are $p^{2j}$ copies of $e$ in $\tilde{G}$, then there are $p^{2(j+1)}$ copies of $ut$ and $p^{2(j+1)}$ copies of $vt$ in $\tilde{G}$.
        This completes the construction of $\langle \tilde{G},\tilde{w} \rangle$.
        
        Now fix any vertex ordering $\sigma$ of $\tilde{G}$.
        Consider a vertex $t$ that is stacked onto an edge $e = uv$ of $G$, and one copy $\tilde{e}  = \tilde{u}\tilde{v}$ of $e$ in $\tilde{G}$.
        If all $p^2$ corresponding vertices $t_1,\ldots,t_{p^2}$ in $\tilde{G}$ lie right of both vertices of $\tilde{e}$ in $\sigma$, then by the Erd\H{o}s--Szekeres theorem there is a subset of at least $p$ such vertices with all decreasing or all increasing indices in the left-to-right order in $\sigma$.
        In the former case the corresponding copies of $ut$ form a $p$-inversion, in the latter case the corresponding copies of $vt$ form a $p$-inversion.
        
        Thus we may assume that for any copy of any edge $e$ of $G$ and any vertex $t$ stacked on $e$, at least one corresponding vertex $t_i$ does not appear to the right of both of its parents in $\tilde{G}$.
        But then there is a subgraph in $\tilde{G}$ isomorphic to $G$ for which $\sigma$ is left-growing and whose restriction $w'$ of $\tilde{w}$ is $\varepsilon$-close to $w$.
        By assumption, this subgraph contains a $p$-inversion.
    \end{claimproof}

    Now we construct a sequence $\langle H_1,w_1 \rangle, \langle H_2,w_2 \rangle, \ldots$ of edge-weighted rooted $2$-trees, so that for every $k$, every left-growing vertex ordering of $\langle H_k,w_k \rangle$ contains a $k$-inversion.

    \begin{claim}\label{claim:inductive-problem}
        For each $k \geq 1$, any given edge $e^*$ of any given weight $w^*$, and any non-empty half-open interval $[x,y) \subset \mathbb{R}$, there exists a $2$-tree $H_k$ rooted at $e^*$ and an edge weighting $w_k$ of $H_k$ such that 
        \begin{itemize}
         \item $w_k(e^*) = w^*$ while $w_k(e) \in [x,y)$ for each $e \in E(H_k) - e^*$,
        \end{itemize}
        and in every left-growing vertex ordering $\sigma$ of $\langle H_k,w_k \rangle$ there is
        \begin{itemize}
            \item a $k$-inversion $(e_1,\ldots,e_k)$ without $e^*$, but whose right end is the right endpoint of $e^*$.
        \end{itemize}
    \end{claim}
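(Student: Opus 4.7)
I proceed by induction on $k$. For the base case $k = 1$, I take $H_1$ to be the triangle obtained by stacking a single vertex $z$ on $e^* = v'v''$, and assigning any weights in $[x, y)$ to the two new edges $v'z$ and $v''z$. Left-growing forces $z \prec \max(v', v'')$ in any ordering $\sigma$, so after writing $v' \prec v''$ WLOG, the single edge $v''z$ forms a 1-inversion that does not use $e^*$ and whose (only) right endpoint is the right endpoint of $e^*$.

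For the inductive step $k \geq 2$, I build $H_k$ as outlined in the theorem's sketch: stack $d := k^2$ vertices $a_1, \ldots, a_d$ on $e^*$; for each $i$, stack $b'_i$ on $v'a_i$ and $b''_i$ on $v''a_i$; finally, attach two copies $C'_i, C''_i$ of $H_{k-1}$ on $b'_ia_i$ and $b''_ia_i$ using inductively supplied, pairwise disjoint weight sub-intervals $I_1 < I_2 < \cdots < I_{2d}$ of $(x,y)$. All outer edges receive weight $x$. Denote by $\iota(C)$ the inductive $(k-1)$-inversion in copy $C$, let $\mu(C)$ denote its right end (which equals the right endpoint of the root of $C$), and let $u(C)$ denote its left end. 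Fix any left-growing $\sigma$; WLOG $v' \prec v''$, and a short induction using left-growing shows that $v''$ is the unique rightmost vertex. Hence the edge $e_1$ of the target $k$-inversion must end at $v''$, be different from $e^*$, and therefore has weight $x$ and left endpoint in $\bigcup_i \{a_i, b''_i\}$; the remaining edges must have weights $>x$, and so come from within the copies.

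I distinguish two cases. In \emph{Case~A}, some copy $C$ has a root $r$ adjacent to $v''$ (so $r = a_i$ if $C = C'_i$, or $r \in \{a_i, b''_i\}$ if $C = C''_i$) satisfying $r \prec u(C)$; then $e_1 := v''r$ combined with the $k-1$ edges of $\iota(C)$ yields the required $k$-inversion. In \emph{Case~B}, Case~A fails for every copy, which forces $u(C'_i) \preceq a_i$ and $u(C''_i) \preceq \min(a_i, b''_i)$ for every $i$. I then apply the Erdős--Szekeres theorem to the sequence $(\mu(C_\ell))_\ell$ of root right endpoints indexed in weight-interval order: since $2d \geq (k-1)^2 + 1$, there is a strictly decreasing subsequence of length $k$, say $\mu(C_{i_1}) \succ \cdots \succ \mu(C_{i_k})$ with $i_1 < \cdots < i_k$. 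I propose $e_2, \ldots, e_k$ to be one chosen edge from each $\iota(C_{i_s})$ for $s = 1, \ldots, k-1$, with right endpoint $\mu(C_{i_s})$, and $e_1 := v''r$ where $r$ is the root of $C_{i_k}$ adjacent to $v''$ that is different from $\mu(C_{i_k})$, so that the left endpoint of $e_1$ is $\preceq \mu(C_{i_k}) \prec \mu(C_{i_{k-1}}) = v_k$.

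The main obstacle is to verify in Case~B that the left endpoints of $e_2, \ldots, e_k$ also lie strictly to the left of $v_k = \mu(C_{i_{k-1}})$. The inductive hypothesis only guarantees that the left endpoints of edges of $\iota(C_{i_s})$ lie to the left of $u(C_{i_s})$, which may a priori be to the right of $v_k$ for $s < k-1$. I expect that resolving this will require a more careful selection of which edge of each $\iota(C_{i_s})$ to extract (for instance, choosing the edge whose right endpoint is $u_{k-s}(C_{i_s})$ so that the right endpoints fan outward monotonically together with the weights), combined with the Case~A failure bound $u(C) \preceq \min(\text{roots of } C)$ and possibly a strengthening of the inductive statement that additionally pins down the position of $u(C)$ relative to a distinguished vertex of $C$.
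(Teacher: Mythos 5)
Your base case and your Case~A match the paper's argument, and your overall skeleton (stacked vertices $a_i$, then $b'_i,b''_i$, then recursive copies of $H_{k-1}$ on disjoint weight sub-intervals) is the right one, but the inductive step has two genuine gaps. First, the Erd\H{o}s--Szekeres application in Case~B is incorrect: from a sequence of length $2d$ the theorem only guarantees a long \emph{increasing or} decreasing subsequence, not a decreasing one, and an increasing subsequence (positions of the $\mu(C_\ell)$ growing together with the weight intervals) yields no inversion. An adversarial left-growing ordering can place the roots so that the $\mu(C_\ell)$ increase with $\ell$, and then your Case~B produces nothing. The paper's construction is built precisely to absorb this: it attaches to \emph{each} root edge $b'_ia_i$ (and $b''_ia_i$) \emph{two} copies of $H_{k-1}$, one with interval $[i,i+1)$ and one with the reversed assignment $[d+1-i,d+2-i)$, so that whichever monotone direction Erd\H{o}s--Szekeres returns for the positions of the selected $a_i$'s, one of the two families has weights increasing right-to-left; it also takes $d=2k^2$ rather than $k^2$ to survive a preliminary pigeonhole on whether $b'_i\prec a_i$ or $a_i\prec b'_i$.

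Second, the left-endpoint problem you flag at the end is a real gap, and the paper's resolution is the missing idea. After pigeonhole and Erd\H{o}s--Szekeres it has a set $I_2$ of at least $k$ indices with $b'_i\prec a_i$ and the $a_i$ ($i\in I_2$) monotone, and it sets $\hat\imath=\max(I_2)$ so that $a_{\hat\imath}$ is the \emph{leftmost} of the selected anchors. Then either some selected copy's $(k-1)$-inversion has its left end to the right of $a_{\hat\imath}$, in which case prepending the weight-$0$ edge $a_{\hat\imath}v'$ to that \emph{entire} inversion already gives the $k$-inversion; or \emph{every} selected copy's inversion has its left end, and hence all its left endpoints, to the left of the single common anchor $a_{\hat\imath}$, which in turn lies to the left of every right endpoint $a_i$, $i\in I_2\setminus\{\hat\imath\}$, used in the assembled inversion. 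Your condition $u(C)\preceq$ roots of $C$ only bounds each copy's left endpoints by that copy's \emph{own} roots, and since the roots of distinct copies can interleave arbitrarily along the spine, it cannot serve as such a common anchor; strengthening the induction, as you suggest, is not needed once the argument is anchored at $a_{\hat\imath}$.
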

    
    \begin{claimproof}
        We proceed by induction on $k$.
        For $k = 1$, it is enough to let $H_k$ be just the given root edge $e^*$ (with $w_k(e^*) = w^*$) and one vertex $t$ stacked onto $e^*$ with both edge-weights in $[x,y)$.
        In any left-growing vertex ordering, the edge between $t$ and the right endpoint of $e^*$ is the desired $1$-inversion.
        
        So for the remainder let us fix $k \geq 2$, and assume that the claim holds for $k-1$.
        Our task is to construct $\langle H_k,w_k \rangle$ for the given root edge $e^* = v'v''$, given weight $w^*$, and given interval $[x,y)$.
        By additive and multiplicative shifts of the weight function, we can assume for convenience that $[x,y) = [0,d+1)$ for $d = 2k^2$.
        Starting with $e^* = v'v''$ of weight $w_k(e^*) = w^*$, we construct $\langle H_k,w_k \rangle$ as follows.
        See \cref{fig:2-trees} for an illustration.
        \begin{itemize}
            \item Stack $d$ vertices $a_1,\ldots,a_d$ onto the root edge $e^*$, giving weight~$0$ to all $2d$ new edges.
            
            \item For each $i = 1,\ldots,d$
                \begin{itemize}
                    \item stack a vertex $b'_i$ onto edge $a_iv'$ giving weight~$0$ to both new edges,
                    
                    \item let $a_ib'_i$ be the root edge of $\langle H_{k-1},w_{k-1} \rangle$ with interval $[i,i+1)$, as well as the root of $\langle H_{k-1},w_{k-1} \rangle$ with interval $[d+1-i,d+1-i+1)$,
                    
                    \item stack a vertex $b''_i$ onto edge $a_iv''$ giving weight~$0$ to both new edges, and
                    
                    \item let $a_ib''_i$ be the root edge of $\langle H_{k-1},w_{k-1} \rangle$ with interval $[i,i+1)$, as well as the root of $\langle H_{k-1},w_{k-1} \rangle$ with interval $[d+1-i,d+1-i+1)$.
                \end{itemize}
        \end{itemize}
        

        Now consider any left-growing vertex ordering $\sigma$ of $H_k$.
        Since the construction is symmetric, we may assume without loss of generality that $v'$ is the right endpoint of the root edge $e^*$ in~$\sigma$.
        Hence, each of $a_1,\ldots,a_d,b'_1,\ldots,b'_d$ lies left of $v'$ since $\sigma$ is left-growing and $v'$ is a parent of all these vertices.
        By the pigeon-hole principle, there is a set $I_1 \subseteq [d]$ of at least $d/2 = k^2$ indices such that either $a_i \prec b'_i$ for all $i\in I_1$ or $b'_i \prec a_i$ for all $i \in I_1$.
        By symmetry\footnote{We are not using in the remainder that $a_iv''$ is an edge, while $b'_iv''$ is not.}, let us assume that $b'_i \prec a_i$ for all $i\in I_1$.
        
        By the Erd\H{o}s--Szekeres theorem, there is a subset $I_2 \subseteq I_1$ of at least $\sqrt{k^2} = k$ indices such that either $a_i \prec a_j$ for all $i,j \in I_2$ with $i < j$ or $a_j \prec a_i$ for all $i,j \in I_2$ with $i < j$.
        By symmetry, let us assume that $a_j \prec a_i$ for all $i,j \in I_2$ with $i < j$.\footnote{We use the symmetry of the two copies of $H_{k-1}$ with root edge $b'_ia_i$ here. In the assumed case, we focus on the copies of $H_{k-1}$ with intervals $[i,i+1)$. In the other case, we would consider the copies of $H_{k-1}$ with intervals $[d+1-i,d+1-i+1)$. We only need the interval weights to be increasing right-to-left.}
        Note that for the largest index $\hat{i} = \max(I_2)$ all vertices $a_i$ with $i \in I_2 - \hat{i}$ lie below the edge $a_{\hat{i}}v'$.
        
        Now consider for each $i \in I_2 - \hat{i}$ the copy of $2$-tree $H_{k-1}$ with root edge $b'_ia_i$ and interval $[i,i+1)$.
        By induction, for each $i \in I_2 - \hat{i}$, there is a $(k-1)$-inversion whose right end is $a_i$ and all of whose edge-weights lie in the interval $[i,i+1)$.
        
        Now, if for at least one such $i$, the left end of the corresponding $(k-1)$-inversion $(e_1,\ldots,e_{k-1})$ lies right of $a_{\hat{i}}$, then $(a_{\hat{i}}v',e_1,\ldots,e_{k-1})$ is the desired $k$-inversion, and we are done.
        Otherwise, each of the first edges from the $k-1$ corresponding $(k-1)$-inversions has its left endpoint to the left of $a_{\hat{i}}$.
        But then, taking edge $a_{\hat{i}}v'$ and these $k-1$ first edges in order of increasing weights again gives a desired $k$-inversion.
    \end{claimproof}

    To finish the proof of the theorem, it is now enough to take $\langle H_p,w_p \rangle$ from \cref{claim:inductive-problem} with any interval $[x,y)$, and then apply \cref{claim:wlog-left-growing} to obtain the desired edge-weighted $2$-tree $\langle G_p,w_p \rangle$ with $\pqn(G_p,w_p) \geq p$.
\end{proof}

\section{Omitted Proofs of \cref{sec:complexity}}

It remains to prove \cref{claim:pair-same-queues},
which is part of the proof of \cref{thm:npc-fixed-order}.


\PairSameQueues*
\label{claim:pair-same-queues*}
	
\begin{claimproof}
	Refer to \cref{fig:nph-fixed-order} for an illustration of our construction.
	We first show inductively that the edges on the left side
	are assigned to the same priority queues as their
	(adjacent) synchronization edges,
	and then we show the same for the edges on the right side.
	
	Let $a_1$ be the edge with the leftmost
	left endpoint corresponding to a vertex $a \in S$.
	The $k-1$ heavy edges below $a_1$ require $k-1$
	priority queues, none of which can be used by $a_1$
	or its synchronization edge since both form a forbidden
	pattern with each of the heavy edges.
	Therefore, $a_1$ and its synchronization edge
	are assigned to the only remaining priority queue.
	
	Let $b_1$ be the edge with the second leftmost
	left endpoint corresponding to a vertex $b \in S$.
	The $k-2$ heavy edges below $b_1$ require $k-2$
	priority queues, none of which can be used by $b_1$
	or its synchronization edge.
	Moreover, $a_1$ and its synchronization edge
	occupy another priority queue.
	Hence, only one priority queue is left
	for $b_1$ and its synchronization edge.
	This argument repeats for all subsequent vertices of~$S$.
	
	Now consider the edges on the right side belonging
	to vertices of~$S$.
	Here, the symmetric argument applies from right to left.
	Hence, the edges $a_2, b_2, \dots$ are assigned to
	the same priority queue as their synchronization edges,
	which in turn implies that each pair of edges corresponding to
	a vertex from~$S$ is assigned to the same priority queue.
\end{claimproof}

\end{document}